\DeclarePairedDelimiter\ceil{\lceil}{\rceil}
\DeclarePairedDelimiter\floor{\lfloor}{\rfloor}
\def\BState{\State\hskip-\ALG@thistlm}
\DeclareMathOperator*{\defeq}{\triangleq}
\newtheorem{theorem}{Theorem}
\newtheorem{lemma}{Lemma}
\newtheorem{definition}{Definition}
\newcommand{\bit}{\begin{itemize}}
\newcommand{\eit}{\end{itemize}}
\newcommand{\bc}{\begin{center}}
\newcommand{\ec}{\end{center}}
\newcommand{\ba}{\begin{array}}
\newcommand{\ea}{\end{array}}
\newcommand{\beq}{\begin{equation}}
\newcommand{\eeq}{\end{equation}}
\newcommand{\beqn}{\begin{equation*}}
\newcommand{\eeqn}{\end{equation*}}
\newcommand{\bean}{\begin{eqnarray*}}
\newcommand{\eean}{\end{eqnarray*}}
\newcommand{\bea}{\begin{eqnarray}}
\newcommand{\eea}{\end{eqnarray}}
\def\wv{\boldsymbol{w}}
\def\yv{\boldsymbol{y}}
\newcommand{\Bc}{{\mathcal B}}
\newcommand{\Fc}{{\mathcal F}}
\newcommand{\Hc}{{\mathcal H}}
\newcommand{\Ic}{{\mathcal I}}
\newcommand{\Lc}{{\mathcal L}}
\newcommand{\Rc}{{\mathcal R}}
\newcommand{\Sc}{{\mathcal S}}
\algnewcommand{\IfThenElse}[3]{  \State \algorithmicif\ #1\ \algorithmicthen\ #2\ \algorithmicelse\ #3}
\newcommand{\Me}{\wv}
\newcommand{\Ry}{\mathrm{s}}
\newcommand{\Vr}{\mathrm{v}}
\newcommand{\Ss}{\mathbb{S}}
\newcommand{\true}{\mathrm{true}}       
\newcommand{\false}{\mathrm{false}}
\newcommand{\rtuple}{)} 
\newcommand{\ltuple}{(}
\newcommand{\inset}{\in}
\newcommand{\AND}{\land}    
\newcommand{\OR}{\lor}
\newcommand{\eqlog}{=} 
\newcommand{\nodeindexstar}{i^{\star}}      
\newcommand{\MVBAInputMsg}{\wv}        
\newcommand{\Predicate}{\mathrm{Predicate}}
\newcommand{\EncodedSymbol}{y}
\newcommand{\vectorcommitment}{C}                    
\newcommand{\proofpositionvc}{\omega}                    
\newcommand{\VCOpen}{\mathrm{VcOpen}}                  
\newcommand{\VCCom}{\mathrm{VcCom}}  
\newcommand{\VCVerify}{\mathrm{VcVerify}}  
\newcommand{\SHARE}{\text{``}\mathrm{SHARE}\text{''}}                   
\newcommand{\ShareRecord}{\Sc_\mathrm{shares}}               
\newcommand{\VOTE}{\text{``}\mathrm{VOTE}\text{''}}
\newcommand{\CONFIRM}{\text{``}\mathrm{CONFIRM}\text{''}}
\newcommand{\electionround}{r} 
\newcommand{\Election}{\mathrm{Election}}               
\newcommand{\electionoutput}{l}
\newcommand{\ABBA}{\mathrm{ABBA}}               
\newcommand{\ECHOSHARE}{\text{``}\mathrm{ECHOSHARE}\text{''}}                        
\newcommand{\CodedSymbols}{\mathbb{Y}_\mathrm{Symbols}}                                                                                                           
\newcommand{\ACD}{\mathrm{ACID}}
\newcommand{\thisnodeindex}{i}                       
\newcommand{\READY}{\text{``}\mathrm{READY}\text{''}}                           
\newcommand{\send}{\textbf{send}}                   
\newcommand{\ReadyRecord}{\Rc_\mathrm{ready}}                                                                                                           
\newcommand{\IDMVBA}{\mathrm{ID}}    
\newcommand{\IDMVBAtilde}{\mathrm{id}}                                                                  
\newcommand{\jstar}{j^{\star}}                       
\newcommand{\HashRecord}{\Hc_\mathrm{hash}}                                                                                                           
\newcommand{\LockRecord}{\Lc_\mathrm{lock}}                                                                                                           
\newcommand{\FinishRecord}{\Fc_\mathrm{finish}}                                                                                                           
\newcommand{\LOCK}{\text{``}\mathrm{LOCK}\text{''}}                       
\newcommand{\ELECTION}{\text{``}\mathrm{ELECTION}\text{''}}                        
\newcommand{\FINISH}{\text{``}\mathrm{FINISH}\text{''}}                        
\newcommand{\ABBAoutput}{a}   
\newcommand{\ABBBA}{\mathrm{ABBBA}}       
\newcommand{\ABAoutput}{b}    
\newcommand{\Output}{\textbf{output}}                      
\newcommand{\terminate}{\textbf{terminate}}                        
\newcommand{\DataRetrieval}{\mathrm{DR}}        
\newcommand{\share}{\mathrm{share}}        
\newcommand{\lockindicator}{\mathrm{lock}\_\mathrm{indicator}}         
\newcommand{\abbainput}{a}     
\newcommand{\abbainputA}{a_1}    
\newcommand{\abbainputB}{a_2}      
\newcommand{\ABBAVALUE}{\text{``}\mathrm{ABBA}\text{''}}                        
\newcommand{\ABBACountA}{\mathrm{cnt}_1}    
\newcommand{\ABBACountB}{\mathrm{cnt}_2}      
\newcommand{\ABBACountC}{\mathrm{cnt}_3}      
\newcommand{\OciorRMVBA}{\mathrm{OciorMVBA}}     
\newcommand{\RecursiveMVBA}{\mathrm{RMVBA}}        
\newcommand{\groupindex}{p}     
\newcommand{\MVBA}{\mathrm{MVBA}} 
\newcommand{\wait}{\textbf{wait}}                             
\newcommand{\ECC}{\mathrm{ECC}}  
\newcommand{\EC}{\mathrm{EC}}  
\newcommand{\ECEnc}{\mathrm{ECEnc}}   
\newcommand{\ECDec}{\mathrm{ECDec}}    
\newcommand{\ECCEnc}{\mathrm{ECCEnc}}   
\newcommand{\ECCDec}{\mathrm{ECCDec}} 
\newcommand{\Alphabet}{\Bc}      
\newcommand{\CCbits}{f_\mathrm{TB}}        
\newcommand{\MVBAOutputMsg}{\hat{\wv}}           
\newcommand{\IEMVBA}{\mathrm{IneMVBA}}
\newcommand{\setindex}{\theta}   
\newcommand{\setindexnew}{j}   
\newcommand{\DRBCReadyindicator}{I_\mathrm{ready}}  
\newcommand{\DRBCFinishindicator}{I_\mathrm{finish}}   
\newcommand{\DRBCConfirmindicator}{I_\mathrm{confirm}}   
\newcommand{\COOLDRBC}{\text{DRBC-COOL}}  
\newcommand{\networksize}{\tilde{n}}   
\newcommand{\networkfaultsize}{\tilde{t}}   
\newcommand{\networksizestar}{n^{\star}}   
\newcommand{\networkfaultsizestar}{t^{\star}}    
\newcommand{\INITIAL}{\text{``}\mathrm{INITIAL}\text{''}}
\newcommand{\OECsymbolset}{\mathbb{Z}_{\mathrm{oec}}}   
\newcommand{\Lkset}{\mathbb{U}}   
\newcommand{\SYMBOL}{\text{``}\mathrm{SYMBOL}\text{''}}                        
\newcommand{\SIone}{\text{``}\mathrm{SI1}\text{''}}                         
\newcommand{\SItwo}{\text{``}\mathrm{SI2}\text{''}}
\newcommand{\defaultvalue}{\bot}      
\newcommand{\ECCEncindicator}{I_\mathrm{ecc}}
\newcommand{\Phtwoindicator}{I_\mathrm{2}}        
\newcommand{\Phthreeindicator}{I_\mathrm{3}}
\newcommand{\CORRECTSYMBOL}{\text{``}\mathrm{CORRECT}\text{''}}                         
\newcommand{\OECCorrectSymbolSet}{\mathbb{Y}_{\mathrm{oec}}}    
\newcommand{\kstar}{k^{\star}}         
\newcommand{\ktilde}{\tilde{k}}           
\newcommand{\OEC}{\mathrm{OEC}}
\newcommand{\Pass}{\textbf{pass}}
\newcommand{\COOL}{\mathrm{COOL}}
\newcommand{\DM}{\mathrm{DM}}
\newcommand{\networksizen}{n}
\newcommand{\successindicator}{\mathrm{s}}
\newcommand{\IDCOOL}{\mathrm{ID}}                                                                       
\newcommand{\OciorCOOL}{\text{OciorCOOL}}
\newcommand{\BA}{\mathrm{BA}}
\newcommand{\OciorRBC}{\text{OciorRBC}}
\newcommand{\ACDRRIT}{\ACD\text{rr}}                               
\newcommand{\ABA}{\mathrm{ABA}}
\newcommand{\DataRetrievalRelaxedResilienceIT}{\text{DRrr}}                   
\newcommand{\AMBARelaxedResilienceIT}{\ABA\text{rr}}                  
\newcommand{\Byzantineuniqueagreement}{\text{Unique  agreement}}            
\newcommand{\BUA}{\mathrm{UA}}
\newcommand{\RBA}{\mathrm{RBA}}
\newcommand{\HMDM}{\mathrm{HMDM}}
\newcommand{\RBC}{\mathrm{RBC}}
\newcommand{\OECsymbol}{z}                                                                            
\newcommand{\OECsymbolsetInitial}{\mathbb{Z}_{\mathrm{oec}}}     
\newcommand{\SIPhtwo}{I_\mathrm{SI2}}     
\newcommand{\OECSIFinal}{I_\mathrm{oecfinal}}                                                                              
\newcommand{\VrOutput}{\mathrm{v}_o} 
\newcommand{\OciorRBA}{\text{OciorRBA}}            
\newcommand{\RMVBA}{\mathrm{RMVBA}}              
\newcommand{\deliver}{\textbf{deliver}}                     
\newcommand{\SHMDM}{\mathrm{SHMDM}}
\newcommand{\alphabetsize}{q}            
\newcommand{\byzantineuniqueagreement}{\text{unique  agreement}}
\newcommand{\Node}{P}
 \newcommand{\networkfaultsizereal}{t}     
 \newcommand{\Round}{r}         
 \newcommand{\OciorMVBAHash}{\mathrm{OciorMVBAh}}
\newcommand{\OciorMVBARelaxedResilienceIT}{\mathrm{OciorMVBArr}}    
\newcommand{\ACDh}{\mathrm{ACIDh}}                    
\newcommand{\DataRetrievalh}{\mathrm{DRh}}
\begin{document}
\sloppy
\title{OciorMVBA: Near-Optimal Error-Free   Asynchronous MVBA}

\author{Jinyuan Chen  
}

\maketitle
\pagestyle{headings}

\begin{abstract}

In this work, we propose an error-free, information-theoretically secure, asynchronous multi-valued validated Byzantine agreement ($\MVBA$) protocol, called $\OciorRMVBA$. This protocol achieves $\MVBA$ consensus on a message $\MVBAInputMsg$ with expected $O(n |\MVBAInputMsg|\log n + n^2 \log \alphabetsize)$ communication bits,  expected $O(n^2)$ messages,   expected  $O(\log n)$ rounds, and   expected $O(\log n)$  common coins, under optimal resilience $n \geq 3t + 1$ in an $n$-node network, where up to $t$ nodes may be dishonest. Here, $\alphabetsize$ denotes the alphabet size of the error correction code used in the protocol. When error correction codes with a constant alphabet size (e.g., Expander Codes) are used, $\alphabetsize$ becomes a constant. An $\MVBA$ protocol that guarantees all required properties without relying on any cryptographic assumptions, such as signatures or hashing, except for the common coin assumption, is said to be \emph{information-theoretically secure (IT secure)}. Under the common coin assumption, an $\MVBA$ protocol that guarantees all required properties in \emph{all} executions is said to be \emph{error-free}.

We also propose another error-free, IT-secure, asynchronous $\MVBA$ protocol, called $\OciorMVBARelaxedResilienceIT$. This protocol achieves $\MVBA$ consensus with  expected $O(n |\MVBAInputMsg| + n^2 \log n)$ communication bits,   expected   $O(1)$ rounds, and   expected $O(1)$ common coins, under a relaxed resilience (RR) of $n \geq 5t + 1$. 
Additionally, we propose a hash-based asynchronous $\MVBA$ protocol, called $\OciorMVBAHash$. This protocol achieves $\MVBA$ consensus with  expected $O(n |\MVBAInputMsg| + n^3)$ bits,   expected   $O(1)$ rounds, and   expected $O(1)$ common coins, under optimal resilience $n \geq 3t + 1$.

\end{abstract}


\section{Introduction}

 Multi-valued validated  Byzantine agreement ($\MVBA$),   introduced by Cachin et al. in 2001 \cite{CKPS:01}, is one of the key building blocks  for  distributed systems and cryptography. 
 In  $\MVBA$,  distributed nodes proposes their input values and seek to agree on one of the proposed values, ensuring that the agreed value satisfies a predefined  $\Predicate$ function  (referred to as \emph{External Validity}).   
 $\MVBA$ is a variant of Byzantine agreement ($\BA$), which was  proposed by Pease, Shostak and Lamport in 1980 \cite{PSL:80}. 
In  $\BA$, if all honest nodes input the same  value $\wv$, it is required that every honest node eventually outputs $\wv$ (referred to as \emph{Validity}).
 As one can see,  $\MVBA$'s  External Validity is different from $\BA$'s Validity.

In this work, we  focus on the design of  \emph{asynchronous} $\MVBA$ protocols. The seminal work by Fischer, Lynch, and Paterson   \cite{FLP:85} reveals that no deterministic $\MVBA$ protocol can exist in the asynchronous setting. Therefore, any asynchronous $\MVBA$ protocol must incorporate randomness. A common approach to designing such a protocol is to create a deterministic algorithm supplemented by common coins, which provide the necessary randomness.
  
Additionally, we primarily    focus on   the design of error-free, information-theoretically secure (IT secure), asynchronous $\MVBA$ protocols. 
 An $\MVBA$ protocol that guarantees all required properties without relying on any cryptographic assumptions, such as signatures or hashing, except for the common coin assumption, is said to be \emph{IT secure}. Under the common coin assumption, an $\MVBA$ protocol that guarantees all required properties in \emph{all} executions is said to be \emph{error-free}.

 Specifically, we propose an error-free, IT secure, asynchronous $\MVBA$ protocol, called $\OciorRMVBA$. This protocol achieves $\MVBA$ consensus on a message $\MVBAInputMsg$ with expected $O(n |\MVBAInputMsg|\log n + n^2 \log \alphabetsize)$ communication bits,  expected $O(n^2)$ messages,   expected  $O(\log n)$ rounds, and   expected $O(\log n)$  common coins, under optimal resilience $n \geq 3t + 1$ in an $n$-node network, where up to $t$ nodes may be dishonest. Here, $\alphabetsize$ denotes the alphabet size of the error correction code used in the protocol. When error correction codes with a constant alphabet size (e.g., Expander Codes \cite{SS:96}) are used, $\alphabetsize$ becomes a constant.  
 
 The design of $\OciorRMVBA$  in this $\MVBA$ setting builds on the protocols   $\COOL$ and $\OciorCOOL$, originally designed for the $\BA$ setting  \cite{Chen:2020arxiv, ChenDISC:21,ChenOciorCOOL:24}.
 Specifically,    $\COOL$ and $\OciorCOOL$ introduced two primitives:   $\byzantineuniqueagreement$ ($\BUA$)   and honest-majority distributed multicast ($\HMDM$) \cite{Chen:2020arxiv, ChenDISC:21,ChenOciorCOOL:24}.   
 $\COOL$ and $\OciorCOOL$ achieve the deterministic, error-free, IT secure, synchronous $\BA$ consensus with   $O(n |\MVBAInputMsg|  + n t \log \alphabetsize)$ communication bits and    $O(t)$ rounds,  under optimal resilience $n \geq 3t + 1$.  When error correction codes with a constant alphabet size (e.g., Expander Code \cite{SS:96}) are used, $\alphabetsize$ becomes a constant, and consequently,  $\COOL$ and $\OciorCOOL$  are optimal.

$\bullet$ $\BUA$:    In  $\BUA$, distributed nodes input their values and  seek to decide on an  output of the form  $(\MVBAInputMsg, \successindicator, \Vr)$, where $\successindicator \in \{0,1\}$ and $\Vr \in \{0,1\}$ denote  a success indicator and  a vote, respectively.    $\BUA$  requires  that all honest nodes  eventually output the same value  $\MVBAInputMsg$ or a default value (\emph{Unique Agreement}).  Furthermore,   if any honest node votes $\Vr=1$, then at least $t+1$ honest nodes eventually output  $(\wv, 1, *)$ for the same $\wv$ (\emph{Majority Unique Agreement}).

 $\bullet$ $\HMDM$:  In   $\HMDM$, there are at least $t+1$ honest nodes acting as senders, multicasting a message to $n$ nodes. 
  $\HMDM$ requires  that   if every  honest sender inputs the same message $\wv$,   then every honest node eventually outputs $\wv$.       
 
  Our proposed  $\OciorRMVBA$ is a recursive protocol (see Fig.~\ref{fig:OciorRMVBA}) that consists of  the algorithms of strongly-honest-majority distributed multicast ($\SHMDM$), reliable Byzantine agreement ($\RBA$), asynchronous  biased  binary Byzantine agreement ($\ABBBA$),  and  asynchronous  binary  BA   ($\ABBA$).

$\bullet$ $\RBA$:    Our $\RBA$ algorithm, called  $\OciorRBA$,  is built from $\BUA$ and $\HMDM$.   It is worth noting that the   Unique Agreement and    Majority Unique Agreement  properties of  $\BUA$  provide an ideal condition for $\HMDM$.  
 
  $\bullet$ $\SHMDM$:      $\SHMDM$ is slightly different from $\HMDM$, as    all honest nodes  act as senders in $\SHMDM$.  
  
  $\bullet$ $\ABBBA$:     This new primitive is introduced here and used as a building block in our protocols. In $\ABBBA$,  each  honest node inputs a pair of binary numbers $(\abbainputA, \abbainputB)$, for some $\abbainputA, \abbainputB \in \{0,1\}$. One property is that if $t+1$ honest nodes input the second number as $\abbainputB=1$, then any honest node that terminates outputs $1$ (Biased Validity). Another property is that  
  if an honest node outputs $1$, then at least one honest node inputs $\abbainputA=1$ or $\abbainputB=1$  (Biased Integrity).

 In this work, we also propose another error-free, IT-secure, asynchronous $\MVBA$ protocol, called $\OciorMVBARelaxedResilienceIT$. This protocol achieves $\MVBA$ consensus with  expected $O(n |\MVBAInputMsg| + n^2 \log n)$ communication bits,   expected   $O(1)$ rounds, and   expected $O(1)$ common coins, under a relaxed resilience (RR) of $n \geq 5t + 1$. 
Additionally, we propose a hash-based asynchronous $\MVBA$ protocol, called $\OciorMVBAHash$. This protocol achieves $\MVBA$ consensus with  expected $O(n |\MVBAInputMsg| + n^3)$ bits,   expected   $O(1)$ rounds, and   expected $O(1)$ common coins, under optimal resilience $n \geq 3t + 1$.

The proposed $\OciorRMVBA$ protocol is described in Algorithms~\ref{algm:OciorRMVBA}-\ref{algm:OciorRBA} and Section~\ref{sec:OciorRMVBA}. 
The proposed $\OciorMVBARelaxedResilienceIT$ protocol is described in Algorithms~\ref{algm:OciorMVBARelaxedResilienceIT}-\ref{algm:DataRetrievalRelaxedResilienceIT}  and  Section~\ref{sec:OciorMVBARelaxedResilienceIT}.  
The proposed $\OciorMVBAHash$ protocol is described in Algorithms~\ref{algm:OciorMVBAh}-\ref{algm:DataRetrievalh}  and  Section~\ref{sec:OciorMVBAHash}. 
 Table~\ref{tb:MVBA}   provides a comparison between the proposed protocols and some other $\MVBA$ protocols. 
In the following subsection, we provide some definitions and primitives  used in our protocols.

{\renewcommand{\arraystretch}{1.3}
\begin{table}
\footnotesize  
\begin{center}
\caption{Comparison between the proposed protocols and some other $\MVBA$ protocols.  Here $\alphabetsize$ denotes the alphabet size of  the error correction code used in the proposed  protocols. 
When error correction codes with a constant alphabet size (e.g., Expander Code \cite{SS:96}) are used, $\alphabetsize$ becomes a constant.   $\kappa$ is a security parameter. 
} \label{tb:MVBA}

\begin{tabular}{||c||c|c|c|c|c|}
\hline
Protocols & Resilience &   Communication     &  $\#$ Coin  &       Rounds     &   Cryptographic     Assumption    \\ 
  &  &  (Total Bits)  &     &               &        (Expect for     Common Coin)       \\ 
\hline
Cachin et al. \cite{CKPS:01}  &  $t<\frac{n}{3}$  &    $O(n^2|\MVBAInputMsg|+ \kappa n^2+n^3)$    &  $O(1)$    & $O(1)$   &  Threshold Sig   \\
\hline
Abraham et al.    \cite{AMS:19} &  $t<\frac{n}{3}$   &   $O(n^2|\MVBAInputMsg|+ \kappa n^2)$      &  $O(1)$     &  $O(1)$  &  Threshold Sig     \\
\hline
Dumbo-MVBA \cite{LLTW:20} &  $t<\frac{n}{3}$   &    $O(n|\MVBAInputMsg|+ \kappa n^2)$       &   $O(1)$     &  $O(1)$   &  Threshold Sig        \\ 
\hline
\hline
Duan et al.\cite{DWZ:23} &  $t<\frac{n}{3}$ &    $O(n^2|\MVBAInputMsg|+ \kappa n^3)$        & $O(1)$     &  $O(1)$  &  Hash       \\ 
\hline
Feng et al.\cite{FLMT:24} &  $t<\frac{n}{5}$ &    $O(n|\MVBAInputMsg|+ \kappa n^2\log n)$       & $O(1)$    &  $O(1)$  &  Hash      \\ 
\hline
Komatovic et al.  \cite{KNR:24}  &  $t<\frac{n}{4}$ &    $O(n|\MVBAInputMsg|+ \kappa n^2\log n)$        & $O(1)$    &  $O(1)$  &  Hash      \\ 
\hline
 {\color{blue} Proposed $\OciorMVBAHash$} &   {\color{blue} $t<\frac{n}{3}$}  &   {\color{blue} $O(n|\MVBAInputMsg|+ \kappa n^3)$  }     &   {\color{blue} $O(1)$}    &   {\color{blue} $O(1)$}    &    {\color{blue} Hash}          \\
\hline
\hline
Duan et al.\cite{DWZ:23} &  {\color{blue}  $t<\frac{n}{3}$}  &    $O(n^2|\MVBAInputMsg|+  n^3\log n)$   &   $O(1)$    &  $O(1)$   &  Non      \\ 
\hline
 {\color{blue}Proposed  $\OciorMVBARelaxedResilienceIT$ } &   {\color{blue} $t<\frac{n}{5}$}  &   {\color{blue} $O(n |\MVBAInputMsg|+  n^2\log n)$  }  &  {\color{blue} $O(1)$}     &   {\color{blue} $O(1)$}     &    {\color{blue} Non}           \\
\hline
 {\color{blue} Proposed $\OciorRMVBA$ } &   {\color{blue} $t<\frac{n}{3}$}  &   {\color{blue} $O(n |\MVBAInputMsg|\log n+  n^2\log \alphabetsize   )$  }   &   {\color{blue} $O(\log n)$}     &   {\color{blue} $O(\log n)$}     &    {\color{blue} Non}            \\
\hline
\end{tabular}
\end{center}
\end{table}
}

\subsection{Primitives}

 \noindent  {\bf Asynchronous network.}  We consider a network of $n$ distributed nodes, where  up to $t$ of the  nodes may be  dishonest.   Every pair of nodes is connected via a reliable and private communication channel.  
The network is considered to be   \emph{asynchronous}, i.e., the adversary can arbitrarily delay any message, but the messages sent between honest nodes will eventually arrive at their destinations.

 \noindent  {\bf Adaptive adversary.}   We consider an adaptive adversary, i.e., the adversary can corrupt any node at any time during the course of protocol execution, but at most $t$ nodes in total can be controlled by the adversary.

    \noindent  {\bf Information-theoretic protocol.} A protocol that guarantees all  required properties without relying on any   cryptographic assumptions, such as  signatures or hashing,  except for  the common coin assumption, is said to be \emph{information-theoretically secure}.    The proposed protocols $\OciorRMVBA$ and $\OciorMVBARelaxedResilienceIT$    are IT secure.

\noindent  {\bf Signature-free protocol.}   Under the common coin assumption, a protocol that guarantees all  required properties without relying on signature-based   cryptographic assumptions is said to be \emph{signature-free}.    All of the the proposed protocols   are signature-free.

\noindent  {\bf Error-free protocol.}    Under the common coin assumption, a protocol that  that guarantees all of the required properties in \emph{all} executions is said to be \emph{error-free}.    The proposed protocols $\OciorRMVBA$ and $\OciorMVBARelaxedResilienceIT$  are error-free.

\begin{definition} [{\bf Multi-valued validated Byzantine agreement ($\MVBA$)}]     \label{def:OciorMVBA}  
In the $\MVBA$ problem,  there is an external $\Predicate$ function $\{0,1\}^{*}\to \{\true, \false\}$ known to all nodes. In this problem, each honest node proposes its input value, ensuring that it satisfies the $\Predicate$ function to be true.    
The $\MVBA$ protocol guarantees  the following properties: 
 \begin{itemize}
\item  {\bf Agreement:} If any two honest nodes output $\wv'$ and $\wv''$, respectively, then  $\wv'=\wv''$.  
\item  {\bf Termination:} Every honest node eventually outputs a value and terminates.     
\item  {\bf External validity:} If an honest node outputs a value $\MVBAInputMsg$, then $\Predicate(\MVBAInputMsg)=\true$.        
\end{itemize} 
\end{definition}

\begin{definition}  [{\bf Byzantine agreement ($\BA$)}]
In the $\BA$  protocol, the distributed nodes   seek to reach agreement on a common value. 
The  $\BA$ protocol guarantees  the following properties: 
\begin{itemize}
\item  {\bf Termination:} If all  honest nodes receive their inputs, then every honest node  eventually outputs a value and terminates. 
\item  {\bf Consistency:} If any honest node output a value $\wv$, then every honest node eventually outputs $\wv$.
\item  {\bf Validity:}     If all honest nodes input the same  value $\wv$, then every honest node eventually outputs $\wv$.  
\end{itemize} 
\end{definition}

\begin{definition} [{\bf Reliable broadcast ($\RBC$)}]
 In a reliable broadcast protocol, a leader  inputs a value  and broadcasts it to distributed nodes,   satisfying the following conditions:
\begin{itemize}
\item  {\bf Consistency:} If any two honest nodes output $\wv'$ and $\wv''$, respectively, then  $\wv'=\wv''$.
\item   {\bf Validity:} If the leader is  honest and inputs a value $\wv$, then every honest node eventually outputs $\wv$. 
\item  {\bf Totality:}  If one  honest node outputs a value, then every honest node  eventually outputs a value.          
\end{itemize} 
\end{definition}

\begin{definition} [{\bf Reliable Byzantine agreement ($\RBA$)}]
$\RBA$ is a  variant of $\RBC$  problem and is a  relaxed version of $\BA$ problem. 
The  $\RBA$ protocol guarantees  the following properties: 
\item  {\bf Consistency:} If any two honest nodes output $\wv'$ and $\wv''$, respectively, then  $\wv'=\wv''$.
\item   {\bf Validity:} If all honest node input the same  value $\wv$, then every honest node eventually outputs $\wv$.   
\item  {\bf Totality:}  If one  honest node outputs a value, then every honest node  eventually outputs a value.  
\end{definition}

\begin{definition} [{\bf Distributed multicast}] \label{def:DM}
 In the problem of distributed multicast ($\DM$), there exits a subset of nodes acting as  senders multicasting the message over $n$ nodes, where up to $t$ nodes could be dishonest. Each node acting as an sender has an input   message.  A protocol is called as a $\DM$ protocol if  the following property is guaranteed:  
\begin{itemize}
\item   {\bf Validity:} If all  honest senders input the same message $\wv$,   every honest node eventually outputs $\wv$.       
\end{itemize} 
{\bf Honest-majority distributed multicast} ($\HMDM$, \cite{Chen:2020arxiv, ChenDISC:21,ChenOciorCOOL:24}): A $\DM$ problem is called as honest-majority $\DM$ if  at least   $t+1$ senders are honest.  $\HMDM$ was used previously as a building block for $\COOL$ and $\OciorCOOL$ protocols   \cite{Chen:2020arxiv, ChenDISC:21,ChenOciorCOOL:24}. \\
{\bf Strongly-honest-majority distributed multicast ($\SHMDM$):} A $\DM$ problem is called as strongly-honest-majority $\DM$ if  all honest nodes are acting as senders.  
\end{definition}

\begin{definition} [{\bf  $\Byzantineuniqueagreement$} ($\BUA$, \cite{Chen:2020arxiv, ChenDISC:21,ChenOciorCOOL:24})]  \label{def:BUA}     
$\BUA$   is a variant of  Byzantine agreement  problem operated over $n$ nodes, where up to $t$ nodes may be dishonest.  
In a $\BUA$ protocol, each node inputs an initial value and  seeks to make an  output taking the form as $(\MVBAInputMsg, \successindicator, \Vr)$, where $\successindicator \in \{0,1\}$ is a    success indicator and $\Vr \in \{0,1\}$ is a vote. 
The  $\BUA$ protocol guarantees  the following properties: 
\begin{itemize}
\item  {\bf Unique Agreement:} If any two honest nodes output $(\wv', 1, *)$ and $(\wv'', 1, *)$, respectively, then  $\wv'=\wv''$.
\item  {\bf Majority Unique Agreement:} If any honest node outputs $(*, *, 1)$, then at least $t+1$ honest nodes eventually output  $(\wv, 1, *)$ for the same $\wv$. 
\item   {\bf Validity:} If all honest nodes input the same  value $\wv$,  then  all honest nodes eventually  output $(\wv, 1, 1)$. 
\end{itemize} 
 $\BUA$ was used previously as a building block for $\COOL$ and $\OciorCOOL$ protocols   \cite{Chen:2020arxiv, ChenDISC:21,ChenOciorCOOL:24}. 
\end{definition}

\noindent  {\bf Asynchronous  complete  information dispersal ($\ACD$).}   We introduce a new primitive $\ACD$. The goal of an $\ACD$ protocol is to disperse information over distributed nodes. Once a leader completes the dispersal of its proposed message, it is guaranteed that each honest node could  retrieve the delivered message correctly from distributed nodes via a data retrieval scheme.  
Two $\ACD$ definitions  are provided below: one for an $\ACD$ instance dispersing a message proposed by a leader, and the other one for a whole $\ACD$ protocol of running  $n$ parallel  $\ACD$ instances.

\begin{definition} [$\ACD$ instance]
In an $\ACD[\ltuple \IDMVBA, i \rtuple]$ protocol with an identity $\ltuple \IDMVBA, i \rtuple$, a  message  is proposed by $\Node_i$ (i.e., the  leader in this case) and  is dispersed over $n$ distributed nodes, for $i\in [1:n]$. 
An $\ACD[\ltuple \IDMVBA, i \rtuple]$ protocol is complemented by a data retrieval protocol $\DataRetrieval[\ltuple \IDMVBA, i \rtuple]$  in which each node retrieves the   message proposed by $\Node_i$ from $n$ distributed nodes. 
The $\ACD[\ltuple \IDMVBA, i \rtuple]$ and $\DataRetrieval[\ltuple \IDMVBA, i \rtuple]$ protocols guarantee  the following properties:    
\begin{itemize}
\item  {\bf Completeness:}  If $\Node_i$ is honest, then $\Node_i$ eventually completes the dispersal $\ltuple \IDMVBA, i \rtuple$. 
\item  {\bf Availability:} If $\Node_i$ completes the dispersal for $\ltuple \IDMVBA, i \rtuple$,  and all honest nodes start the data retrieval  protocol  for $\ltuple \IDMVBA, i \rtuple$, then each node eventually reconstructs some message.
\item   {\bf Consistency:} If two honest nodes reconstruct messages $\MVBAInputMsg'$ and $\MVBAInputMsg''$ respectively for $\ltuple \IDMVBA, i \rtuple$, then  $\MVBAInputMsg'=\MVBAInputMsg''$. 
\item   {\bf Validity:} If an honest $\Node_i$ has proposed a message $\MVBAInputMsg$ for $\ltuple \IDMVBA, i \rtuple$ and an honest node reconstructs a message $\MVBAInputMsg'$  for $\ltuple \IDMVBA, i \rtuple$, then  $\MVBAInputMsg'=\MVBAInputMsg$. 
\end{itemize} 
\end{definition}

\begin{definition} [Parallel $\ACD$ instances]
An $\ACD[ \IDMVBA  ]$ protocol is a protocol involves running  $n$ parallel  $\ACD$ instances, $\{\ACD[\ltuple \IDMVBA, i \rtuple]\}_{i=1}^{n}$, over $n$ distributed nodes, where up to $t$ of the nodes  may be dishonest. 
For an $\ACD[ \IDMVBA  ]$ protocol, the following conditions must  be satisfied: 
\begin{itemize}
\item  {\bf Termination:} Every honest node eventually terminates.     
\item  {\bf Integrity:} If one honest node terminates, then there exists a set $\Ic^{\star}$ such that the following conditions hold: 1) $\Ic^{\star}\subseteq [1:n]\setminus \Fc$, where $\Fc$ denotes the set of indexes of all dishonest nodes; 2)  $|\Ic^{\star}| \geq n-2t$; and  3) for any $i\in \Ic^{\star}$,  $\Node_i$ has completed the dispersal $\ACD[\ltuple \IDMVBA, i \rtuple]$.      
\end{itemize}  
\end{definition}

\begin{definition} [\bf Asynchronous  biased  binary Byzantine agreement ($\ABBBA$)]    
We introduce a new primitive called as $\ABBBA$.   In an $\ABBBA$ protocol, each  honest node inputs a pair of binary numbers $(\abbainputA, \abbainputB)$, for some $\abbainputA, \abbainputB \in \{0,1\}$. The  honest nodes seek to reach an agreement on a common value $\abbainput \in \{0,1\}$.  
An $\ABBBA$ protocol should satisfy the following properties:
\begin{itemize}
\item   {\bf Conditional termination:} Under an input condition---i.e.,  if one honest node inputs its second number as $\abbainputB =1$ then at least $t+1$ honest nodes  input  their first numbers as $\abbainputA =1$---then every honest node eventually outputs a value and terminates.     
\item   {\bf Biased validity:} If $t+1$ honest nodes input the second number as $\abbainputB=1$, then any honest node that terminates outputs $1$.       
\item   {\bf Biased integrity:} If an honest node outputs $1$, then at least one honest node inputs $\abbainputA=1$ or $\abbainputB=1$.             
\end{itemize} 
\end{definition}

 \begin{definition} [{\bf Common coin}]
The seminal work by Fischer, Lynch, and Paterson in \cite{FLP:85} reveals that no deterministic $\MVBA$ protocol can exist in the asynchronous setting. Therefore, any asynchronous $\MVBA$ protocol must incorporate randomness. A common approach to designing such a protocol is to create a deterministic algorithm supplemented by common coins, which provide the necessary randomness.  
Here, we assume the existence of a common coin protocol $\electionoutput \gets \Election[\IDMVBAtilde]$ associated with an identity $\IDMVBAtilde$, which guarantees the following properties: 
 \begin{itemize}
\item   {\bf Termination:} If $t+1$ honest nodes activate $\Election[\IDMVBAtilde]$, then each honest node that activates it will output a common value $\electionoutput$.     
\item  {\bf Consistency:} If any two honest nodes output $\electionoutput'$ and $\electionoutput''$ from $\Election[\IDMVBAtilde]$, respectively, then  $\electionoutput'=\electionoutput''$.
\item   {\bf Uniform:} The output $\electionoutput$ from $\Election[\IDMVBAtilde]$ is randomly generated based on a uniform distribution for $\electionoutput \in [1:n]$.
\item   {\bf Unpredictability:} The adversary cannot correctly predict the output of $\Election[\IDMVBAtilde]$ unless at least one honest node has activated it.
\end{itemize} 
When analyzing the performance of $\MVBA$ protocols, we exclude the cost of the common coin protocol.
\end{definition}

 \noindent  {\bf Error correction code ($\ECC$).}   An $(n, k)$ error correction coding  scheme  consists of an encoding scheme $\ECCEnc: \Alphabet^{k} \to  \Alphabet^{n}$ and a decoding scheme $\ECCDec: \Alphabet^{n'} \to  \Alphabet^{k}$, where $\Alphabet$ denotes the alphabet of each symbol and $\alphabetsize\defeq|\Alphabet|$  denotes the size of $\Alphabet$, for some $n'$. 
 While $[\EncodedSymbol_1,  \EncodedSymbol_2, \cdots, \EncodedSymbol_{n}] \gets \ECCEnc (n,  k, \MVBAInputMsg)$ outputs $n$ encoded symbols, 
$\EncodedSymbol_{j} \gets \ECCEnc_{j}(n,  k, \MVBAInputMsg)$ outputs the $j$th encoded symbol.  

 Reed-Solomon (RS) codes (cf.~\cite{RS:60}) are widely used   error correction codes.  An $(n, k)$ RS  error correction code can correct up to  $t$ Byzantine errors and simultaneously detect up to $e$ Byzantine errors in $n'$ symbol observations, given the conditions of  $2t+ e +k \leq   n'$  and  $n' \leq  n$.  
 The $(n, k)$ RS code is operated over Galois Field $GF(\alphabetsize)$ under the constraint  $n \leq \alphabetsize  -1$ (cf.~\cite{RS:60}). 
RS codes can be constructed using \emph{Lagrange polynomial interpolation}.  The resulting code is a type of RS code with a minimum distance   $d=n-k+1$, which is optimal according to the Singleton bound. 
Berlekamp-Welch algorithm and Euclid's algorithm are two efficient decoding algorithms for RS codes \cite{roth:06, Berlekamp:68, RS:60}. 

Although RS is a popular error correction code, it has a constraint on the size of the alphabet, namely $n \leq \alphabetsize - 1$. To overcome this limitation, other error correction codes with a constant alphabet size, such as Expander Codes \cite{SS:96}, can be used.

\noindent   {\bf Erasure code  ($\EC$).}    An $(n, k)$ erasure coding  scheme  consists of an encoding scheme $\ECEnc: \Alphabet^{k} \to  \Alphabet^{n}$ and a decoding scheme $\ECDec: \Alphabet^{k} \to  \Alphabet^{k}$, where $\Alphabet$ denotes the alphabet of each symbol and $\alphabetsize\defeq|\Alphabet|$  denotes the size of $\Alphabet$. With an $(n, k)$ erasure code, the original message can be decoded from any $k$ encoded symbols. Specifically, given  $[\EncodedSymbol_1,  \EncodedSymbol_2, \cdots, \EncodedSymbol_{n}] \gets \ECEnc(n, k, \MVBAInputMsg)$, then $\ECDec(n,k, \{\EncodedSymbol_{j_1}, \EncodedSymbol_{j_2}, \cdots  \EncodedSymbol_{j_k}\}) =\MVBAInputMsg$ holds true for any $k$ distinct integers $j_1, j_2, \cdots, j_k \in [1:n]$.

 \noindent  {\bf Online error  correction  ($\OEC$).}     
Online error  correction is a variant of traditional error correction \cite{BCG:93}.   An $(n, k)$ error correction code can correct up to  $t'$ Byzantine errors in $n'$ symbol observations, provided the conditions of  $2t'+ k \leq   n'$  and  $n' \leq  n$. However, in an asynchronous setting, a node might not be able to decode the message with $n'$ symbol observations if  $2t'+ k >   n'$. 
In such a case, the node can wait for one more symbol observation before attempting to decode again. This process repeats until the node successfully decodes the message. By setting the threshold as $n' \geq k + t$, $\OEC$ may perform up to $t$ trials in the worst case before decoding the message.

\begin{figure}[H]
\centering
\includegraphics[width=18cm]{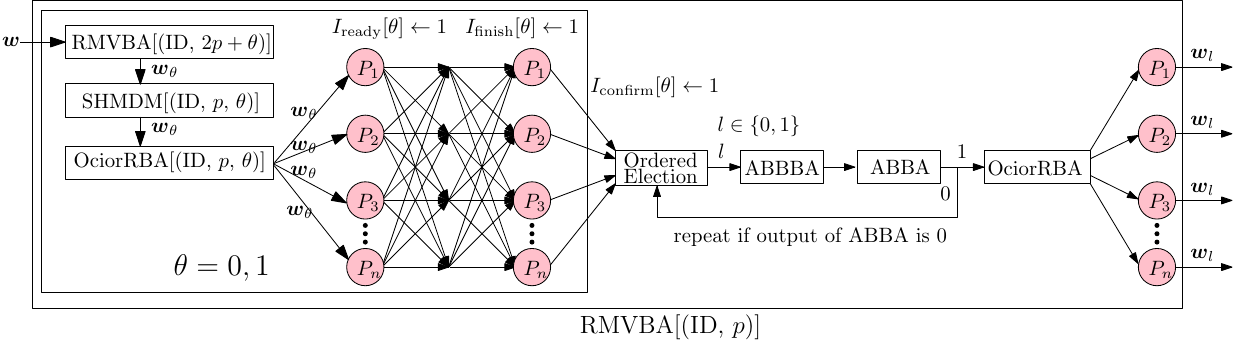}
\caption{A block diagram of the proposed $\OciorRMVBA$ protocol with an identifier $\IDMVBA$. 
}
\label{fig:OciorRMVBA}
\end{figure}

\begin{algorithm}[H]
\caption{$\OciorRMVBA$  protocol, with identifier $\IDMVBA$, for $n\geq 3t+1$. Code is shown for $\Node_{\thisnodeindex}$.   }    \label{algm:OciorRMVBA} 
\begin{algorithmic}[1]
\vspace{5pt}    
\footnotesize

\Statex   \emph{//   **   $\RMVBA$:  Recursive $\MVBA$, error-free, IT secure **}

\Statex   \emph{//   **  $\Sc_\groupindex$ is partitioned into two disjoint sets $\Sc_{2\groupindex}$ and  $\Sc_{2\groupindex+1}$ such that  $|\Sc_{2\groupindex}| =\floor{|\Sc_{\groupindex}|/2}$ and  $|\Sc_{2\groupindex+1}| =\ceil{|\Sc_{\groupindex}|/2}$**}

\Statex   \emph{//   **  Initially $\Sc_1 $  denotes the set of all nodes with size $n$ **}

\Statex
 
\Procedure{$\RecursiveMVBA[(\IDMVBA, \groupindex)]$}{$\MVBAInputMsg$}   
	\State let $\networksize \gets  |\Sc_{\groupindex}|;  \networkfaultsize\gets \floor{\frac{|\Sc_{\groupindex}|-1}{3}};    \Round_\groupindex \gets \floor{\log \groupindex} +1$  
\State let  $\setindex \gets 0, \bar{\setindex} \gets 1$ if  $\Node_{\thisnodeindex} \in \Sc_{2\groupindex}$, else $\setindex \gets 1, \bar{\setindex} \gets 0$      \label{line:OciorRMVBAsetindex} 	
 \State let $\DRBCFinishindicator \gets \{\}; \DRBCReadyindicator\gets\{\}; \DRBCConfirmindicator\gets\{\}$
	\For {$j \inset \{0,1\}$}	 
		\State $\DRBCFinishindicator[j]\gets 0;\DRBCReadyindicator[j]\gets 0; \DRBCConfirmindicator[j]\gets 0$  	
	\EndFor

\State {\bf upon} receiving  input  $\MVBAInputMsg$, for $\Predicate(\MVBAInputMsg) \eqlog \true$ and $\Node_{\thisnodeindex} \in \Sc_\groupindex$ {\bf do}:  	
\Indent	 
	\If {$|\Sc_\groupindex| \leq M$}		      \quad \quad \quad \quad \quad \quad   \quad \quad \quad\quad \quad\quad \quad\quad\quad \ \    \emph{//  $M$ is a preset finite number }  
		\State  $\MVBAOutputMsg \gets \IEMVBA[(\IDMVBA, \groupindex)](\MVBAInputMsg)$       \quad \quad  \quad \quad \quad\quad\quad\quad\quad \  \emph{//  $\IEMVBA$: inefficient $\MVBA$ protocol}      
		\State $\Return$   $\MVBAOutputMsg$ and $\terminate$ 	
	\Else 
		\State  $\Pass$ $\MVBAInputMsg$ into $\RecursiveMVBA[(\IDMVBA, 2\groupindex+\setindex)] $ as input          \label{line:OciorMVBAstarRecursivecall} 	 \quad\quad  \quad  \emph{// From Line~\ref{line:OciorRMVBAsetindex},   it is true that $\Node_{\thisnodeindex} \in \Sc_{2\groupindex+\setindex}$ and $\Node_{\thisnodeindex} \notin \Sc_{2\groupindex+\bar{\setindex}}$}          	
		\State  $\Pass$ $\bot$ into $\SHMDM[(\IDMVBA, \groupindex, \bar{\setindex})]$ as input       
		\State $\wait$ for $(\DRBCConfirmindicator[0]=1)\OR (\DRBCConfirmindicator[1]=1)$  	     				
		\For {$\electionoutput \inset \{0, 1\}$}	   \label{line:OciorRMVBALoopCond} 	
			\State  $\ABBAoutput\gets \ABBBA[\ltuple \IDMVBA, \groupindex,  \electionoutput, \networksize,  \networkfaultsize\rtuple](\DRBCReadyindicator[\electionoutput], \DRBCFinishindicator[\electionoutput])$       \label{line:OciorRMVBAABBBAoutput} 		    \     \emph{//  asynchronous  biased binary BA,  within $\Sc_{\groupindex}$, with $\networksize, \networkfaultsize$ parameters }   
			\State  $\ABAoutput\gets \ABBA[\ltuple \IDMVBA, \groupindex, \electionoutput, \networksize,  \networkfaultsize\rtuple](\ABBAoutput)$        \label{line:OciorRMVBAABAoutput} 		          \quad  \quad    \quad  \quad\quad   \quad  \quad\quad     \emph{//   asynchronous  binary  BA,  operated within $\Sc_{\groupindex}$, with $\networksize, \networkfaultsize$ parameters  }  
 
			\If {$\ABAoutput \eqlog 1$} 
				\State  $\Pass$ $\ABAoutput$ into  $\OciorRBA[(\IDMVBA, \groupindex, \electionoutput)]$ as a binary input   (other than the   message input)     
				\State $\wait$ for $\OciorRBA[(\IDMVBA, \groupindex, \electionoutput)]$ to output value $\MVBAOutputMsg$	        \label{line:OciorRMVBAOciorRBAoutput}  
					\If {$\Predicate(\MVBAOutputMsg) \eqlog \true$}  \label{line:OciorRMVBAEvalidity} 	
						\State  $\Output$  $\MVBAOutputMsg$ and $\terminate$ 	this  $\RecursiveMVBA[(\IDMVBA, \groupindex)]$ and all invoked recursive protocols  under it.	  \label{line:OciorRMVBALoopEnd} 			  	 			
					\EndIf 			
			\EndIf
		\EndFor		
	\EndIf
\EndIndent

\Statex

\State {\bf upon} $\RecursiveMVBA[(\IDMVBA, 2\groupindex+\setindex)]$  outputs $\MVBAOutputMsg$, with $\Predicate(\MVBAOutputMsg) \eqlog \true$,   and $\Node_{\thisnodeindex} \in \Sc_{2\groupindex+\setindex}$ {\bf do}:  	      \label{line:OciorRMVBAInputSHMDACond} 	   
\Indent
	\State  $\Pass$  $\MVBAOutputMsg$ into $\SHMDM[(\IDMVBA, \groupindex, \setindex)]$ as a message input       \label{line:OciorRMVBAInputSHMDA} 	
\EndIndent

\State {\bf upon} $\SHMDM[(\IDMVBA, \groupindex, \setindexnew)]$ outputs $\MVBAOutputMsg$, with $\Predicate(\MVBAOutputMsg) \eqlog \true$,  for $\setindexnew\in \{0,1\}$  and $\Node_{\thisnodeindex} \in \Sc_\groupindex$ {\bf do}:      \label{line:OciorRMVBAInputRBACond} 
\Indent
	\State $\Pass$  $\MVBAOutputMsg$ into $\OciorRBA[(\IDMVBA, \groupindex, \setindexnew)]$  as a message input   \quad  \emph{// $\OciorRBA[(\IDMVBA, \groupindex, \setindexnew)]$ is  a reliable BA protocol  operated within  $\Sc_{\groupindex}$   }       \label{line:OciorRMVBAInputRBA} 
\EndIndent

\Statex

\State {\bf upon} $\OciorRBA[(\IDMVBA, \groupindex, \setindexnew)]$ delivers  $\Vr_i =1$,  for $\setindexnew\in \{0,1\}$  and $\Node_{\thisnodeindex} \in \Sc_\groupindex$ {\bf do}:  	    \label{line:RMVBAReadyindicatorCondition} 
\Indent
	\State $\DRBCReadyindicator[\setindexnew] \gets 1$      \quad\quad\quad\quad \quad\quad\quad\quad\quad\quad\quad \quad\quad\quad\quad\quad\quad\quad \quad  \emph{//     ready  } 
	\State $\send$ $(\READY,  \IDMVBA,  \Round_\groupindex, \setindexnew)$  to  all nodes within $\Sc_\groupindex$         
\EndIndent

\Statex

\State {\bf upon} receiving   $\networksize-\networkfaultsize$  $(\READY, \IDMVBA,  \Round_\groupindex, \setindexnew)$  messages from distinct nodes within $\Sc_\groupindex$   for $\setindexnew\in \{0,1\}$  and $\Node_{\thisnodeindex} \in \Sc_\groupindex$ {\bf do}:      \label{line:RMVBAFinishindicatorCondition} 
\Indent  
	\State $\DRBCFinishindicator[\setindexnew] \gets 1$     \quad\quad\quad\quad \quad\quad\quad\quad\quad\quad\quad \quad\quad\quad\quad\quad\quad\quad \quad  \emph{//     finish  } 
	\State $\send$ $(\FINISH, \IDMVBA,  \Round_\groupindex, \setindexnew)$ to  all nodes within $\Sc_\groupindex$      	 
\EndIndent

\Statex

\State {\bf upon} receiving   $\networksize-\networkfaultsize$  $(\FINISH, \IDMVBA,  \Round_\groupindex, \setindexnew)$  messages from  distinct nodes within $\Sc_\groupindex$  for $\setindexnew\in \{0,1\}$  and $\Node_{\thisnodeindex} \in \Sc_\groupindex$ {\bf do}:  
\Indent  
	\State $\DRBCConfirmindicator[\setindexnew] \gets 1$     \quad\quad\quad \quad\quad\quad\quad\quad\quad\quad \quad\quad\quad\quad\quad\quad\quad \quad  \emph{//     confirm  } 
\EndIndent

\EndProcedure

\end{algorithmic}
\end{algorithm}

\begin{algorithm}[H]
\caption{$\ABBBA$  protocol with  identifier $\ltuple \IDMVBA, \groupindex,  \electionoutput, \networksize,  \networkfaultsize\rtuple$,  for $\IDMVBAtilde:=\ltuple \IDMVBA, \floor{\log \groupindex} +1,  \electionoutput  \rtuple $. This protocol operates on a network $\Sc_{\groupindex}$ of $\networksize$ nodes, up to $\networkfaultsize$ of which may be dishonest.    Code is shown for   $\Node_{\thisnodeindex}$.}    \label{algm:ABBBA} 
\begin{algorithmic}[1]
\vspace{5pt}    
\footnotesize

\Statex   \emph{//   ** Each node inputs a pair of numbers $(\abbainputA, \abbainputB)$, for some $\abbainputA, \abbainputB \in \{0,1\}$ **}   

\Statex   \emph{//   ** Terminate is guaranteed if  the following condition is satisfied:  if one honest node inputs $\abbainputB =1$, then  at least $\networkfaultsize+1$ honest nodes  input  $\abbainputA =1$ **}    

\Statex   \emph{//   ** If at least $\networkfaultsize+1$ honest nodes input $\abbainputB=1$, then none of the honest nodes will  output $0$. This is because  at most $\networksize-(\networkfaultsize+1)$ nodes input $\abbainputB=0$ in this case, indicating that the condition in Line~\ref{line:zerocondition} could not be satisfied. **}    

\Statex   \emph{//   ** If one honest node outputs $1$ (only when $(\ABBACountA \geq \networkfaultsize+1) \OR (\ABBACountB \geq \networkfaultsize+1)$), then at least one honest node has an input as $\abbainputA =1$ or  $\abbainputB =1$ **}

\Statex

\State {\bf upon} receiving  input  $(\abbainputA, \abbainputB)$, for some $\abbainputA, \abbainputB \in \{0,1\}$   {\bf do}:  		
\Indent
	\State $\ABBACountA \gets 0; \ABBACountB \gets 0; \ABBACountC \gets 0$   
	\State $\send$   $(\ABBAVALUE,  \IDMVBAtilde, \abbainputA,  \abbainputB)$   to all nodes
	
	\If {$(\abbainputA \eqlog 1) \OR (\abbainputB \eqlog 1)$ } $\Output$  $1$ and $\terminate$ 
	\EndIf

	\State {\bf wait} for  at least one of the following events: 1) $\ABBACountA \geq \networkfaultsize+1$,  2) $\ABBACountB \geq \networkfaultsize+1$,  or 3)  $\ABBACountC \geq \networksize-\networkfaultsize$   
	\Indent
		\If {$(\ABBACountA \geq \networkfaultsize+1) \OR (\ABBACountB \geq \networkfaultsize+1)$}		 
			\State  $\Output$  $1$ and $\terminate$        \label{line:ABBBAoneoutput} 
		\ElsIf {$\ABBACountC \geq \networksize-\networkfaultsize$}		     \label{line:zerocondition} 
			\State  $\Output$  $0$ and $\terminate$ 
		\EndIf
	\EndIndent	
				
\EndIndent

\State {\bf upon} receiving  $(\ABBAVALUE, \IDMVBAtilde, \abbainputA,  \abbainputB)$ from  $\Node_j$ for the first time, for some $\abbainputA, \abbainputB \in \{0,1\}$ {\bf do}:  
\Indent  
	\State  $\ABBACountA \gets \ABBACountA +\abbainputA; \ABBACountB \gets \ABBACountB +\abbainputB$   
	\If {$\abbainputB \eqlog 0$ } $\ABBACountC \gets \ABBACountC +1$
	\EndIf			

\EndIndent

\end{algorithmic}
\end{algorithm}

\begin{algorithm}[H]
\caption{$\SHMDM$ protocol with identifier $(\IDMVBA, \groupindex, \setindex)$,  for $\IDMVBAtilde:=(\IDMVBA, \floor{\log \groupindex} +1, \setindex)$, and for $\setindex\in \{0,1\}$.  Code is shown for    $\Node_{\thisnodeindex}$, where  $\Node_{\thisnodeindex}$ denotes the $i$th node within  $\Sc_{\groupindex}$. }  \label{algm:SHMDM}
\begin{algorithmic}[1]
\vspace{5pt}    
 
\footnotesize

 \Statex   \emph{//   **  $\SHMDM$:   Strongly-honest-majority distributed multicast   **}

\Statex   \emph{//   **  $\Sc_\groupindex$ is partitioned into two disjoint sets $\Sc_{2\groupindex}$ and  $\Sc_{2\groupindex+1}$ with   $|\Sc_{2\groupindex}| =\floor{|\Sc_{\groupindex}|/2}$ and  $|\Sc_{2\groupindex+1}| =\ceil{|\Sc_{\groupindex}|/2}$**}

\State  Initially set   $\OECsymbolset \gets  \{\}, \networksizestar \gets  |\Sc_{2\groupindex+\setindex}|;  \networkfaultsizestar \gets \floor{\frac{|\Sc_{2\groupindex+\setindex}|-1}{3}};  \kstar \gets  \networkfaultsizestar+1$
		
\State {\bf upon} receiving input   $\MVBAInputMsg$  {\bf do}:      
\Indent  
 
    	\If {$\Node_{\thisnodeindex} \in \Sc_{2\groupindex+\setindex}$}   
		\State $\nodeindexstar\gets  i- \setindex\cdot|\Sc_{2\groupindex}|$               \quad\quad\quad\quad  \quad \quad\quad\quad\quad\quad\quad \quad\quad\quad\quad\quad  \emph{//      $\nodeindexstar$ is the position of this node within  $\Sc_{2\groupindex+\setindex}$   }  
		\State  $z_{\nodeindexstar} \gets \ECCEnc_{\nodeindexstar}(\networksizestar,  \kstar, \MVBAInputMsg)$     \quad \quad \quad\quad\quad\quad\quad\quad\quad\quad\quad  \emph{//     $\ECCEnc_{\nodeindexstar}$ outputs the $\nodeindexstar$th encoded symbol only  }
		\State   $\send$ $\ltuple \INITIAL, \IDMVBAtilde, z_{\nodeindexstar}\rtuple$ to all nodes in $\Sc_{\groupindex}\setminus \Sc_{2\groupindex+\setindex}$           \quad \emph{//     broadcast coded symbol to other set for decoding initial message    }  
		\State   $\Output$   $\MVBAInputMsg$ and $\terminate$     
       \EndIf	  
\EndIndent

\State {\bf upon} receiving   $\ltuple\INITIAL, \IDMVBAtilde, z \rtuple$  from  $\Node_j$ for the first time for some $z$, for $\Node_j \in   \Sc_{2\groupindex+\setindex}$, and $\Node_{\thisnodeindex} \in \Sc_{\groupindex}\setminus \Sc_{2\groupindex+\setindex}$    {\bf do}:  
\Indent  
		\State $j^{\star}\gets  j- \setindex\cdot|\Sc_{2\groupindex}|; \quad \OECsymbolset[j^{\star}] \gets z$         \quad  \quad\quad\quad \quad\quad\quad \quad\quad\quad  \emph{//       $j^{\star}$ is the position of $\Node_j$ within  $\Sc_{2\groupindex+\setindex}$   }  
 		\If  { $|\OECsymbolsetInitial|\geq  \kstar + \networkfaultsizestar  $}    \quad \quad \quad \quad\quad\quad \quad   \quad\quad \quad \quad\quad\quad \quad    \emph{//   online error correcting  (OEC)  }  
			\State   $\tilde{\wv}  \gets \ECCDec(\networksizestar,  \kstar, \OECsymbolsetInitial)$	
			\State  $ [\OECsymbol_{1}', \OECsymbol_{2}', \cdots, \OECsymbol_{\networksizestar}'] \gets \ECCEnc (n,  k, \tilde{\wv})$ 
    			\If {at least $\kstar + \networkfaultsizestar$ symbols in $ [\OECsymbol_{1}', \OECsymbol_{2}', \cdots, \OECsymbol_{\networksizestar}'] $ match with  those in $\OECsymbolsetInitial$}
 				\State   $\Output$   $\tilde{\wv}$ and $\terminate$          			
    			\EndIf    
		\EndIf   
	    		
\EndIndent

\end{algorithmic}
\end{algorithm}

\begin{algorithm}[H]
\caption{$\OciorRBA$ protocol with identifier $(\IDCOOL, \groupindex, \setindex)$,  for $\IDMVBAtilde:=(\IDCOOL, \floor{\log \groupindex} +1, \setindex)$.   Code is shown for    $\Node_{\thisnodeindex}$, where  $\Node_{\thisnodeindex}$ denotes the $i$th node within  $\Sc_{\groupindex}$. 
 This protocol is operated within   $\Sc_{\groupindex}$.  }  \label{algm:OciorRBA}
\begin{algorithmic}[1]
\vspace{5pt}     
 
\footnotesize
 \Statex   \emph{//   **   $\OciorRBA$  is an error-free  reliable Byzantine agreement ($\RBA$) protocol,  extended from $\OciorCOOL$ and $\OciorRBC$ \cite{ChenOciorCOOL:24}. **}

\State  Initially set   $\networksize \gets  |\Sc_{\groupindex}|;  \networkfaultsize\gets \floor{\frac{|\Sc_{\groupindex}|-1}{3}}$,   $\ktilde\gets \bigl \lfloor   \frac{ \networkfaultsize  }{5 } \bigr\rfloor    +1$; $\OECSIFinal\gets 0;   \OECCorrectSymbolSet\gets \{\}; \Lkset_0\gets \{\}; \Lkset_1\gets \{\}; \Ss_0^{[1]}\gets \{\};\Ss_1^{[1]}\gets \{\};\Ss_0^{[2]}\gets \{\};\Ss_1^{[2]}\gets \{\};     \ECCEncindicator \gets 0; \SIPhtwo\gets 0;   \Phtwoindicator\gets 0; \Phthreeindicator\gets 0$

\Statex {\bf \emph{Phase~1}}       

\State {\bf upon} receiving a non-empty  message  input $\Me_{i}$  {\bf do}:   
\Indent  
 \State  $\Me^{(i)} \gets \Me_{i}$   	  \label{line:RBAph0B} 	
 \State    $[y_1^{(i)}, y_2^{(i)}, \cdots, y_{\networksize}^{(i)}]\gets \ECCEnc(\networksize, \ktilde, \Me_{i})$     \label{line:RBAECCEnc}
 \State   $\send$  $\ltuple   \SYMBOL, \IDMVBAtilde,  (y_j^{(i)}, y_i^{(i)}) \rtuple$ to $\Node_j$,  $\forall j \in  [1:\networksize]$, and then set $\ECCEncindicator\gets 1$    \quad\quad \quad\quad \quad\quad \quad\quad   \quad  \ \emph{// exchange coded  symbols }   		 \label{line:RBAECCEncindicator}
  
\EndIndent

\State {\bf upon} receiving   $\ltuple\SYMBOL, \IDMVBAtilde, (y_i^{(j)}, y_j^{(j)}) \rtuple$  from  $\Node_j$ for the first time  {\bf do}:  
\Indent  
 
	\State  $\wait$ until  $\ECCEncindicator =1$ 
	\If {$ (y_i^{(j)}, y_j^{(j)}) = (y_i^{(i)}, y_j^{(i)})$ }         \label{line:RBAph1MatchCond}
		\State  $\Lkset_1\gets \Lkset_1\cup \{j\}$              \label{line:RBAph1Match}  \quad\quad\quad \quad\quad\quad\quad  \quad \quad \quad\quad \quad \quad \quad \quad \quad\quad \quad \quad  \quad \quad \quad \quad \quad \quad \quad \quad  \quad \quad \quad \quad \emph{// update the set of link indicators}
	\Else 
		\State  $\Lkset_0\gets \Lkset_0\cup \{j\}$            \label{line:RBAph1NotMatch} 
	\EndIf
 
\EndIndent

\State {\bf upon}  $|\Lkset_1|\geq  \networksize- \networkfaultsize $,  and  $\ltuple\SIone, \IDMVBAtilde, * \rtuple$  not yet sent {\bf do}:      \label{line:RBAph2OneCond}
\Indent  
           \State	set  $\Ry_i^{[1]} \gets 1$,  $\send$ $\ltuple\SIone, \IDMVBAtilde, \Ry_i^{[1]}\rtuple$ to all nodes, and then set $\Phtwoindicator\gets 1$    \label{line:RBAph2SI1} \quad \quad\quad\quad\quad\quad \quad\quad\quad\quad\quad\quad \quad   \quad\quad\quad \emph{// set success indicator}
 
\EndIndent

\State {\bf upon}  $|\Lkset_0|\geq   \networkfaultsize +1$, and  $\ltuple\SIone, \IDMVBAtilde, * \rtuple$  not yet sent {\bf do}:       \label{line:RBAph2ZeroCond}
\Indent  
           \State	set $\Ry_i^{[1]} \gets 0$,   $\send$ $\ltuple\SIone, \IDMVBAtilde, \Ry_i^{[1]}\rtuple$ to all nodes, 	and  then set $\Phtwoindicator\gets 1$    \label{line:RBAph2B}
\EndIndent

\State {\bf upon} receiving   $\ltuple\SIone, \IDMVBAtilde, \Ry_j^{[1]}\rtuple$  from  $\Node_j$ for the first time   {\bf do}:    \label{line:RBAph1SS01Cond}  
\Indent  
	\If {$ \Ry_j^{[1]} =1$ }     
		\State  $\wait$ until  $(j \in  \Lkset_1\cup \Lkset_0)\OR(|\Ss_1^{[1]}|\geq  \networksize- \networkfaultsize)\OR(|\Ss_0^{[1]}|\geq   \networkfaultsize +1)$ 
		\If {$ j \in  \Lkset_1$ }         \label{line:RBAph1SS1Cond} 
			\State  $\Ss_1^{[1]}\gets \Ss_1^{[1]}\cup \{j\}$        \label{line:RBAph1SS1}         \quad  \quad \quad \quad\quad \quad \quad \quad \quad \quad\quad \quad \quad  \quad \quad \quad \quad \quad \quad \quad \quad  \quad \quad \quad \quad \emph{//update the set of success indicator as ones}
		\ElsIf{$ j \in  \Lkset_0$ }
			\State  $\Ss_0^{[1]} \gets \Ss_0^{[1]}\cup \{j\}$       \label{line:RBAph1SS0}   \quad  \quad  \quad \quad \quad\quad \quad \quad \quad \quad \quad\quad \quad \quad  \quad \quad \quad \quad \quad \quad \quad \quad  \quad \quad \quad \quad \emph{// mask identified errors  (mismatched links)}
		\EndIf
	\Else 
		\State  $\Ss_0^{[1]} \gets \Ss_0^{[1]}\cup \{j\}$        \label{line:RBAph1SS0NotM}  \quad\quad  \quad  \quad \quad \quad\quad \quad \quad \quad \quad \quad\quad \quad \quad  \quad \quad \quad \quad \quad \quad \quad \quad  \quad \quad \quad \quad \emph{// mask identified errors  (mismatched links)}
	\EndIf
 
\EndIndent

\Statex  {\bf \emph{Phase~2}}    

 \State {\bf upon}  $(\Phtwoindicator= 1)\AND(\Ry_i^{[1]} = 0)$, and  $\ltuple\SItwo, \IDMVBAtilde, \Ry_i^{[2]}\rtuple$ not yet sent {\bf do}:  
\Indent  

	\State	set $\Ry_i^{[2]} \gets 0$,   $\send$ $\ltuple\SItwo, \IDMVBAtilde, \Ry_i^{[2]}\rtuple$ to all nodes  \label{line:RBAph2CSI0} \quad\quad \quad\quad\quad\quad\quad\quad  \quad\quad\quad\quad \quad \quad\quad\quad\quad \quad \quad\quad\quad\quad \quad   \emph{// update success indicator }
	
\EndIndent

 \State {\bf upon}  $(\Phtwoindicator= 1)\AND(\Ry_i^{[1]} = 1) \AND(|\Ss_1^{[1]}|\geq  \networksize- \networkfaultsize)$, and  $\ltuple\SItwo, \IDMVBAtilde, \Ry_i^{[2]}\rtuple$ not yet sent   {\bf do}:   \label{line:RBAph2ACond} 
\Indent   

           \State	set  $\Ry_i^{[2]} \gets 1$, $\SIPhtwo\gets 1$, and   $\send$ $\ltuple \SItwo, \IDMVBAtilde,\Ry_i^{[2]}\rtuple$ to all nodes   \label{line:RBAph2ASI1} 
	
\EndIndent

 \State {\bf upon}  $|\Ss_0^{[1]}|\geq   \networkfaultsize +1$, and  $\ltuple\SItwo, \IDMVBAtilde, \Ry_i^{[2]}\rtuple$ not yet sent   {\bf do}:  
\Indent  

           \State	set $\Ry_i^{[2]} \gets 0$,   $\send$ $\ltuple \SItwo, \IDMVBAtilde, \Ry_i^{[2]}\rtuple$ to all nodes    \label{line:RBAph2BSI0}     
	
\EndIndent

 \State {\bf upon} receiving   $\ltuple\SItwo, \IDMVBAtilde, \Ry_j^{[2]}\rtuple$  from  $\Node_j$ for the first time  {\bf do}:     \label{line:RBAph2SS01Cond}     
\Indent  
	\If {$ \Ry_j^{[2]} =1$ }     
		\State  $\wait$ until $(j \in  \Lkset_1\cup \Lkset_0)\OR(|\Ss_1^{[2]}|\geq  \networksize- \networkfaultsize)\OR(|\Ss_0^{[2]}|\geq   \networkfaultsize +1)$ 
		\If {$ j \in  \Lkset_1$ }     
			\State  $\Ss_1^{[2]}\gets \Ss_1^{[2]}\cup \{j\}$          
		\ElsIf{$ j \in  \Lkset_0$ }
			\State  $\Ss_0^{[2]} \gets \Ss_0^{[2]}\cup \{j\}$            
		\EndIf
	\Else 
		\State  $\Ss_0^{[2]} \gets \Ss_0^{[2]}\cup \{j\}$            \label{line:RBAph2SS0NM}     
	\EndIf
 
\EndIndent

\algstore{OciorRBA}

\end{algorithmic}
\end{algorithm}

\begin{algorithm}[H]
\begin{algorithmic}[1]
\algrestore{OciorRBA}
\vspace{5pt}    
 \footnotesize

\State {\bf upon}   $|\Ss_{\Vr}^{[2]}|\geq  \networksize- \networkfaultsize$,     for a  $\Vr \in \{1,0\}$,  and $\ltuple\READY, \IDMVBAtilde, * \rtuple$  not yet sent {\bf do}:      \label{line:RBAReadyCondition}   
\Indent  
		\State  set $\Vr_i \gets \Vr$, and  $\deliver$  $\Vr_i$        \label{line:RBAVrideliver}              \quad \quad\quad \quad  \quad\quad\quad    \quad\quad\quad \quad   \quad\quad\quad\quad  \emph{//  the value of  $\Vr_i$ is  delivered out to the protocol in Algorithm~\ref{algm:OciorRMVBA}}       
		\State $\send$ $\ltuple\READY, \IDMVBAtilde, \Vr_i \rtuple$ to  all nodes  	  \label{line:RBAReadySendA}
\EndIndent

\State {\bf upon} receiving   $\networkfaultsize+1$  $\ltuple \READY, \IDMVBAtilde, \Vr  \rtuple$ messages  from different   nodes for the same $\Vr$ and $\ltuple\READY, \IDMVBAtilde, * \rtuple$  not yet sent {\bf do}:        \label{line:RBARelialbeBegin}   
\Indent  
		\State $\send$ $\ltuple\READY, \IDMVBAtilde, \Vr \rtuple$ to  all nodes	   \label{line:RBARelialbeBeginBB}   
\EndIndent

\State {\bf upon} receiving   $2 \networkfaultsize+1$  $\ltuple \READY, \IDMVBAtilde, \Vr  \rtuple$ messages  from different nodes   for the same $\Vr$ {\bf do}:    \label{line:RBARelialbeEnd}  
\Indent  
	\If {$\ltuple\READY, \IDMVBAtilde, * \rtuple$  not yet sent }     
		\State $\send$ $\ltuple\READY, \IDMVBAtilde, \Vr \rtuple$ to  all nodes    \label{line:RBAReadySendC}
	\EndIf
	\State  set $\VrOutput\gets \Vr$  \label{line:RBAVrOutput}       
	\If {$\VrOutput =0$ }     
		\State set $\Me^{(i)}\gets \defaultvalue$, then $\Output$   $\Me^{(i)}$ and $\terminate$  	 \label{line:RBAoutputdefault}    \quad \quad	\quad\quad\quad\quad\quad\quad\quad\quad\quad\quad \quad\quad\quad\quad   \quad\quad\quad\quad \quad\quad     \emph{//  $\defaultvalue$ is a default value  }    
	\Else
		\State  set $\Phthreeindicator\gets 1$        \label{line:RBAph3}     
		
	\EndIf
\EndIndent

 \State {\bf upon} receiving a binary  input $\ABAoutput =1$  (other than the message  input $\Me_{i}$) {\bf do}:   \label{line:RBAph3BBCondition} \quad\quad \quad  \emph{//   $\ABAoutput$ is  delivered from the protocol in Algorithm~\ref{algm:OciorRMVBA}}       
\Indent  
		\State  set $\Phthreeindicator\gets 1$        \label{line:RBAph3BB}     
\EndIndent

\Statex

\Statex  {\bf \emph{Phase~3}}

\State {\bf upon} $\Phthreeindicator= 1$ {\bf do}:     \label{line:RBAph3Begin}  \quad\quad\quad\quad\quad\quad\quad\quad \quad \quad\quad\quad\quad\quad  \quad \quad\quad \quad \quad\quad\quad\quad\quad  \quad \quad\quad\quad\quad\quad\quad \quad      \emph{//     only after executing Line~\ref{line:RBAph3} or Line~\ref{line:RBAph3BB}}
\Indent  
\If { $\SIPhtwo = 1$}          \label{line:RBAph3SIPhtwo}
	\State $\Output$   $\Me^{(i)}$ and $\terminate$     \label{line:RBAph3SIPhtwoOutput}
		 
\Else
 
			\State  $\wait$ until receiving $\networkfaultsize +1$ $\ltuple\SYMBOL, \IDMVBAtilde, (y_i^{(j)},  *) \rtuple$  messages, $\forall j \in   \Ss_1^{[2]}$, for the same    $y_i^{(j)} =y^{\star}$,  for some   $y^{\star}$   \label{line:RBAph3MajorityRuleCond}
			\State  $y_i^{(i)} \gets y^{\star}$   \label{line:RBAph3MajorityRule}   \quad\quad\quad\quad\quad\quad \quad\quad\quad\quad\quad\quad\quad\quad\quad\quad\quad\quad   \quad\quad\quad\quad\quad\quad\quad\quad \quad\quad \emph{// update coded symbol based on  majority rule}  
	\State   $\send$   $\ltuple \CORRECTSYMBOL, \IDMVBAtilde, y_i^{(i)}  \rtuple$  to  all nodes      \label{line:RBAph3SendCorrectSymbols} 
			\State  $\wait$ until  $\OECSIFinal=1$ 	
			\State $\Output$   $\Me^{(i)}$ and $\terminate$     \label{line:RBAph3Output2}	
	
\EndIf

\EndIndent

\State {\bf upon} receiving   $\ltuple\CORRECTSYMBOL, \IDMVBAtilde, y_j^{(j)} \rtuple$  from  $\Node_j$ for the first time,   $j\notin \OECCorrectSymbolSet$, and $\OECSIFinal=0$   {\bf do}:  \label{line:Ph3OECCond}
\Indent  
	\State $\OECCorrectSymbolSet[j] \gets y_j^{(j)}$    
 	\If  { $|\OECCorrectSymbolSet|\geq  \ktilde + \networkfaultsize  $}     \label{line:OECbegin}   \quad \quad    \quad\quad \quad \quad\quad\quad \quad    \quad\quad \quad \quad\quad\quad \quad   \quad\quad \quad \quad\quad\quad \quad   \quad\quad \quad \quad\quad\quad \quad    \emph{//   online error correcting  (OEC)  }  
			\State   $\MVBAOutputMsg  \gets \ECCDec(\networksize,  \ktilde , \OECCorrectSymbolSet)$	     
			\State  $[y_{1}, y_{2}, \cdots, y_{\networksize}] \gets \ECCEnc (\networksize,  \ktilde, \MVBAOutputMsg)$ 
    			\If {at least $\ktilde + \networkfaultsize$ symbols in $[y_{1}, y_{2}, \cdots, y_{\networksize}]$ match with  those in $\OECCorrectSymbolSet$}
 				\State  $ \Me^{(i)} \gets \MVBAOutputMsg; \OECSIFinal\gets 1$     \label{line:OECend}	 
    			\EndIf    

	\EndIf   
		     
\EndIndent

\State {\bf upon} having received   both $\ltuple\SYMBOL, \IDMVBAtilde, (y_i^{(j)}, y_j^{(j)})\rtuple$ and   $\ltuple \SItwo, \IDMVBAtilde, 1\rtuple$ messages from  $\Node_j$, and $j\notin \OECCorrectSymbolSet$, and  $\OECSIFinal=0$   {\bf do}:    \label{line:RBAPh3OECSecCond}
\Indent  
	\State $\OECCorrectSymbolSet[j] \gets y_j^{(j)}$     
	\State run the OEC steps as in Lines~\ref{line:OECbegin}-\ref{line:OECend}      \label{line:Ph3OECAllEnd}
\EndIndent

\end{algorithmic}
\end{algorithm}

\section{$\OciorRMVBA$}\label{sec:OciorRMVBA}  
  
This proposed $\OciorRMVBA$ is an error-free,  information-theoretically secure asynchronous $\MVBA$  protocol.  
$\OciorRMVBA$ doesn't rely on any   cryptographic assumptions, such as  signatures or hashing,  except for  the common coin assumption. 
The design of $\OciorRMVBA$  in this $\MVBA$ setting builds on the protocols   $\COOL$ and $\OciorCOOL$   \cite{Chen:2020arxiv, ChenDISC:21,ChenOciorCOOL:24}.

\subsection{Overview of the  proposed $\OciorRMVBA$ protocol}  

The proposed  $\OciorRMVBA$ is described in Algorithm~\ref{algm:OciorRMVBA}, along with   Algorithms~\ref{algm:ABBBA}-\ref{algm:OciorRBA}. Fig.~\ref{fig:OciorRMVBA} presents a block diagram of the proposed $\OciorRMVBA$ protocol. 
For a network $\Sc_{\groupindex}$, we define the network size and the  faulty threshold   as $\networksize_{\groupindex}:=  |\Sc_{\groupindex}|$ and  $\networkfaultsize_{\groupindex}:= \floor{\frac{|\Sc_{\groupindex}|-1}{3}}$, for $\groupindex\in \{1,2,3,\cdots\}$.  The original network is defined as $\Sc_{1}:=\{\Node_i: i\in [1:n]\}$, where $\Node_i$ denotes the $i$th node in the network.   $\Sc_\groupindex$ is partitioned into two disjoint sets $\Sc_{2\groupindex}$ and  $\Sc_{2\groupindex+1}$ such that  $|\Sc_{2\groupindex}| =\floor{|\Sc_{\groupindex}|/2}$ and  $|\Sc_{2\groupindex+1}| =\ceil{|\Sc_{\groupindex}|/2}$.
Below, we provide an overview of the proposed protocol.

$\OciorRMVBA$ is a recursive protocol. As shown in  Fig.~\ref{fig:OciorRMVBA}, the protocol $\RecursiveMVBA[(\IDMVBA, \groupindex)]$ invokes two sub-protocols: $\RecursiveMVBA[(\IDMVBA, 2\groupindex)] $  and $\RecursiveMVBA[(\IDMVBA, 2\groupindex+1)] $. Each  sub-protocol, in turn,  invokes two additional sub-protocols. This process continues until the size of network on which a sub-protocol operates on is smaller than a predefined finite threshold. In the final protocol invoked, any inefficient $\MVBA$ protocol (referred to as $\IEMVBA$) can be used without impacting overall performance, as the size of the operated network is finite.

The general protocol $\RecursiveMVBA[(\IDMVBA, \groupindex)]$ comprises several steps, as illustrated in Fig.~\ref{fig:OciorRMVBA}.
It operates over the network $\Sc_\groupindex$, which is partitioned into two disjoint sets, $\Sc_{2\groupindex}$ and $\Sc_{2\groupindex+1}$, of balanced size. The two invoked protocols operate on these partitioned network sets.  The key steps involved in $\RecursiveMVBA[(\IDMVBA, \groupindex)]$ are described below. 
\begin{itemize}
\item   \emph{Step 1:}  Upon   an invoked sub-protocol $\RecursiveMVBA[(\IDMVBA, 2\groupindex+\setindex)]$, for $\setindex\in \{0,1\}$,  outputs a message $\Me_{\setindex}$,   the nodes within   $\Sc_{2\groupindex+\setindex}$  input  $\Me_{\setindex}$ into  $\SHMDM[(\IDMVBA, \groupindex, \setindex)]$.  
\item   \emph{Step 2:}  Upon   $\SHMDM[(\IDMVBA, \groupindex, \setindex)]$ outputs a message $\Me_{\setindex}$,   the nodes within   $\Sc_{\groupindex}$  input  $\Me_{\setindex}$ into  $\OciorRBA[(\IDMVBA, \groupindex, \setindex)]$.      
\item   \emph{Step 3:} Upon   $\OciorRBA[(\IDMVBA, \groupindex, \setindex)]$ delivers  $\Vr_i =1$,     the nodes within   $\Sc_{\groupindex}$  set $\DRBCReadyindicator[\setindex] \gets 1$   and send    $(\READY,  \IDMVBA,  \Round_\groupindex, \setindex)$  to  all nodes within $\Sc_\groupindex$, where $\Round_\groupindex:= \floor{\log \groupindex} +1$.
\item   \emph{Step 4:} Upon   receiving   $\networksize_{\groupindex}-\networkfaultsize_{\groupindex}$  $(\READY, \IDMVBA,  \Round_\groupindex, \setindex)$  messages from distinct nodes within $\Sc_\groupindex$,   the nodes within   $\Sc_{\groupindex}$    set $\DRBCFinishindicator[\setindex] \gets 1$     and send   $(\FINISH, \IDMVBA,  \Round_\groupindex, \setindex)$ to  all nodes within $\Sc_\groupindex$ .
\item   \emph{Step 5:} Upon   receiving   $\networksize_{\groupindex}-\networkfaultsize_{\groupindex}$   $(\FINISH, \IDMVBA,  \Round_\groupindex, \setindex)$  messages from  distinct nodes within $\Sc_\groupindex$,   the nodes within   $\Sc_{\groupindex}$    set $\DRBCConfirmindicator[\setindex] \gets 1$.
\item   \emph{Step 6:} After setting $\DRBCConfirmindicator[\setindex] \gets 1$,   the nodes within   $\Sc_{\groupindex}$ proceed to the Ordered Election  step, which outputs   $\electionoutput$,   taking a value from  $  \{0, 1\}$ in order.
\item   \emph{Step 7:} The nodes within   $\Sc_{\groupindex}$   run the $\ABBBA$ protocol with inputs $(\DRBCReadyindicator[\electionoutput], \DRBCFinishindicator[\electionoutput])$.
\item   \emph{Step 8:} After  $\ABBBA$ outputs a value $\ABBAoutput$,  the nodes within   $\Sc_{\groupindex}$   run an asynchronous binary $\BA$ ($\ABBA$) protocol with input $\ABBAoutput$. 
\item   \emph{Step 9:} If  $\ABBA$ outputs a value $1$,  the nodes within   $\Sc_{\groupindex}$ input $1$ into  $\OciorRBA[(\IDMVBA, \groupindex, \electionoutput)]$ and wait for its output.  If $\OciorRBA[(\IDMVBA, \groupindex, \electionoutput)]$ outputs a value $\Me_{\electionoutput}$ such that $\Predicate(\Me_{\electionoutput}) \eqlog \true$,    the nodes within   $\Sc_{\groupindex}$ output  $\Me_{\electionoutput}$ and terminate the protocol  $\RecursiveMVBA[(\IDMVBA, \groupindex)]$.   If  $\ABBA$ outputs a value $0$, the nodes go back to Step~6. 
\end{itemize} 
In our design,  by combining the   the Ready-Finish-Confirm process  in Steps~4-6 with $\ABBBA$ in Steps~7, we ensure  that if one honest node has set $\DRBCConfirmindicator[\electionoutput] \gets 1$, eventually every honest node outputs $1$ from $\ABBBA [\ltuple \IDMVBA, \groupindex,  \electionoutput, \networksize_{\groupindex},  \networkfaultsize_{\groupindex}\rtuple]$ in Step~7,  due to the Biased Validity  property of $\ABBBA$.
It is worth noting that the design of $\OciorRBA$ protocol in the $\RBA$ setting builds upon  the protocols $\COOL$ and $\OciorCOOL$, which utilize  $\BUA$ and $\HMDM$ as building blocks \cite{Chen:2020arxiv, ChenDISC:21,ChenOciorCOOL:24}.

\subsection{Analysis of $\OciorRMVBA$}    \label{sec:AnalysisOciorMVBAstar}

\begin{definition} [\bf Good resilience]    
A network $\Sc_{\groupindex}$ is said to have good resilience if the number of dishonest nodes within $\Sc_{\groupindex}$, denoted by $\networkfaultsizereal_{\groupindex}:=|\Sc_{\groupindex}\cap \Fc|$,  satisfies the condition  $\networkfaultsizereal_{\groupindex}< \frac{|\Sc_{\groupindex}|}{3}$, where $\Fc$ denotes the set of all dishonest nodes.  This condition  $\networkfaultsizereal_{\groupindex}< \frac{|\Sc_{\groupindex}|}{3}$   is equivalent to  $\networkfaultsizereal_{\groupindex}\leq  \frac{|\Sc_{\groupindex}|-1}{3}$, and also equivalent to the condition   $\networkfaultsizereal_{\groupindex}\leq \floor{\frac{|\Sc_{\groupindex}|-1}{3}}$ due to the integer nature of $\networkfaultsizereal_{\groupindex}$. 
\end{definition}

\begin{definition} [\bf Network tree]    
In our setting, $\Sc_\groupindex$ is partitioned into two disjoint sets $\Sc_{2\groupindex}$ and  $\Sc_{2\groupindex+1}$.  
For example, $\Sc_1$  (at Layer 1) is partitioned into two disjoint sets: $\Sc_{2}$ and   $\Sc_{3}$ (at Layer~2).  The two sets  $\Sc_{2}$ and $\Sc_{3}$ are partitioned into $\Sc_{4}$, $\Sc_{5}$ and $\Sc_{6}$, $\Sc_{7}$, respectively, at Layer~3. 
We define the network tree as comprising all layers of sets: $\Sc_1$  at Layer 1;  $\Sc_{2}$ and   $\Sc_{3}$ at Layer~2; $\Sc_{4}$, $\Sc_{5}$, $\Sc_{6}$ and  $\Sc_{7}$ at Layer~3; and so on. 
\end{definition}

\begin{definition} [\bf Network chain]    
By selecting one network set at each layer from a network tree, where the set selected at Layer~$r$ is partitioned from the set selected at Layer~$r-1$, for $r=2,3,\cdots$, then the selected network sets form a network chain.
One example of a network chain is   $\Sc_{1}\to \Sc_{3} \to \Sc_{7} \to \Sc_{15} \to\cdots $.
\end{definition}

\begin{definition} [\bf Network chain with good resilience]    
A network chain is considered to have good resilience if every network set it includes  has good resilience. For example, if all of $\Sc_{1}, \Sc_{3}, \Sc_{7} , \Sc_{15}, \cdots$  have good resilience,  then the  network chain    $\Sc_{1}\to \Sc_{3} \to \Sc_{7} \to \Sc_{15} \to\cdots $ is considered to have good resilience.
\end{definition}

\begin{theorem}  [Agreement and Termination]  \label{thm:RMVBAaTermination}
In $\OciorRMVBA$,  given $n\geq 3t+1$, every honest node eventually outputs a consistent value and terminates.     
\end{theorem}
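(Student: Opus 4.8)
The plan is to prove Agreement and Termination together by induction on the recursion tree, the key structural fact being that balanced partitioning preserves ``good resilience'' down at least one branch. First I would record the structural lemma: $\Sc_1$ has good resilience because $n \geq 3t+1$ gives $\networkfaultsizereal_1 \leq t < n/3 = |\Sc_1|/3$; and whenever $\Sc_\groupindex$ has good resilience and $|\Sc_\groupindex| > M$, at least one of $\Sc_{2\groupindex},\Sc_{2\groupindex+1}$ has good resilience, since otherwise both would carry at least a third of their own nodes as faulty, and summing the two bounds contradicts $\networkfaultsizereal_\groupindex < |\Sc_\groupindex|/3$. Hence from any good‑resilience $\Sc_\groupindex$ there is a descending network chain of good‑resilience sets ending in a base set of size $\leq M$.

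The core is then an induction on $|\Sc_\groupindex|$ proving: for every $\groupindex$ with $\Sc_\groupindex$ of good resilience, $\RecursiveMVBA[(\IDMVBA,\groupindex)]$ eventually terminates with all honest nodes of $\Sc_\groupindex$ outputting one common value that moreover satisfies $\Predicate$. The base case ($|\Sc_\groupindex| \leq M$) is immediate from Termination and Agreement of the black‑box $\IEMVBA$ on a network with fewer than a third faulty. For the inductive step, let $\setindex^{\star}$ index the good child $\Sc_{2\groupindex+\setindex^{\star}}$; by the induction hypothesis $\RecursiveMVBA[(\IDMVBA,2\groupindex+\setindex^{\star})]$ terminates with all its honest nodes holding a common valid $\MVBAOutputMsg$. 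I would chase this value through the pipeline of Algorithm~\ref{algm:OciorRMVBA}: (i) by Validity of $\SHMDM[(\IDMVBA,\groupindex,\setindex^{\star})]$ every honest node of $\Sc_\groupindex$ outputs $\MVBAOutputMsg$; (ii) feeding this common message into $\OciorRBA[(\IDMVBA,\groupindex,\setindex^{\star})]$, Validity of $\RBA$ makes every honest node deliver $\Vr_i=1$ and eventually output $\MVBAOutputMsg$; (iii) a quorum‑counting argument on the Ready--Finish--Confirm exchange, using $\networksize_\groupindex \geq 3\networkfaultsize_\groupindex+1$ so that any $(\networksize_\groupindex-\networkfaultsize_\groupindex)$‑quorum contains at least $\networkfaultsize_\groupindex+1$ honest nodes, shows every honest node eventually sets $\DRBCReadyindicator[\setindex^{\star}]=\DRBCFinishindicator[\setindex^{\star}]=\DRBCConfirmindicator[\setindex^{\star}]=1$, hence leaves the initial \textbf{wait} and enters the election loop; (iv) in the iteration $\electionoutput=\setindex^{\star}$, Biased Validity and Conditional Termination of $\ABBBA$ (at least $\networkfaultsize_\groupindex+1$ honest nodes eventually register $\abbainputB=1$) force every honest node to output $1$, Validity of the binary $\ABBA$ then gives $\ABAoutput=1$, and $\OciorRBA[(\IDMVBA,\groupindex,\setindex^{\star})]$ — already holding the message from step (i) — outputs $\MVBAOutputMsg$ with $\Predicate(\MVBAOutputMsg)=\true$, so the node outputs and terminates.

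To complete Termination I would rule out deadlock in a ``bad'' iteration: if an honest node ever reaches $\ABAoutput=1$ for some $\electionoutput$, then by Validity of $\ABBA$ some honest node input $1$, so by Biased Integrity of $\ABBBA$ some honest node had $\DRBCReadyindicator[\electionoutput]=1$ or $\DRBCFinishindicator[\electionoutput]=1$; tracing either case back, some honest node already delivered $\Vr_i=1$ from $\OciorRBA[(\IDMVBA,\groupindex,\electionoutput)]$, so by Totality of $\RBA$ every honest node eventually outputs from it, and since every honest node reaching this iteration passed the binary input $1$ into it, its Phase~3 completes — hence the \textbf{wait} at line~\ref{line:OciorRMVBAOciorRBAoutput} always returns. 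Combined with the fact that $\setindex^{\star}$ always yields a terminating iteration, the finite loop over $\electionoutput\in\{0,1\}$ always ends in an output. Agreement within $\Sc_\groupindex$ then follows because all honest nodes run the two iterations in the same order, Consistency of $\ABBA$ makes $\ABAoutput$ identical across honest nodes in each iteration, Consistency of $\OciorRBA[(\IDMVBA,\groupindex,\electionoutput)]$ makes its delivered value identical, and the $\Predicate$ test is deterministic, so all honest nodes terminate at the same iteration with the same value; recursing, this value is common. Instantiating the induction at $\groupindex=1$ (legitimate since $n\geq 3t+1$) gives the theorem.

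I expect the main obstacle to be step (iv): rigorously arguing that $\ABBBA$ in the good iteration outputs $1$ at \emph{every} honest node, since a node may invoke $\ABBBA[\ltuple\IDMVBA,\groupindex,\setindex^{\star},\networksize_\groupindex,\networkfaultsize_\groupindex\rtuple]$ with stale inputs $(0,0)$ before the Ready--Finish--Confirm cascade for $\setindex^{\star}$ reaches it. Making this precise requires exploiting monotonicity of the indicators and showing that enough honest nodes eventually deliver a $1$ to that $\ABBBA$ instance (equivalently, that the instance accepts strengthened inputs), so that the $\ABBACountA$/$\ABBACountB$ thresholds fire rather than $\ABBACountC$ — this asynchronous timing argument against the $\ABBBA$ counting rules is the delicate part of the whole proof.
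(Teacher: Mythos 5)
Your proposal is correct and follows essentially the same route as the paper: the lemma that balanced partitioning preserves good resilience down at least one branch, the resulting good-resilience network chain, a recursive/inductive argument along that chain, and chasing the common value through the $\SHMDM \to \OciorRBA \to$ Ready--Finish--Confirm $\to \ABBBA \to \ABBA$ pipeline, with Totality/Consistency of $\OciorRBA$ handling the non-favored iterations. Your explicit treatment of the stale-input timing issue for $\ABBBA$ and of deadlock in a ``bad'' iteration is, if anything, more careful than the paper's own argument, which asserts these steps at a higher level.
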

\begin{proof}
In our setting, $\Sc_\groupindex$ is partitioned into two disjoint sets $\Sc_{2\groupindex}$ and  $\Sc_{2\groupindex+1}$.  
From Lemma~\ref{lm:RMVBAgoodResGeneral},  if  $\Sc_\groupindex$ has good resilience, i.e.,  $|\Sc_{\groupindex} \cap \Fc| < \frac{|\Sc_{\groupindex}|}{3}$,  then at least one of the two sets, $\Sc_{2\groupindex}$ or  $\Sc_{2\groupindex+1}$, also has good resilience. 
From Lemma~\ref{lm:RMVBAgoodResNChain}, given $n\geq 3t+1$, there exists a network chain with good resilience.

 Let us  focus on a  network chain $\Sc_1\to \cdots \to \Sc_\groupindex \to \Sc_{2\groupindex+ \setindex} \to  \cdots$  with good resilience, for some $\setindex \in \{0,1\}$. 
From Lemma~\ref{lm:RMVBAgoodResOutTer}, it is true that  if   $\RecursiveMVBA[(\IDMVBA, 2\groupindex+\setindex)]$  outputs a consistent value at all honest nodes within  $\Sc_{2\groupindex+ \setindex}$ and terminates,   then  $\RecursiveMVBA[(\IDMVBA, \groupindex)]$ eventually outputs a consistent value at all honest nodes within $\Sc_\groupindex$ and terminates. 
Based on a recursive argument, and given that the last invoked $\RecursiveMVBA$  protocol eventually outputs a consistent value at all honest nodes within the last network set in the chain, it is concluded that 
every honest node within  $\Sc_1$ eventually outputs a  consistent value and terminates, given that $n\geq 3t+1$.    
\end{proof}

\begin{theorem}  [External Validity]  \label{thm:RMVBAaExternalvalidity}
In $\OciorRMVBA$,  if an honest node outputs a value $\MVBAInputMsg$, then $\Predicate(\MVBAInputMsg)=\true$.    
\end{theorem}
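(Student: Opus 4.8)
The plan is to enumerate every point in the protocol at which an honest node emits a final output and check predicate-validity case by case. Reading Algorithm~\ref{algm:OciorRMVBA}, a node running $\RecursiveMVBA[(\IDMVBA, \groupindex)]$ produces an output in exactly two situations: (i) in the base case $|\Sc_\groupindex| \leq M$ it returns the value produced by the inefficient sub-protocol $\IEMVBA[(\IDMVBA, \groupindex)]$; and (ii) in the general case it outputs a value $\MVBAOutputMsg$ at Line~\ref{line:OciorRMVBALoopEnd}, which is reached only after the explicit guard $\Predicate(\MVBAOutputMsg) \eqlog \true$ at Line~\ref{line:OciorRMVBAEvalidity} has passed. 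Case (ii) is immediate, and --- worth noting --- it does not rely on any correctness property of $\SHMDM$, $\OciorRBA$, $\ABBBA$ or $\ABBA$: predicate-validity of the emitted value is enforced by the local test regardless of what those sub-protocols deliver upward.

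For case (i), I would first record an auxiliary invariant: every honest node that invokes $\RecursiveMVBA[(\IDMVBA, \groupindex)]$ supplies an input that satisfies $\Predicate(\cdot) \eqlog \true$. This holds at the root $\groupindex = 1$ by the standing assumption on $\MVBA$ inputs in Definition~\ref{def:OciorMVBA}, and it propagates down the recursion because the recursive invocation at Line~\ref{line:OciorMVBAstarRecursivecall} forwards verbatim the same message $\MVBAInputMsg$ that entered under the guard ``upon receiving input $\MVBAInputMsg$, for $\Predicate(\MVBAInputMsg) \eqlog \true$''. A one-line induction on the recursion depth $\floor{\log \groupindex}+1$ closes the invariant. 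Given the invariant, all honest participants feed $\IEMVBA[(\IDMVBA, \groupindex)]$ predicate-valid inputs, so its own External Validity property (it is assumed to be a valid $\MVBA$ protocol and hence satisfies Definition~\ref{def:OciorMVBA}) forces its output to satisfy the predicate as well. Combining (i) and (ii) establishes the theorem, since $\OciorRMVBA$ is just $\RecursiveMVBA[(\IDMVBA, 1)]$.

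I do not anticipate a genuine obstacle: the statement is essentially a syntactic observation about where outputs occur in the pseudocode, and unlike Agreement/Termination it needs none of the sub-protocols' guarantees. The only points requiring a little care are (a) confirming that the ``upon receiving input'' clause is the sole entry point at which a message enters an honest node's copy of $\RecursiveMVBA$, so that no predicate-invalid message can slip in, and (b) noting that the recursion forwards messages without transforming them, so predicate-validity is literally preserved for the sub-instances rather than being something that must be re-derived.
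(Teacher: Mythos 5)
Your proof is correct and its main case is exactly the paper's argument: the paper's entire proof is the one-line observation that any output at Line~\ref{line:OciorRMVBALoopEnd} is guarded by the explicit test $\Predicate(\MVBAOutputMsg) \eqlog \true$ at Line~\ref{line:OciorRMVBAEvalidity}. Your additional case (i), covering the base-case return from $\IEMVBA$ via its own External Validity and the input-validity invariant down the recursion, is a sound completeness improvement that the paper's one-sentence proof silently omits.
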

\begin{proof}
 In $\OciorRMVBA$,  if an honest node outputs a value $\MVBAInputMsg$,  it has verified that $\Predicate(\MVBAInputMsg)=\true$ at Line~\ref{line:OciorRMVBAEvalidity} of Algorithm~\ref{algm:OciorRMVBA}.
\end{proof}

\begin{theorem}  [Communication and Round Complexities]  \label{thm:RMVBAComplexites}
The communication complexity of  $\OciorRMVBA$ is $O( n  |\MVBAInputMsg|\log n   +  n^2 \log \alphabetsize )$ bits,  while the round complexity of  $\OciorRMVBA$ is $O( \log n )$ rounds, given $n\geq 3t+1$.     
\end{theorem}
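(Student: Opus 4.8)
The plan is to bound both quantities by unrolling the recursion into layers and summing the cost of each recursive instance, exploiting the fact that the network sizes decay geometrically. First I would fix the bookkeeping: since $\Sc_\groupindex$ is split into $\Sc_{2\groupindex}$ and $\Sc_{2\groupindex+1}$ with $|\Sc_{2\groupindex}|=\floor{|\Sc_\groupindex|/2}$ and $|\Sc_{2\groupindex+1}|=\ceil{|\Sc_\groupindex|/2}$, the recursion reaches the base case $|\Sc_\groupindex|\le M$ after a depth $D=O(\log n)$; the instances at layer $r$ are exactly $\{\RecursiveMVBA[(\IDMVBA,\groupindex)]:2^{r-1}\le\groupindex<2^{r}\}$, their networks partition $\Sc_1$, and the largest of them has size $O(n/2^{r-1})$ (repeated balanced halving). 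By Line~\ref{line:OciorMVBAstarRecursivecall} each node participates in exactly one instance per layer, so the total communication is $\sum_{\groupindex}C(m_\groupindex)$, where $m_\groupindex:=|\Sc_\groupindex|$ and $C(m)$ is the communication generated inside a single instance on $m$ nodes together with the $\SHMDM$, $\OciorRBA$, $\ABBBA$, $\ABBA$, $\READY/\FINISH/\CONFIRM$, and (at a leaf) $\IEMVBA$ sub-protocols it invokes locally — but not its recursive children, which are counted in their own layers.

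Next I would bound $C(m)$. The two message-heavy ingredients are: (i) $\SHMDM[(\IDMVBA,\groupindex,\cdot)]$, where $\Theta(m)$ senders each $\ECCEnc$-encode the input into a single symbol of $O(|\MVBAInputMsg|/m+\log\alphabetsize)$ bits and multicast it to $O(m)$ nodes, costing $O(m|\MVBAInputMsg|+m^2\log\alphabetsize)$ bits; and (ii) the at most two invocations (one per $\setindexnew\in\{0,1\}$) of $\OciorRBA[(\IDMVBA,\groupindex,\cdot)]$, each of which costs $O(m|\MVBAInputMsg|+m^2\log\alphabetsize)$ bits by the $\COOL$/$\OciorCOOL$ analysis — here I would check that the online-error-correction loop, the $\READY$ amplification, and the Phase-1/2 symbol exchanges transmit only $O(m)$ coded or corrected symbols per node (the $\OEC$ retrials are local decode attempts generating no further messages). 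The remaining parts — $\ABBBA$, the binary $\ABBA$, and the $\READY/\FINISH/\CONFIRM$ exchanges — each produce $O(m^2)$ messages carrying $O(1)$ bits of useful payload (a constant number of bits plus the protocol identifier, the latter charged as $O(1)$ as is standard), and the $\electionoutput\in\{0,1\}$ loop repeats $O(1)$ times, so their joint contribution is $O(m^2)$ bits; leaf invocations of $\IEMVBA$ cost $O(|\MVBAInputMsg|)$ since $m=O(1)$ there. Hence $C(m)=O(m|\MVBAInputMsg|+m^2\log\alphabetsize)$.

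Then I would carry out the summation. For the message-dependent part, since the networks at each layer partition the $n$ nodes, $\sum_{r=1}^{D}\sum_{\groupindex\text{ at layer }r}m_\groupindex|\MVBAInputMsg|=\sum_{r=1}^{D}n|\MVBAInputMsg|=O(n|\MVBAInputMsg|\log n)$. For the $\alphabetsize$-dependent part, at layer $r$ one has $\sum_{\groupindex\text{ at layer }r}m_\groupindex^{2}\le\bigl(\max_{\groupindex}m_\groupindex\bigr)\bigl(\sum_{\groupindex}m_\groupindex\bigr)=O\bigl((n/2^{r-1})\cdot n\bigr)$, and summing the resulting geometric series over $r$ gives $\sum_{r\ge1}O(n^{2}/2^{r-1})=O(n^{2})$, so the total is $O(n^{2}\log\alphabetsize)$; the residual $O(m^2)$ terms sum the same way to $O(n^2)$. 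Adding the two parts yields the claimed $O(n|\MVBAInputMsg|\log n+n^{2}\log\alphabetsize)$ bits, using $n\ge 3t+1$ only to guarantee well-defined behaviour at each layer.

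Finally, for the round complexity I would show that each instance contributes $O(1)$ expected rounds: $\SHMDM$ is one message hop followed by local decoding, $\OciorRBA$ consists of three phases (Algorithm~\ref{algm:OciorRBA} and its continuation) and hence $O(1)$ rounds, the $\READY/\FINISH/\CONFIRM$ exchange is $O(1)$ rounds, and the $\electionoutput$-loop runs $O(1)$ times with each $\ABBBA$ taking $O(1)$ rounds and each $\ABBA$ terminating in $O(1)$ expected rounds (hence $O(1)$ expected common coins). Since a parent instance waits at Line~\ref{line:OciorMVBAstarRecursivecall} for its child, termination propagates sequentially up a network chain of good resilience, which — by the proof of Theorem~\ref{thm:RMVBAaTermination}, given $n\ge 3t+1$ — exists and has depth $O(\log n)$; composing gives $O(\log n)$ expected rounds (and likewise $O(\log n)$ expected common coins). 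The step I expect to be the main obstacle is the per-instance bound on $\OciorRBA$: one must transfer the $\COOL$/$\OciorCOOL$ communication guarantee into the asynchronous $\RBA$ setting and verify that the error-correction retrials, the $\READY$ amplification, and the two $\setindexnew$-indexed invocations inflate the cost by only a constant factor. The one conceptual point is the asymmetry in the summation — the $m^2$ terms sum over the recursion without a $\log n$ penalty (geometric series), while the $m|\MVBAInputMsg|$ terms, equal at every layer, accumulate the $\log n$ factor — which is exactly why the two parts of the stated bound have different shapes.
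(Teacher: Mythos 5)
Your proposal is correct and follows essentially the same route as the paper: the paper sets up the recursion $\CCbits(m)=\beta_1 m|\MVBAInputMsg|+\beta_2 m^2\log\alphabetsize+\CCbits(\lfloor m/2\rfloor)+\CCbits(\lceil m/2\rceil)$ and unrolls it, which is exactly your layer-by-layer summation (linear terms contribute $n|\MVBAInputMsg|$ per layer over $O(\log n)$ layers; quadratic terms form a geometric series summing to $O(n^2\log\alphabetsize)$). Your per-instance accounting of $\SHMDM$, $\OciorRBA$, and the binary sub-protocols is in fact more detailed than the paper's, which simply asserts the recursion with unspecified constants.
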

\begin{proof}
The protocol $\RecursiveMVBA[(\IDMVBA, \groupindex)]$ is operated over $\Sc_\groupindex$ with a network size of $\networksize_{\groupindex}:=  |\Sc_{\groupindex}|$.
The total communication complexity  in bits of the protocol $\RecursiveMVBA[(\IDMVBA, \groupindex)]$, defined by $\CCbits (\networksize_{\groupindex})$ is expressed as 
\begin{equation*}
 \CCbits (\networksize_{\groupindex})  = 
 \begin{cases}
O( |\MVBAInputMsg| )    &\text{if  $m \leq M$}\\
   \beta_1  \networksize_{\groupindex}  |\MVBAInputMsg|  +  \beta_2 \networksize_{\groupindex}^2 \log \alphabetsize  +   \CCbits \bigl(\floor{\frac{\networksize_{\groupindex}}{2}}\bigr)+  \CCbits \bigl(\ceil{\frac{\networksize_{\groupindex}}{2}}\bigr)  &\text{otherwise}
\end{cases}
\end{equation*}
where $M$ is a finite constant;    $\beta_1$ and $\beta_2$ are  finite constants; and $|\MVBAInputMsg|$ denotes the size of each input message. It is worth mentioning that each coded symbol transmitted in $\COOLDRBC$ protocol carries at least $\log \alphabetsize $ bits due to  the  alphabet size   of   error correction code.  
We ignore the cost of the index $\Round_\groupindex = \floor{\log \groupindex} + 1$   in transmitted messages (using $\log \log n$ bits), as it can be redesigned such that the total indexing cost related to $\Round_\groupindex$ becomes negligible compared to the cost of coded symbols. 
When $n=2^J M$ for some $J$, then  the total communication complexity in bits of  the proposed $\OciorRMVBA$ is 
\begin{align*}
 \CCbits (n)   = &   \beta_1  n  |\MVBAInputMsg|  +  \beta_2 n^2 \log \alphabetsize+   2\CCbits \bigl(\frac{n}{2}\bigr)   \\   
           =   & \beta_1  n  |\MVBAInputMsg|  +  \beta_2 n^2  \log \alphabetsize+   2\bigl(\beta_1 \cdot  \frac{n}{2} \cdot |\MVBAInputMsg|  +  \beta_2 \frac{n^2}{4} \log \alphabetsize +   2\CCbits \bigl(\frac{n}{2^2}\bigr)\bigr)\\
           =   & \beta_1  n  |\MVBAInputMsg|  +  \beta_2 n^2 \log \alphabetsize+ \beta_1 n |\MVBAInputMsg|  +  \beta_2 \frac{n^2}{2}  \log \alphabetsize +   2^2 \cdot \CCbits \bigl(\frac{n}{2^2}\bigr) \\
              &  \vdots \\
           =   & \beta_1  n  |\MVBAInputMsg|  +  \beta_2 n^2 \log \alphabetsize + \beta_1 n |\MVBAInputMsg|  +  \beta_2 \frac{n^2}{2}  \log \alphabetsize + \cdots +  \beta_1 n |\MVBAInputMsg|  +  \beta_2 \frac{n^2}{2^{J-1}}  \log \alphabetsize  +    2^J \cdot \CCbits \bigl(\frac{n}{2^J}\bigr) \\
           =   &J  \beta_1  n  |\MVBAInputMsg|  +  \beta_2 n^2 \log \alphabetsize \  \cdot (1 +   \frac{1}{2} + \cdots +   \frac{1}{2^{J-1}}) +    2^J \cdot \CCbits \bigl(M\bigr) \\
          =    & O(   n  |\MVBAInputMsg|\log n    +   n^2 \log \alphabetsize ). 
\end{align*}
 
 The round complexity of  $\OciorRMVBA$ is $O( \log n )$ rounds.      
\end{proof}

\begin{lemma}       \label{lm:RMVBAgoodResGeneral}
 If  $\Sc_\groupindex$ has good resilience, i.e.,  $|\Sc_{\groupindex} \cap \Fc| < \frac{|\Sc_{\groupindex}|}{3}$,  then at least one of the two sets, $\Sc_{2\groupindex}$ or  $\Sc_{2\groupindex+1}$, also has good resilience.
\end{lemma}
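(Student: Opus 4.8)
The plan is to argue by contradiction, exploiting the fact that the dishonest nodes in $\Sc_\groupindex$ are exactly partitioned between the two children $\Sc_{2\groupindex}$ and $\Sc_{2\groupindex+1}$. First I would record the integer reformulation of ``good resilience'' already noted after Definition (Good resilience): for an integer count $t_\groupindex := |\Sc_\groupindex \cap \Fc|$ and an integer size $|\Sc_\groupindex|$, the condition $t_\groupindex < |\Sc_\groupindex|/3$ is equivalent to $3 t_\groupindex < |\Sc_\groupindex|$, and hence its negation is exactly $3 t_\groupindex \geq |\Sc_\groupindex|$.

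Next I would set $t_{2\groupindex} := |\Sc_{2\groupindex} \cap \Fc|$ and $t_{2\groupindex+1} := |\Sc_{2\groupindex+1} \cap \Fc|$. Since $\Sc_{2\groupindex}$ and $\Sc_{2\groupindex+1}$ are disjoint and their union is $\Sc_\groupindex$, the dishonest sets partition accordingly, so $t_{2\groupindex} + t_{2\groupindex+1} = t_\groupindex$; similarly $|\Sc_{2\groupindex}| + |\Sc_{2\groupindex+1}| = |\Sc_\groupindex|$ (this is just the defining property of the partition, $|\Sc_{2\groupindex}| = \floor{|\Sc_{\groupindex}|/2}$ and $|\Sc_{2\groupindex+1}| = \ceil{|\Sc_{\groupindex}|/2}$).

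Now suppose, for contradiction, that neither child has good resilience. Then $3 t_{2\groupindex} \geq |\Sc_{2\groupindex}|$ and $3 t_{2\groupindex+1} \geq |\Sc_{2\groupindex+1}|$. Adding these two inequalities and substituting the partition identities gives $3 t_\groupindex = 3(t_{2\groupindex} + t_{2\groupindex+1}) \geq |\Sc_{2\groupindex}| + |\Sc_{2\groupindex+1}| = |\Sc_\groupindex|$, i.e., $t_\groupindex \geq |\Sc_\groupindex|/3$. This contradicts the hypothesis that $\Sc_\groupindex$ has good resilience. Hence at least one of $\Sc_{2\groupindex}$, $\Sc_{2\groupindex+1}$ must have good resilience.

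There is essentially no obstacle here: the only subtlety is making the ``$<$ vs.\ $\le$'' bookkeeping precise (the passage between $t < m/3$ and $3t < m$ for integers), and being careful that the partition is genuinely disjoint and exhaustive so that both the dishonest counts and the sizes add exactly. Once those are spelled out, the contradiction is a one-line addition of inequalities. I would keep the write-up short and purely arithmetic.
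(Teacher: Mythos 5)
Your proof is correct and matches the intent of the paper's proof, which in fact consists only of a one-line restatement of the claim (that one of the two conditions $|\Sc_{2\groupindex}\cap\Fc| < |\Sc_{2\groupindex}|/3$ or $|\Sc_{2\groupindex+1}\cap\Fc| < |\Sc_{2\groupindex+1}|/3$ must hold) without spelling out the arithmetic. Your contradiction argument---adding the two negated inequalities and using that both the dishonest counts and the set sizes add exactly across the disjoint partition---is precisely the justification the paper leaves implicit, so there is nothing to change.
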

\begin{proof}
If  $\Sc_\groupindex$ has good resilience, i.e.,  $|\Sc_{\groupindex} \cap \Fc| < \frac{|\Sc_{\groupindex}|}{3}$,  then    at least one of the following two conditions is satisfied:  $|\Sc_{2\groupindex} \cap \Fc|  < \frac{|\Sc_{2\groupindex}|}{3}$ or $|\Sc_{2\groupindex+1} \cap \Fc|  < \frac{|\Sc_{2\groupindex+1}|}{3}$. 
\end{proof}

\begin{lemma}       \label{lm:RMVBAgoodResNChain}
Given $n\geq 3t+1$, there exists a network chain with good resilience. 
\end{lemma}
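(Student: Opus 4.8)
The plan is to build the desired chain greedily, one layer at a time, starting from the root $\Sc_1$ and repeatedly invoking Lemma~\ref{lm:RMVBAgoodResGeneral}. First I would dispatch the base case. Since $n\geq 3t+1$, we have $t\leq (n-1)/3 < n/3$, and because at most $t$ nodes are dishonest, $|\Sc_1\cap\Fc|\leq t< n/3 = |\Sc_1|/3$. Hence $\Sc_1$ has good resilience, and it serves as the first set of the chain.

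Next I would set up the induction on the layer index. Suppose the chain has been built up to a set $\Sc_{\groupindex_r}$ at Layer~$r$ with good resilience. Since $\Sc_{\groupindex_r}$ is partitioned into the two disjoint sets $\Sc_{2\groupindex_r}$ and $\Sc_{2\groupindex_r+1}$, Lemma~\ref{lm:RMVBAgoodResGeneral} guarantees that at least one of these two children has good resilience; pick one such child and call its index $\groupindex_{r+1}$, so $\groupindex_{r+1}\in\{2\groupindex_r,\,2\groupindex_r+1\}$. By construction $\Sc_{\groupindex_{r+1}}$ is obtained from $\Sc_{\groupindex_r}$ by the partition step, which is exactly a valid edge of the network tree, so appending $\Sc_{\groupindex_{r+1}}$ to the sequence preserves the defining property of a network chain, and $\Sc_{\groupindex_{r+1}}$ has good resilience. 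Iterating this selection — and stopping once a chosen set has size at most the preset constant $M$, at which point the recursion of $\OciorRMVBA$ halts — yields a network chain $\Sc_{\groupindex_1}\to\Sc_{\groupindex_2}\to\cdots$ in which every set has good resilience, as claimed.

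I do not expect a real obstacle here: the argument is a direct induction, with Lemma~\ref{lm:RMVBAgoodResGeneral} supplying the inductive step and the hypothesis $n\geq 3t+1$ supplying the base case. The only points requiring a little care are (i) translating $n\geq 3t+1$ into the strict inequality $|\Sc_1\cap\Fc| < |\Sc_1|/3$ that good resilience of the root demands, and (ii) checking that the greedy choice at each layer corresponds to a genuine parent–child edge of the network tree, so that the sequence of chosen sets is indeed a network chain in the sense of the definition rather than an arbitrary collection of sets.
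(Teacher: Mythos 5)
Your proposal is correct and follows essentially the same argument as the paper: establish good resilience of $\Sc_1$ from $n\geq 3t+1$, then greedily descend the network tree using Lemma~\ref{lm:RMVBAgoodResGeneral} to select a child with good resilience at each layer. Your write-up is slightly more explicit about the base-case inequality and the induction bookkeeping, but the underlying construction is identical.
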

\begin{proof}
Given $n\geq 3t+1$, $\Sc_1$ has good resilience.  $\Sc_1$ is partitioned into two disjoint sets, $\Sc_{2}$ and  $\Sc_{3}$.  From Lemma~\ref{lm:RMVBAgoodResGeneral}, at least one of the  sets $\Sc_{2}$ and  $\Sc_{3}$ has good resilience. Let us assume that $\Sc_{3}$ has good resilience and   include  it  into a network chain.  Next, $\Sc_3$ is partitioned into two disjoint sets $\Sc_{6}$ and  $\Sc_{7}$.  Again,  from Lemma~\ref{lm:RMVBAgoodResGeneral}, at least one of the  sets $\Sc_{6}$ and  $\Sc_{7}$ has good resilience.  Let us assume that $\Sc_{7}$ has good resilience and   include  it  into the network chain. By repeating this process, we construct  a network chain  $\Sc_{1}\to \Sc_{3} \to \Sc_{7} \to  \cdots $ such that all network sets it includes have good resilience. This  implies that the selected network chain has good resilience. 
\end{proof}

Without loss of generality, we will focus on the network set $\Sc_\groupindex$, such that all other protocols $\RecursiveMVBA[(\IDMVBA, \groupindex')]$ invoking  $\RecursiveMVBA[(\IDMVBA, \groupindex)]$ haven't outputted a value at any honest node yet, where   the network sets  $\Sc_{\groupindex'}$  and  $\Sc_{\groupindex}$ are within the same network chain with  good resilience and  $\groupindex'<\groupindex$.

\begin{lemma}       \label{lm:RMVBAgoodResOutputValuesALL}
Let us assume that $\Sc_\groupindex$ has good resilience.  If one honest node  within $\Sc_\groupindex$ outputs a value $\Me$ from $\RecursiveMVBA[(\IDMVBA, \groupindex)]$, then  all other honest nodes within $\Sc_\groupindex$ eventually outputs a consistent value $\Me$ from $\RecursiveMVBA[(\IDMVBA, \groupindex)]$. 
\end{lemma}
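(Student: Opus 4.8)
The plan is to case-split on $|\Sc_\groupindex|$. If $|\Sc_\groupindex|\le M$, the only place an honest node can produce an output in $\RecursiveMVBA[(\IDMVBA,\groupindex)]$ is the call $\IEMVBA[(\IDMVBA,\groupindex)]$; since $\Sc_\groupindex$ has good resilience, the Agreement and Termination clauses of $\MVBA$ (Definition~\ref{def:OciorMVBA}) hold for this instance, so every honest node in $\Sc_\groupindex$ eventually outputs the same $\Me$ and terminates. So assume $|\Sc_\groupindex|>M$; then the honest node $P$ that output $\Me$ did so inside the for-loop of Algorithm~\ref{algm:OciorRMVBA}, at some iteration $\electionoutput^\star\in\{0,1\}$ at which $\ABBA[\ltuple\IDMVBA,\groupindex,\electionoutput^\star,\networksize,\networkfaultsize\rtuple]$ returned $1$ at $P$ and $\OciorRBA[(\IDMVBA,\groupindex,\electionoutput^\star)]$ returned $\Me$ at $P$ with $\Predicate(\Me)=\true$. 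Every sub-protocol used here ($\ABBBA$, $\ABBA$, $\OciorRBA$, and the READY/FINISH/CONFIRM exchange of Algorithm~\ref{algm:OciorRMVBA}) runs on $\Sc_\groupindex$, which has good resilience, so all of their stated guarantees are available.

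First I would prove \emph{consistency among nodes that output}: let $Q$ be any honest node in $\Sc_\groupindex$ that outputs, necessarily at some iteration $\electionoutput'$, and inspecting the loop body this forces $\ABBA$ at iteration $\electionoutput'$ to have returned $1$ at $Q$. If $\electionoutput'<\electionoutput^\star$, Agreement of $\ABBA$ makes the iteration-$\electionoutput'$ instance return $1$ at $P$ too, so $P$ ran $\OciorRBA[(\IDMVBA,\groupindex,\electionoutput')]$ and, not having terminated there, obtained a value failing $\Predicate$; by Consistency of $\RBA$ node $Q$ obtains the same value, contradicting $Q$ terminating at $\electionoutput'$. Hence $\electionoutput'=\electionoutput^\star$, both nodes run $\OciorRBA[(\IDMVBA,\groupindex,\electionoutput^\star)]$, and Consistency of $\RBA$ forces $Q$'s output to equal $\Me$. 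Next I would prove \emph{every honest node outputs}, by the chain: (i) every honest node enters the for-loop --- $P$ did so only after $\networksize-\networkfaultsize$ FINISH messages for some index $\setindexnew^\star$, which by the FINISH/READY conditions forces at least one honest node to have had $\OciorRBA[(\IDMVBA,\groupindex,\setindexnew^\star)]$ deliver $\Vr_i=1$, and the totality of $\OciorRBA$'s delivery together with the READY/FINISH relay then drives every honest node's $\DRBCConfirmindicator[\setindexnew^\star]$ to $1$; (ii) no honest node stalls at an iteration $\electionoutput\le\electionoutput^\star$ --- once $P$ has cleared that iteration a quorum of honest nodes has supplied inputs to the iteration-$\electionoutput$ instances of $\ABBBA$ and $\ABBA$ (and the $\ABBBA$ input condition holds because any honest node with $\DRBCFinishindicator[\electionoutput]=1$ has seen $\networksize-\networkfaultsize$ READY messages, at least $\networkfaultsize+1$ from honest nodes that set $\DRBCReadyindicator[\electionoutput]=1$), so Conditional Termination / Termination fire for every honest node, and by the consistency argument each honest node either sees $\ABBA$-output $0$ or sees $1$ but a predicate-failing $\OciorRBA$ output, and advances; (iii) at iteration $\electionoutput^\star$, Agreement of $\ABBA$ gives every honest node output $1$, so each feeds $1$ into $\OciorRBA[(\IDMVBA,\groupindex,\electionoutput^\star)]$, and since $P$ already output $\Me$ there, Totality and Consistency of $\RBA$ make every honest node output $\Me$ and, as $\Predicate(\Me)=\true$, terminate.

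The main obstacle is the liveness chain (i)--(ii): proving that no honest node can block --- neither at the wait for a CONFIRM indicator before entering the loop, nor at an $\ABBBA$, $\ABBA$, or $\OciorRBA$ call in an earlier iteration --- while at the same time ruling out two honest nodes terminating at different iterations with different values. Both rest on the single principle that $P$'s having reached and cleared a given step certifies that a sufficient honest quorum has already engaged the corresponding sub-protocol, so that sub-protocol's Termination/Totality guarantee (valid because $\Sc_\groupindex$ has good resilience) fires for every honest node; the induction over iterations is immediate since the loop ranges only over $\electionoutput\in\{0,1\}$.
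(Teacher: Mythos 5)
Your proof follows the same essential route as the paper's: identify the iteration $\electionoutput^{\star}$ at which the outputting node saw $\ABBA[\ltuple \IDMVBA, \groupindex, \electionoutput^{\star}, \networksize,  \networkfaultsize\rtuple]$ return $1$, invoke Agreement of $\ABBA$ to conclude every honest node in $\Sc_\groupindex$ obtains $1$ from that same instance, and then apply the Consistency/Totality property of $\OciorRBA$ conditioned on $\ABBA$ outputting $1$ (the paper's Lemma~\ref{lm:RMVBAb1ConsistantOutputs}) to get a common output $\Me$. The difference is one of thoroughness: the paper's proof is two sentences and silently assumes that every honest node actually reaches and completes the iteration-$\electionoutput^{\star}$ instance of $\ABBA$, whereas you explicitly handle the $|\Sc_\groupindex|\le M$ base case via $\IEMVBA$, rule out two honest nodes terminating at different loop iterations, and supply a liveness chain (every honest node clears the $\DRBCConfirmindicator$ wait, no node stalls at an earlier iteration). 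That added rigor is valuable, but one link in your chain (i) does not match the protocol as written: Algorithm~\ref{algm:OciorRMVBA} has no amplification rule for its $\READY$/$\FINISH$ messages (a node sends $\FINISH$ only after receiving $\networksize-\networkfaultsize$ $\READY$ messages, never upon seeing other $\FINISH$ messages), and the $\Vr_i$-delivery of $\OciorRBA$ is not shown to be total, so ``the totality of $\OciorRBA$'s delivery together with the READY/FINISH relay'' is asserting machinery that would need its own proof. Since the paper's own proof does not engage with this liveness question at all, your argument is at least as complete as the original; just be aware that step (i) is the one place where the justification you give does not yet follow from the stated code.
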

\begin{proof}
Given that $\Sc_\groupindex$ has good resilience, if one honest node  within $\Sc_\groupindex$ outputs a value $\Me$ from $\RecursiveMVBA[(\IDMVBA, \groupindex)]$, then all hones nodes within  $\Sc_\groupindex$ must have output   $1\gets \ABBA[\ltuple \IDMVBA, \groupindex, \electionoutput, \networksize,  \networkfaultsize\rtuple](\ABBAoutput)$  in Line~\ref{line:OciorRMVBAABAoutput}  of Algorithm~\ref{algm:OciorRMVBA}, for some $\electionoutput \in \{0,1\}$. Then, from Lemma~\ref{lm:RMVBAb1ConsistantOutputs} (Consistency and Totality properties of $\OciorRBA$),  all hones nodes within  $\Sc_\groupindex$  eventually  output the same value $\Me$ from $\OciorRBA[(\IDMVBA, \groupindex, \electionoutput)]$ in Line~\ref{line:OciorRMVBAOciorRBAoutput}  of Algorithm~\ref{algm:OciorRMVBA}.
\end{proof}

\begin{lemma}       \label{lm:RMVBAgoodResOutTer}
Let us assume that $\Sc_1\to \cdots \to \Sc_\groupindex \to \Sc_{2\groupindex+ \setindex} \to  \cdots$  forms a network chain with good resilience, for some $\setindex \in \{0,1\}$.  If   $\RecursiveMVBA[(\IDMVBA, 2\groupindex+\setindex)]$  outputs a consistent value at all honest nodes within  $\Sc_{2\groupindex+ \setindex}$ and terminates,   then  $\RecursiveMVBA[(\IDMVBA, \groupindex)]$ eventually outputs a consistent value at all honest nodes within $\Sc_\groupindex$ and terminates. 
\end{lemma}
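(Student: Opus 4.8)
The plan is to follow the good branch $\Sc_{2\groupindex+\setindex}$ of $\RecursiveMVBA[(\IDMVBA, \groupindex)]$ through its pipeline $\SHMDM \to \OciorRBA \to (\text{Ready--Finish--Confirm}) \to$ the ordered-election loop, and show that this branch drives every honest node in $\Sc_\groupindex$ to a common output. First I would note that, by hypothesis the chain has good resilience, so in particular $\Sc_\groupindex$ and $\Sc_{2\groupindex+\setindex}$ both have good resilience; hence $\Sc_{2\groupindex+\setindex}$ contains at least $|\Sc_{2\groupindex+\setindex}|-\networkfaultsizestar \geq 2\networkfaultsizestar+1 = \kstar+\networkfaultsizestar$ honest nodes, and $\Sc_\groupindex$ contains at least $\networksize_\groupindex-\networkfaultsize_\groupindex$ honest nodes. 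By hypothesis every honest node of $\Sc_{2\groupindex+\setindex}$ outputs the same message $\Me_\setindex$ from $\RecursiveMVBA[(\IDMVBA, 2\groupindex+\setindex)]$, with $\Predicate(\Me_\setindex)=\true$ by Theorem~\ref{thm:RMVBAaExternalvalidity} applied to that instance; hence on Line~\ref{line:OciorRMVBAInputSHMDA} each of them feeds $\Me_\setindex$ into $\SHMDM[(\IDMVBA, \groupindex, \setindex)]$ as a sender, while every node of $\Sc_\groupindex\setminus\Sc_{2\groupindex+\setindex}$ has already activated that instance (with input $\bot$). Since all honest senders input $\Me_\setindex$, each honest receiver eventually collects $\geq \kstar+\networkfaultsizestar$ matching $\INITIAL$ symbols, and the online-error-correction re-encoding check then forces the decoded value to equal $\Me_\setindex$; the honest senders output $\Me_\setindex$ directly. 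Thus $\SHMDM[(\IDMVBA, \groupindex, \setindex)]$ outputs $\Me_\setindex$ at every honest node of $\Sc_\groupindex$.

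Next I would push $\Me_\setindex$ through $\OciorRBA[(\IDMVBA, \groupindex, \setindex)]$: by Line~\ref{line:OciorRMVBAInputRBA} every honest node of $\Sc_\groupindex$ inputs the same message $\Me_\setindex$ into it, so by the Validity property of $\OciorRBA$ (tracing the mechanism: Phase~1 gives $\Ry_i^{[1]}=1$, Phase~2 gives $\SIPhtwo=1$ and $\Ry_i^{[2]}=1$, hence $|\Ss_1^{[2]}|\geq \networksize_\groupindex-\networkfaultsize_\groupindex$) every honest node delivers $\Vr_i=1$ on Line~\ref{line:RBAVrideliver}, receives $2\networkfaultsize_\groupindex+1$ matching $\READY$ messages, and outputs $\Me_\setindex$ from this instance. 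Consequently every honest node sets $\DRBCReadyindicator[\setindex]\gets 1$ and broadcasts $\READY$; because $\Sc_\groupindex$ has $\geq\networksize_\groupindex-\networkfaultsize_\groupindex$ honest nodes, every honest node then sets $\DRBCFinishindicator[\setindex]\gets1$, broadcasts $\FINISH$, and sets $\DRBCConfirmindicator[\setindex]\gets1$, so every honest node passes the $\wait$ of Algorithm~\ref{algm:OciorRMVBA} and enters the ordered-election loop. (Consistency of the $\OciorRBA$ instances themselves is supplied by Lemma~\ref{lm:RMVBAb1ConsistantOutputs}.)

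Then I would argue that the loop terminates. It cycles $\electionoutput$ through $0,1,0,1,\ldots$; in every pass $\ABBA$ terminates (it is asynchronous binary $\BA$ with a common coin), $\ABBBA$ terminates by its Conditional Termination (whose hypothesis is maintained because an honest node inputs $\abbainputB=1$ only after receiving $\networksize_\groupindex-\networkfaultsize_\groupindex$ $\READY$ messages for that index, which forces $\geq\networksize_\groupindex-2\networkfaultsize_\groupindex\geq\networkfaultsize_\groupindex+1$ honest nodes to have set the corresponding $\DRBCReadyindicator$), and whenever $\ABBA$ outputs $1$ the $\wait$ on Line~\ref{line:OciorRMVBAOciorRBAoutput} returns --- for $\electionoutput=\setindex$ this is immediate from the previous paragraph, and for $\electionoutput=\bar{\setindex}$, Biased Integrity of $\ABBBA$ together with the integrity of $\ABBA$ shows some honest node has already delivered $\Vr=1$ from $\OciorRBA[(\IDMVBA, \groupindex, \bar{\setindex})]$, so Totality of $\OciorRBA$ (Lemma~\ref{lm:RMVBAb1ConsistantOutputs}) makes it output everywhere. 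Hence, unless some honest node has already output and terminated, the loop eventually reaches a pass with $\electionoutput=\setindex$ that begins after every honest node has set $\DRBCReadyindicator[\setindex]=\DRBCFinishindicator[\setindex]=1$; in that pass every honest node inputs $(1,1)$ to $\ABBBA$, so $\ABBBA$ outputs $1$ everywhere, $\ABBA$ outputs $1$ everywhere by Validity, $\OciorRBA[(\IDMVBA, \groupindex, \setindex)]$ outputs $\Me_\setindex$ everywhere, $\Predicate(\Me_\setindex)=\true$, and so every honest node outputs $\Me_\setindex$ on Line~\ref{line:OciorRMVBALoopEnd} and terminates. In either case one honest node of $\Sc_\groupindex$ eventually outputs, and then Lemma~\ref{lm:RMVBAgoodResOutputValuesALL} (applicable since $\Sc_\groupindex$ has good resilience) gives that every honest node of $\Sc_\groupindex$ eventually outputs the same value and terminates --- the claim.

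The step I expect to be the main obstacle is proving that the election loop cannot get permanently stuck before reaching this ``clean'' $\electionoutput=\setindex$ pass: one must verify that the Conditional-Termination hypothesis of $\ABBBA$ is met in \emph{every} pass despite the asynchronous race between an honest node entering a pass and its $\DRBCReadyindicator/\DRBCFinishindicator$ flags flipping to $1$, and that a $1$-output of $\ABBA$ on the ``wrong'' branch $\bar{\setindex}$ never blocks the subsequent $\OciorRBA$ wait. This is exactly where the Ready--Finish--Confirm quorum-intersection argument and the $\BUA$/$\HMDM$-style guarantees that $\OciorRBA$ inherits from $\OciorCOOL$ do the real work, and I would isolate it as a separate sub-lemma rather than carry it out inline.
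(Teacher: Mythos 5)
Your proof is correct and follows essentially the same route as the paper's: feed the consistent output of the good sub-instance through $\SHMDM$ and $\OciorRBA$, conclude that all honest nodes in $\Sc_\groupindex$ set the Ready/Finish/Confirm indicators for index $\setindex$ and enter the election loop, and then use the biased validity/integrity of $\ABBBA$ together with Lemma~\ref{lm:RMVBAb1ConsistantOutputs} (for a possible $1$-output on the $\bar{\setindex}$ branch) and Lemma~\ref{lm:RMVBAgoodResOutputValuesALL} to conclude a common output and termination. The only nits are that the loop in Algorithm~\ref{algm:OciorRMVBA} is a single ordered pass over $\{0,1\}$ rather than a repeating cycle $0,1,0,1,\ldots$ (which does not affect your argument, since the pass $\electionoutput=\setindex$ is still reached), and that the quorum-intersection justification of $\ABBBA$'s termination precondition, which you rightly flag as the delicate step, is treated no less rigorously by you than by the paper, which simply asserts the loop behavior case by case.
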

\begin{proof}
Assume that $\Sc_\groupindex$ and  $\Sc_{2\groupindex+ \setindex}$  have good resilience. 
If   $\RecursiveMVBA[(\IDMVBA, 2\groupindex+\setindex)]$ outputs a consistent value $\Me$ at all honest nodes within  $\Sc_{2\groupindex+ \setindex}$, then eventually  $\SHMDM[(\IDMVBA, \groupindex, \setindex)]$ outputs a consistent value $\Me$ at all honest nodes within  $\Sc_{\groupindex}$   (see Lines~\ref{line:OciorRMVBAInputSHMDACond} and \ref{line:OciorRMVBAInputSHMDA}   of Algorithm~\ref{algm:OciorRMVBA}).   
Consequently,   $\OciorRBA[(\IDMVBA, \groupindex, \setindex)]$     eventually  outputs the same value $\Me$ at    all hones nodes within  $\Sc_\groupindex$ (See Lines~\ref{line:OciorRMVBAInputRBACond} and \ref{line:OciorRMVBAInputRBA}  of Algorithm~\ref{algm:OciorRMVBA}).  
Therefore, all honest nodes within  $\Sc_{\groupindex}$ eventually set $\DRBCReadyindicator[\setindex] \gets 1, \DRBCFinishindicator[\setindex] \gets 1, \DRBCConfirmindicator[\setindex] \gets 1$. 
In this case, all honest nodes within  $\Sc_{\groupindex}$ eventually go to Line~\ref{line:OciorRMVBALoopCond} and execute the steps in Lines~\ref{line:OciorRMVBALoopCond}-\ref{line:OciorRMVBALoopEnd} of  Algorithm~\ref{algm:OciorRMVBA}.  Based on the result of   Lemma~\ref{lm:RMVBAgoodResOutputValuesALL},  if  $\setindex =0$, then  all honest nodes within  $\Sc_{\groupindex}$ eventually output a consistent value $\Me$. 
For the case where  $\setindex =1$, if  $1\gets \ABBA[\ltuple \IDMVBA, \groupindex, 0, \networksize_\groupindex ,  \networkfaultsize_\groupindex \rtuple]$,  then all honest nodes within  $\Sc_{\groupindex}$ eventually output a consistent value $\Me'$ for some $\Me'$. Otherwise, they eventually output a consistent value $\Me$. 
\end{proof}

\begin{lemma}       \label{lm:RMVBAbv1}
Assume  that $\Sc_\groupindex$ has good resilience.    If $\ABBA[\ltuple \IDMVBA, \groupindex, \electionoutput, \networksize_{\groupindex},  \networkfaultsize_{\groupindex}\rtuple]$ outputs $1$, then at least one honest Node~$i$ within $\Sc_{\groupindex}$  has set $\Vr_i =1$ from $\OciorRBA[(\IDMVBA, \groupindex, \electionoutput)]$, for $\electionoutput\in \{0,1\}$. 
\end{lemma}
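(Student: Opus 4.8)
The plan is to trace the value $1$ backwards through the composition that Algorithm~\ref{algm:OciorRMVBA} wires together: $\ABBA$ receives the output of $\ABBBA$ (Lines~\ref{line:OciorRMVBAABBBAoutput}--\ref{line:OciorRMVBAABAoutput}), and $\ABBBA$ receives the pair $(\DRBCReadyindicator[\electionoutput],\DRBCFinishindicator[\electionoutput])$, whose entries are set by the Ready/Finish handlers driven by $\OciorRBA[(\IDMVBA,\groupindex,\electionoutput)]$.

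First I would use validity of the binary Byzantine agreement primitive. Because $\Sc_\groupindex$ has good resilience, at most $\networkfaultsize_{\groupindex}$ nodes within $\Sc_\groupindex$ are dishonest, so $\ABBA[\ltuple \IDMVBA, \groupindex, \electionoutput, \networksize_{\groupindex}, \networkfaultsize_{\groupindex}\rtuple]$ enjoys its standard guarantees; in particular, if it outputs $1$ then at least one honest node in $\Sc_\groupindex$ fed it input $\ABBAoutput = 1$ at Line~\ref{line:OciorRMVBAABAoutput} (if every honest participant input $0$, validity forces output $0$). Since each node's $\ABBA$ input is exactly the output of its own $\ABBBA$ instance at Line~\ref{line:OciorRMVBAABBBAoutput}, that same honest node output $1$ from $\ABBBA[\ltuple \IDMVBA, \groupindex, \electionoutput, \networksize_{\groupindex}, \networkfaultsize_{\groupindex}\rtuple]$.

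Next I would invoke the Biased Integrity property of $\ABBBA$: since an honest node output $1$, some honest node $\Node_j \in \Sc_\groupindex$ must have held $\abbainputA = \DRBCReadyindicator[\electionoutput] = 1$ or $\abbainputB = \DRBCFinishindicator[\electionoutput] = 1$ when it submitted its $\ABBBA$ input. Then I split into two cases. If $\DRBCReadyindicator[\electionoutput] = 1$ at $\Node_j$, the handler at Line~\ref{line:RMVBAReadyindicatorCondition} shows this was set only because $\OciorRBA[(\IDMVBA,\groupindex,\electionoutput)]$ delivered $\Vr_i = 1$ to $\Node_j$, so $\Node_j$ is the desired node. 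If $\DRBCFinishindicator[\electionoutput] = 1$ at $\Node_j$, then by Line~\ref{line:RMVBAFinishindicatorCondition} $\Node_j$ received $\networksize_{\groupindex} - \networkfaultsize_{\groupindex}$ messages $(\READY, \IDMVBA, \Round_\groupindex, \electionoutput)$ from distinct nodes within $\Sc_\groupindex$; a short count using $\networksize_{\groupindex} \geq 3\networkfaultsize_{\groupindex}+1$ and good resilience (at most $\networkfaultsize_{\groupindex}$ of these senders are dishonest, leaving at least $\networkfaultsize_{\groupindex}+1 \geq 1$ honest) gives an honest sender of such a message, and an honest node emits $(\READY, \IDMVBA, \Round_\groupindex, \electionoutput)$ only inside the handler at Line~\ref{line:RMVBAReadyindicatorCondition}, i.e.\ only after $\OciorRBA[(\IDMVBA,\groupindex,\electionoutput)]$ delivered $\Vr_i = 1$ to it. In both cases some honest node in $\Sc_\groupindex$ has set $\Vr_i = 1$ from $\OciorRBA[(\IDMVBA,\groupindex,\electionoutput)]$.

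The counting in the Finish case and the bookkeeping of which line sets which indicator are routine. The step requiring the most care is the first one: I must make explicit that good resilience of $\Sc_\groupindex$ is precisely what licenses using the binary $\ABBA$ guarantees with parameters $(\networksize_{\groupindex},\networkfaultsize_{\groupindex})$, and that the honest node feeding $1$ to $\ABBA$ and the honest node outputting $1$ from $\ABBBA$ are the same node, so that Biased Integrity of $\ABBBA$ chains directly onto the validity of $\ABBA$.
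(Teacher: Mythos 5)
Your proposal is correct and follows essentially the same route as the paper's proof: validity of $\ABBA$ to obtain an honest node that input $\ABBAoutput=1$, hence output $1$ from $\ABBBA$; then Biased Integrity of $\ABBBA$ to obtain an honest node with $\DRBCReadyindicator[\electionoutput]=1$ or $\DRBCFinishindicator[\electionoutput]=1$; then the same two-case analysis via Lines~\ref{line:RMVBAReadyindicatorCondition} and \ref{line:RMVBAFinishindicatorCondition}, with the same quorum-intersection count in the Finish case. No gaps.
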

\begin{proof}
In this setting, if $\ABBA[\ltuple \IDMVBA, \groupindex, \electionoutput, \networksize_{\groupindex},  \networkfaultsize_{\groupindex}\rtuple]$ outputs $\ABAoutput=1$,  then at least one honest node must have provided an input   $\ABBAoutput=1$ to $\ABBA[\ltuple \IDMVBA, \groupindex, \electionoutput, \networksize_{\groupindex},  \networkfaultsize_{\groupindex}\rtuple]$   (see Line~\ref{line:OciorRMVBAABAoutput} of Algorithm~\ref{algm:OciorRMVBA}), due to the validity property of Byzantine agreement. This means that at least one honest node must have produced an output  $\ABBAoutput=1$ from $\ABBBA[\ltuple \IDMVBA, \groupindex,  \electionoutput, \networksize_{\groupindex},  \networkfaultsize_{\groupindex}\rtuple] (\DRBCReadyindicator[\electionoutput], \DRBCFinishindicator[\electionoutput])$    (see Line~\ref{line:OciorRMVBAABBBAoutput} of Algorithm~\ref{algm:OciorRMVBA}).  
 If an honest node outputs $1$  from $\ABBBA$, then at least one honest node  must have provided at least one input as $1$ to $\ABBBA$  (see Line~\ref{line:ABBBAoneoutput} of Algorithm~\ref{algm:ABBBA}, biased integrity property).              
Thus, if $\ABBA[\ltuple \IDMVBA, \groupindex, \electionoutput, \networksize_{\groupindex},  \networkfaultsize_{\groupindex}\rtuple]$ outputs $\ABAoutput=1$, then at least one honest node  must have set  $\DRBCReadyindicator[\electionoutput] =1$ or  $\DRBCFinishindicator[\electionoutput] =1$. 
If one honest Node~$i$  sets   $\DRBCReadyindicator[\electionoutput] =1$, it must have set $\Vr_i =1$ from $\OciorRBA[(\IDMVBA, \groupindex, \electionoutput)]$  (see Line~\ref{line:RMVBAReadyindicatorCondition}  of Algorithm~\ref{algm:OciorRMVBA}).
If one honest Node~$i$  sets   $\DRBCFinishindicator[\electionoutput] =1$, then at least  $\networksize_{\groupindex}-2\networkfaultsize_{\groupindex}$ honest nodes must have set $\Vr_i =1$ from $\OciorRBA[(\IDMVBA, \groupindex, \electionoutput)]$  (see Line~\ref{line:RMVBAFinishindicatorCondition}  of Algorithm~\ref{algm:OciorRMVBA}). 
Therefore, if $\ABBA[\ltuple \IDMVBA, \groupindex, \electionoutput, \networksize_{\groupindex},  \networkfaultsize_{\groupindex}\rtuple]$ outputs $\ABAoutput=1$, then at least one honest Node~$i$ within $\Sc_{\groupindex}$  has set $\Vr_i =1$ from $\OciorRBA[(\IDMVBA, \groupindex, \electionoutput)]$. 
\end{proof}

\begin{lemma}       \label{lm:RMVBAbv0}
Assume  that $\Sc_\groupindex$ has good resilience.    If $\ABBA[\ltuple \IDMVBA, \groupindex, \electionoutput, \networksize_{\groupindex},  \networkfaultsize_{\groupindex}\rtuple]$ outputs $1$, then no   honest Node~$i$ within $\Sc_{\groupindex}$  will  set $\Vr_i =0$ from $\OciorRBA[(\IDMVBA, \groupindex, \electionoutput)]$, for $\electionoutput\in \{0,1\}$. 
\end{lemma}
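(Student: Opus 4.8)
The plan is to reduce the statement to a safety property of $\OciorRBA$ and then establish that property in the style of $\COOL$/$\OciorCOOL$. First, by Lemma~\ref{lm:RMVBAbv1}, because $\ABBA[\ltuple \IDMVBA, \groupindex, \electionoutput, \networksize_{\groupindex}, \networkfaultsize_{\groupindex}\rtuple]$ outputs $1$ there is at least one honest $\Node_A$ within $\Sc_{\groupindex}$ that set $\Vr_i = 1$ in $\OciorRBA[(\IDMVBA, \groupindex, \electionoutput)]$; by Line~\ref{line:RBAVrideliver} of Algorithm~\ref{algm:OciorRBA} this required $\Node_A$ to reach $|\Ss_1^{[2]}| \geq \networksize_{\groupindex} - \networkfaultsize_{\groupindex}$. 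The only way an honest node sets $\Vr_i = 0$ is, by the same line, to reach $|\Ss_0^{[2]}| \geq \networksize_{\groupindex} - \networkfaultsize_{\groupindex}$. Hence it suffices to prove a one-sided vote agreement of $\OciorRBA$ over $\Sc_{\groupindex}$: if one honest node attains $|\Ss_1^{[2]}| \geq \networksize_{\groupindex} - \networkfaultsize_{\groupindex}$, then no honest node ever attains $|\Ss_0^{[2]}| \geq \networksize_{\groupindex} - \networkfaultsize_{\groupindex}$.

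To prove this I would argue by contradiction: suppose some honest $\Node_B$ within $\Sc_{\groupindex}$ reaches $|\Ss_0^{[2]}| \geq \networksize_{\groupindex} - \networkfaultsize_{\groupindex}$. Since $\Sc_{\groupindex}$ has good resilience, $|\Sc_{\groupindex}\cap\Fc| \leq \networkfaultsize_{\groupindex}$ and $\networksize_{\groupindex}\geq 3\networkfaultsize_{\groupindex}+1$, so $\Node_A$'s $\Ss_1^{[2]}$ and $\Node_B$'s $\Ss_0^{[2]}$ intersect in at least $\networksize_{\groupindex}-2\networkfaultsize_{\groupindex}\geq \networkfaultsize_{\groupindex}+1$ nodes, hence in some honest node $\Node_j$. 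An honest $\Node_j$ broadcasts one value $\Ry_j^{[2]}$ and one coded pair consistent with its own encoded vector, so $j\in\Ss_1^{[2]}$ at $\Node_A$ forces $\Ry_j^{[2]}=1$ and $j\in\Lkset_1$ at $\Node_A$, while $j\in\Ss_0^{[2]}$ at $\Node_B$ (with $\Ry_j^{[2]}=1$) forces $j\in\Lkset_0$ at $\Node_B$. Unwinding the success indicators further: from $|\Ss_1^{[2]}|\geq\networksize_{\groupindex}-\networkfaultsize_{\groupindex}$ at $\Node_A$, at least $\networksize_{\groupindex}-2\networkfaultsize_{\groupindex}$ honest nodes set $\Ry^{[2]}=1$, each of which (Line~\ref{line:RBAph2ASI1}) had $\Ry^{[1]}=1$ and $|\Ss_1^{[1]}|\geq\networksize_{\groupindex}-\networkfaultsize_{\groupindex}$, hence matched its two exchanged symbols with $\geq\networksize_{\groupindex}-2\networkfaultsize_{\groupindex}$ honest nodes.

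The core of the argument---and the step I expect to be the main obstacle---is turning this dense web of matching honest nodes into a contradiction with $j\in\Lkset_0$ at $\Node_B$. Plain counting of honest contributions does not suffice at the optimal resilience $\networksize_{\groupindex}\geq 3\networkfaultsize_{\groupindex}+1$, since the honest members of $\Node_B$'s $\Ss_0^{[2]}$ need not themselves carry $\Ry^{[2]}=1$; instead one must invoke the error-free coded-symbol consistency argument of $\COOL$ and $\OciorCOOL$ \cite{Chen:2020arxiv, ChenDISC:21, ChenOciorCOOL:24}, which, using the minimum distance of the $(\networksize_{\groupindex},\ktilde)$ error-correcting code with $\ktilde=\lfloor\networkfaultsize_{\groupindex}/5\rfloor+1$, shows that once some honest node reaches $|\Ss_1^{[2]}|\geq\networksize_{\groupindex}-\networkfaultsize_{\groupindex}$ the Phase-1/Phase-2 bookkeeping of every honest node keeps its $|\Ss_0^{[2]}|$ strictly below $\networksize_{\groupindex}-\networkfaultsize_{\groupindex}$. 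A delicate point is that when $\electionoutput$ indexes the partition of $\Sc_{\groupindex}$ lacking good resilience, $\SHMDM[(\IDMVBA, \groupindex, \electionoutput)]$ need not feed all honest nodes of $\Sc_{\groupindex}$ the same input message, so this consistency must be argued purely from the match/mismatch records inside $\OciorRBA$, exactly as in the error-free $\BA$ analysis; if it has already been recorded as a standalone safety lemma for $\OciorRBA$ (in the spirit of Lemma~\ref{lm:RMVBAb1ConsistantOutputs}), the whole proof collapses to the reduction of the first paragraph plus that citation.
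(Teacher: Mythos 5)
Your opening reduction is exactly the paper's: invoke Lemma~\ref{lm:RMVBAbv1} to obtain one honest node with $\Vr_i=1$, note from Line~\ref{line:RBAReadyCondition} of Algorithm~\ref{algm:OciorRBA} that this forces $|\Ss_1^{[2]}|\geq \networksize_{\groupindex}-\networkfaultsize_{\groupindex}$ at that node, and observe that setting $\Vr_i=0$ would require $|\Ss_0^{[2]}|\geq \networksize_{\groupindex}-\networkfaultsize_{\groupindex}$ at some honest node. From that point on, however, the paper uses neither quorum intersection nor any code-distance machinery: it simply notes that $|\Ss_1^{[2]}|\geq \networksize_{\groupindex}-\networkfaultsize_{\groupindex}$ at one honest node means at least $\networksize_{\groupindex}-2\networkfaultsize_{\groupindex}$ honest nodes sent $\ltuple\SItwo,\IDMVBAtilde,1\rtuple$, and concludes that from any honest node's view $|\Ss_0^{[2]}|\leq \networksize_{\groupindex}-(\networksize_{\groupindex}-2\networkfaultsize_{\groupindex})=2\networkfaultsize_{\groupindex}<\networksize_{\groupindex}-\networkfaultsize_{\groupindex}$ --- that is, exactly the ``plain counting'' you declared insufficient.

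The genuine gap in your proposal is that its decisive step is never carried out. After the (correct) quorum-intersection observation that some honest $j$ would have to lie in $\Ss_1^{[2]}$ at $A$ and in $\Ss_0^{[2]}$ at $B$, you hand the contradiction to an unspecified ``coded-symbol consistency argument of $\COOL$/$\OciorCOOL$'' whose stated conclusion is essentially the lemma being proved; as written this is a citation, not a proof. To your credit, your instinct does locate the one place where the paper's count needs support: an honest $j$ that sent $\SItwo$ with value $1$ can still enter $\Ss_0^{[2]}$ at $B$ through the branch $j\in\Lkset_0$ (the handler at Line~\ref{line:RBAph2SS01Cond} of Algorithm~\ref{algm:OciorRBA}), so the paper's bound implicitly asserts that the $\networksize_{\groupindex}-2\networkfaultsize_{\groupindex}$ honest $\SItwo$-equals-$1$ senders are never so classified at an honest node --- a point that rests on Lemma~\ref{lm:RMVBAb1samemessage} and the $\OciorCOOL$ analysis rather than on the raw count alone. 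In short: same reduction, but where the paper closes with a two-line count, your write-up stops one step short of a proof.
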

\begin{proof}
Lemma~\ref{lm:RMVBAbv1} reveals that if $\ABBA[\ltuple \IDMVBA, \groupindex, \electionoutput, \networksize_{\groupindex},  \networkfaultsize_{\groupindex}\rtuple]$ outputs $\ABAoutput=1$,  then  at least one honest Node~$i$ within $\Sc_{\groupindex}$  has set $\Vr_i =1$ from $\OciorRBA[(\IDMVBA, \groupindex, \electionoutput)]$ in this setting.  If one honest node has set $\Vr_i =1$, then at least $\networksize_{\groupindex}- 2\networkfaultsize_{\groupindex}$   
honest nodes have  sent $\ltuple \SItwo, \IDMVBAtilde, 1 \rtuple$ (see Line~\ref{line:RBAReadyCondition} of Algorithm~\ref{algm:OciorRBA}).  In this case,   the size of  $\Ss_{0}^{[2]}$ is bounded by  $|\Ss_{0}^{[2]}| \leq  \networksize_{\groupindex}- (\networksize_{\groupindex}- 2\networkfaultsize_{\groupindex} )   < \networksize_{\groupindex}- \networkfaultsize_{\groupindex}$ from the view of any honest node, which indicates that  no   honest node   will  set $\Vr_i =0$   from $\OciorRBA[(\IDMVBA, \groupindex, \electionoutput)]$. 
\end{proof}

\begin{lemma}    \label{lm:RMVBAb1ph3one}
Assume  that $\Sc_\groupindex$ has good resilience.    If $\ABBA[\ltuple \IDMVBA, \groupindex, \electionoutput, \networksize_{\groupindex},  \networkfaultsize_{\groupindex}\rtuple]$ outputs $1$, then eventually every honest  node within $\Sc_{\groupindex}$  will  set $\Phthreeindicator= 1$ from $\OciorRBA[(\IDMVBA, \groupindex, \electionoutput)]$, for $\electionoutput\in \{0,1\}$. 
\end{lemma}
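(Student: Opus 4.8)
The plan is to track how the Phase-3 flag $\Phthreeindicator$ is raised inside $\OciorRBA[(\IDMVBA, \groupindex, \electionoutput)]$ and to show that, under the hypothesis, every honest node in $\Sc_{\groupindex}$ reaches the external-binary-input branch (Lines~\ref{line:RBAph3BBCondition}--\ref{line:RBAph3BB} of Algorithm~\ref{algm:OciorRBA}) while still running that $\OciorRBA$ instance. Inside Algorithm~\ref{algm:OciorRBA} there are only two places where $\Phthreeindicator$ becomes $1$: Line~\ref{line:RBAph3}, reached after receiving $2\networkfaultsize_{\groupindex}+1$ messages $\ltuple\READY, \IDMVBAtilde, 1\rtuple$ from distinct nodes (so that $\VrOutput = 1$), and Line~\ref{line:RBAph3BB}, reached upon receiving the binary input $\ABAoutput = 1$ handed down by Algorithm~\ref{algm:OciorRMVBA}. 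Since $\ABBA[\ltuple \IDMVBA, \groupindex, \electionoutput, \networksize_{\groupindex}, \networkfaultsize_{\groupindex}\rtuple]$ outputs $1$, the Consistency and Termination properties of binary Byzantine agreement give that every honest node in $\Sc_{\groupindex}$ activating this $\ABBA$ instance eventually outputs $\ABAoutput = 1$ (all honest nodes in $\Sc_{\groupindex}$ do activate it in the context in which this lemma is applied), so by Algorithm~\ref{algm:OciorRMVBA} every honest node passes the binary input $1$ into $\OciorRBA[(\IDMVBA, \groupindex, \electionoutput)]$; by Lines~\ref{line:RBAph3BBCondition}--\ref{line:RBAph3BB} this sets $\Phthreeindicator \gets 1$ at that node, \emph{provided} the node is still running that $\OciorRBA$ instance at that moment.

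The crux is therefore to show that no honest node terminates $\OciorRBA[(\IDMVBA, \groupindex, \electionoutput)]$ before it receives that input. Scanning Algorithm~\ref{algm:OciorRBA}, the only terminating step that does not already require $\Phthreeindicator = 1$ is Line~\ref{line:RBAoutputdefault}, which outputs the default value and is guarded by $\VrOutput = 0$; that in turn requires receiving $2\networkfaultsize_{\groupindex}+1$ messages $\ltuple\READY, \IDMVBAtilde, 0\rtuple$ from distinct nodes. I would rule this out via Lemma~\ref{lm:RMVBAbv0}: under the hypothesis no honest node in $\Sc_{\groupindex}$ ever sets $\Vr_i = 0$, hence no honest node sends $\ltuple\READY, \IDMVBAtilde, 0\rtuple$ through Line~\ref{line:RBAReadySendA}. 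A minimality argument then upgrades this to ``no honest node ever sends $\ltuple\READY, \IDMVBAtilde, 0\rtuple$ at all'': the first honest node in time to do so would have had to relay through Line~\ref{line:RBARelialbeBegin}, which presupposes having already received $\networkfaultsize_{\groupindex}+1$ such messages, yet before that moment only the at most $\networkfaultsize_{\groupindex}$ dishonest nodes in $\Sc_{\groupindex}$ could have produced one. Hence at most $\networkfaultsize_{\groupindex} < 2\networkfaultsize_{\groupindex}+1$ messages $\ltuple\READY, \IDMVBAtilde, 0\rtuple$ are ever delivered to any honest node, so no honest node ever sets $\VrOutput = 0$ and none terminates at Line~\ref{line:RBAoutputdefault}.

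Combining the two parts, every honest node in $\Sc_{\groupindex}$ keeps running $\OciorRBA[(\IDMVBA, \groupindex, \electionoutput)]$ at least until it receives the external binary input $\ABAoutput = 1$, and it then sets $\Phthreeindicator \gets 1$ (or has already done so via Line~\ref{line:RBAph3}); therefore eventually every honest node within $\Sc_{\groupindex}$ has $\Phthreeindicator = 1$ in $\OciorRBA[(\IDMVBA, \groupindex, \electionoutput)]$, as claimed. I expect the main obstacle to be the second part's case analysis: verifying that Line~\ref{line:RBAoutputdefault} really is the only $\Phthreeindicator$-free terminating path (the $\SIPhtwo = 1$ exit at Line~\ref{line:RBAph3SIPhtwoOutput} and the final Phase-3 exit at Line~\ref{line:RBAph3Output2} both sit inside the $\Phthreeindicator = 1$ block), and making the minimality/quorum argument on the $\ltuple\READY, \IDMVBAtilde, 0\rtuple$ messages fully rigorous in the asynchronous, adaptive-corruption model --- in particular ensuring that honest relayers cannot bootstrap a $0$-quorum among themselves.
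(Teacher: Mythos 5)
Your proposal is correct and follows essentially the same route as the paper's proof: it invokes Lemma~\ref{lm:RMVBAbv0} to rule out the default-output exit at Line~\ref{line:RBAoutputdefault} (the only termination path not already inside the $\Phthreeindicator=1$ block), and then uses the $\ABBA$ output $1$ delivered to Line~\ref{line:RBAph3BBCondition} to set $\Phthreeindicator\gets 1$ at every honest node. The only difference is that you spell out the quorum/minimality argument showing honest nodes cannot bootstrap a $\ltuple\READY,\IDMVBAtilde,0\rtuple$ quorum via the relay rule, a detail the paper leaves implicit.
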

\begin{proof}
From the result of Lemma~\ref{lm:RMVBAbv0},    if $\ABBA[\ltuple \IDMVBA, \groupindex, \electionoutput, \networksize_{\groupindex},  \networkfaultsize_{\groupindex}\rtuple]$ outputs $1$,  then  no   honest Node~$i$ within $\Sc_{\groupindex}$  will  set $\Vr_i =0$ from $\OciorRBA[(\IDMVBA, \groupindex, \electionoutput)]$. This suggests that, in this case, no   honest node  will  set $\VrOutput= 0$ and Line~\ref{line:RBAoutputdefault} of Algorithm~\ref{algm:OciorRBA} will never be executed in $\OciorRBA[(\IDMVBA, \groupindex, \electionoutput)]$. 
On the other hand,   if $\ABBA[\ltuple \IDMVBA, \groupindex, \electionoutput, \networksize_{\groupindex},  \networkfaultsize_{\groupindex}\rtuple]$ outputs $\ABAoutput=1$, then   eventually every honest  node within $\Sc_{\groupindex}$  will  
input $\ABAoutput=1$ into $\OciorRBA[(\IDMVBA, \groupindex, \electionoutput)]$ and set $\Phthreeindicator= 1$  (see  Line~\ref{line:RBAph3BB} of Algorithm~\ref{algm:OciorRBA}). 
\end{proof}

\begin{lemma}     \label{lm:RMVBAb1SizeSi1set}
Assume  that $\Sc_\groupindex$ has good resilience.    If $\ABBA[\ltuple \IDMVBA, \groupindex, \electionoutput, \networksize_{\groupindex},  \networkfaultsize_{\groupindex}\rtuple]$ outputs $1$,  then at least $\networksize_{\groupindex}-2\networkfaultsize_{\groupindex} \geq \networkfaultsize_{\groupindex}+1$  honest nodes within $\Sc_{\groupindex}$  have  set  $\Ry_i^{[2]} = 1$   from $\OciorRBA[(\IDMVBA, \groupindex, \electionoutput)]$, for $\electionoutput\in \{0,1\}$ . 
\end{lemma}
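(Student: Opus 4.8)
The plan is to chain Lemma~\ref{lm:RMVBAbv1} with a simple counting argument on the Phase-2 success-indicator messages of Algorithm~\ref{algm:OciorRBA}. First, since $\Sc_\groupindex$ has good resilience and $\ABBA[\ltuple \IDMVBA, \groupindex, \electionoutput, \networksize_{\groupindex}, \networkfaultsize_{\groupindex}\rtuple]$ outputs~$1$, Lemma~\ref{lm:RMVBAbv1} gives that at least one honest Node~$i$ within $\Sc_\groupindex$ has set $\Vr_i = 1$ inside $\OciorRBA[(\IDMVBA, \groupindex, \electionoutput)]$.

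Next I would pin down the only route through which $\Vr_i$ becomes $1$: in Algorithm~\ref{algm:OciorRBA} the variable $\Vr_i$ is assigned only at Line~\ref{line:RBAReadyCondition}, and that trigger requires $|\Ss_{\Vr}^{[2]}| \geq \networksize_{\groupindex} - \networkfaultsize_{\groupindex}$ for the chosen $\Vr$. Hence, in the view of the honest node identified above, $|\Ss_1^{[2]}| \geq \networksize_{\groupindex} - \networkfaultsize_{\groupindex}$. By the update rule at Line~\ref{line:RBAph2SS01Cond}, an index $j$ enters $\Ss_1^{[2]}$ only after a message $\ltuple\SItwo, \IDMVBAtilde, 1\rtuple$ has been received from $\Node_j$; so at least $\networksize_{\groupindex} - \networkfaultsize_{\groupindex}$ distinct nodes have sent $\ltuple\SItwo, \IDMVBAtilde, 1\rtuple$. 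Good resilience bounds the number of dishonest nodes in $\Sc_\groupindex$ by $\networkfaultsize_{\groupindex}$, so at least $\networksize_{\groupindex} - 2\networkfaultsize_{\groupindex}$ of these senders are honest.

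It then remains to translate ``an honest node sent $\ltuple\SItwo, \IDMVBAtilde, 1\rtuple$'' into ``that honest node set $\Ry_i^{[2]} = 1$''. Inspecting Algorithm~\ref{algm:OciorRBA}, an honest node sends a $\SItwo$ message carrying the value $1$ only at Line~\ref{line:RBAph2ASI1}, where it simultaneously sets $\Ry_i^{[2]} \gets 1$; the two other $\SItwo$-sending lines (Lines~\ref{line:RBAph2CSI0} and~\ref{line:RBAph2BSI0}) carry the value $0$ and set $\Ry_i^{[2]} \gets 0$. Thus each of those $\networksize_{\groupindex} - 2\networkfaultsize_{\groupindex}$ honest senders has set $\Ry_i^{[2]} = 1$. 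Finally $\networksize_{\groupindex} - 2\networkfaultsize_{\groupindex} \geq \networkfaultsize_{\groupindex} + 1$ is pure arithmetic: $\networkfaultsize_{\groupindex} = \floor{(\networksize_{\groupindex}-1)/3}$ implies $3\networkfaultsize_{\groupindex} \leq \networksize_{\groupindex} - 1$, i.e.\ $\networksize_{\groupindex} - 2\networkfaultsize_{\groupindex} \geq \networkfaultsize_{\groupindex} + 1$.

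I do not anticipate a substantive obstacle; the only points requiring care are verifying that Line~\ref{line:RBAReadyCondition} is the unique assignment to $\Vr_i$ (so that its precondition must genuinely have held at the honest node in question), and that the value field of each $\ltuple\SItwo, \IDMVBAtilde, \cdot\rtuple$ message coincides with the sender's own $\Ry_i^{[2]}$, so that membership of $j$ in $\Ss_1^{[2]}$ really does certify $\Ry_j^{[2]} = 1$ at an honest $\Node_j$.
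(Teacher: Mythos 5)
Your proposal is correct and follows essentially the same route as the paper's proof: invoke Lemma~\ref{lm:RMVBAbv1} to obtain one honest node with $\Vr_i=1$, then use the trigger condition $|\Ss_{1}^{[2]}|\geq \networksize_{\groupindex}-\networkfaultsize_{\groupindex}$ at Line~\ref{line:RBAReadyCondition} of Algorithm~\ref{algm:OciorRBA} to count at least $\networksize_{\groupindex}-2\networkfaultsize_{\groupindex}\geq \networkfaultsize_{\groupindex}+1$ honest senders of $\ltuple\SItwo,\IDMVBAtilde,1\rtuple$, each of which set $\Ry_i^{[2]}=1$. The paper states this counting step in one sentence; you merely spell it out in more detail, with no substantive difference.
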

\begin{proof}
From the result of Lemma~\ref{lm:RMVBAbv1},      if $\ABBA[\ltuple \IDMVBA, \groupindex, \electionoutput, \networksize_{\groupindex},  \networkfaultsize_{\groupindex}\rtuple]$ outputs $1$, then at least one honest Node~$i$ within $\Sc_{\groupindex}$  has set $\Vr_i =1$ from $\OciorRBA[(\IDMVBA, \groupindex, \electionoutput)]$. When one  honest Node~$i$ within $\Sc_{\groupindex}$  has set $\Vr_i =1$, it means that at least  $\networksize_{\groupindex}-2\networkfaultsize_{\groupindex} \geq \networkfaultsize_{\groupindex}+1$  honest nodes within $\Sc_{\groupindex}$  have  set  $\Ry_i^{[2]} = 1$   from $\OciorRBA[(\IDMVBA, \groupindex, \electionoutput)]$  (see  Line~\ref{line:RBAReadyCondition} of Algorithm~\ref{algm:OciorRBA}).   
\end{proof}

\begin{lemma}   \cite[Lemma 11]{ChenOciorCOOL:24}     \label{lm:RMVBAb1samemessage}
Assume  that $\Sc_\groupindex$ has good resilience.    If $\ABBA[\ltuple \IDMVBA, \groupindex, \electionoutput, \networksize_{\groupindex},  \networkfaultsize_{\groupindex}\rtuple]$ outputs $1$, then all of the honest nodes who set $\Ry_i^{[2]} = 1$  in Phase~$2$ of $\OciorRBA[(\IDMVBA, \groupindex, \electionoutput)]$ should have  the same input message $\Me^{\star}$ at the beginning of Phase~$1$ of $\OciorRBA[(\IDMVBA, \groupindex, \electionoutput)]$, for some $\Me^{\star}$, for $\electionoutput\in \{0,1\}$. 
\end{lemma}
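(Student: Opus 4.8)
The plan is to reduce this statement to \cite[Lemma~11]{ChenOciorCOOL:24}. Phases~1 and~2 of $\OciorRBA[(\IDMVBA,\groupindex,\electionoutput)]$ in Algorithm~\ref{algm:OciorRBA} --- the $\SYMBOL$ exchange that builds $\Lkset_1,\Lkset_0$, the $\SIone$ round that builds $\Ss_1^{[1]},\Ss_0^{[1]}$, and the $\SItwo$ round --- coincide with the first two phases of the $\OciorRBC$/$\OciorCOOL$ protocol of \cite{ChenOciorCOOL:24}, now run on the sub-network $\Sc_\groupindex$ with $\networksize_\groupindex=|\Sc_\groupindex|$, $\networkfaultsize_\groupindex=\floor{(|\Sc_\groupindex|-1)/3}$, and code dimension $\ktilde=\floor{\networkfaultsize_\groupindex/5}+1$. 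Good resilience of $\Sc_\groupindex$ means that the number of dishonest nodes inside $\Sc_\groupindex$ satisfies $\networkfaultsizereal_\groupindex\leq\networkfaultsize_\groupindex<\networksize_\groupindex/3$, which is exactly the resilience hypothesis under which the cited lemma holds. The $\ABBA$ assumption is not needed for the conclusion itself; by Lemma~\ref{lm:RMVBAb1SizeSi1set} it only guarantees that at least $\networkfaultsize_\groupindex+1$ honest nodes do set $\Ry^{[2]}=1$, so the statement is non-vacuous. Hence it suffices to re-derive the cited lemma in this setting.

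For the core argument, take two honest nodes $a,b\in\Sc_\groupindex$ with $\Ry_a^{[2]}=\Ry_b^{[2]}=1$. By Line~\ref{line:RBAph2ACond} each of them has $|\Ss_1^{[1]}|\geq\networksize_\groupindex-\networkfaultsize_\groupindex$; since $\Ss_1^{[1]}\subseteq\Lkset_1$ (Line~\ref{line:RBAph1SS1Cond}) and every member of $\Ss_1^{[1]}$ was recorded upon receipt of $\ltuple\SIone,\IDMVBAtilde,1\rtuple$, there are at least $\networksize_\groupindex-2\networkfaultsize_\groupindex$ honest nodes that both agree-link with $a$ and have themselves set $\Ry^{[1]}=1$; call this set $T_a$, and define $T_b$ symmetrically. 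For honest $i,j$ the match test of Line~\ref{line:RBAph1MatchCond} is symmetric and is evaluated on the true encoded symbols, so $h\in T_a$ gives $y_a^{(a)}=y_a^{(h)}$ and $y_h^{(h)}=y_h^{(a)}$; and since $h$ genuinely set $\Ry^{[1]}=1$ (Line~\ref{line:RBAph2OneCond}) it in turn has at least $\networksize_\groupindex-2\networkfaultsize_\groupindex$ honest agree-links. Propagating these identities one and two hops out from $a$ and from $b$, and using that the error-correction code is MDS with minimum distance $\networksize_\groupindex-\ktilde+1$ --- so the codewords of two distinct messages coincide in at most $\ktilde-1=\floor{\networkfaultsize_\groupindex/5}$ coordinates, while $\ktilde$ was chosen precisely so that the number of coordinates on which the codewords of $\Me_a$ and $\Me_b$ are provably forced to agree is at least $\ktilde$ --- one concludes $\Me_a=\Me_b$. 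Since $a,b$ were arbitrary, all honest nodes that set $\Ry^{[2]}=1$ share a common message $\Me^{\star}$ at the beginning of Phase~1.

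The main obstacle is exactly this last counting step. The naive one-hop estimate --- the set of honest nodes whose own symbol $y_j^{(j)}$ lies simultaneously on the codewords of $\Me_a$ and $\Me_b$ --- has size only $\geq\networksize_\groupindex-4\networkfaultsize_\groupindex$, which need not exceed $\ktilde-1$ when $\networksize_\groupindex\geq3\networkfaultsize_\groupindex+1$; the argument must instead exploit the two-hop fan-out through $T_a$ and $T_b$ (each of whose members carries its own $\networksize_\groupindex-2\networkfaultsize_\groupindex$ honest agree-links) together with the slack created by taking $\ktilde=\floor{\networkfaultsize_\groupindex/5}+1$. This is the content of \cite[Lemma~11]{ChenOciorCOOL:24}; the only new thing to check is that embedding $\OciorRBA$ inside the recursion of $\OciorRMVBA$ --- the extra binary input $\ABAoutput$, the Ready--Finish--Confirm bookkeeping of Algorithm~\ref{algm:OciorRMVBA}, and Phase~3 of Algorithm~\ref{algm:OciorRBA} --- alters neither Phases~1--2 nor the value of $\Me^{(i)}$ at the start of Phase~1, which it does not.
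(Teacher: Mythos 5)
Your proposal is correct and takes essentially the same route as the paper: the paper's proof is simply the citation ``directly derived from \cite[Lemma 11]{ChenOciorCOOL:24}'', and you likewise defer the key combinatorial step (the two-hop counting with $\ktilde=\floor{\networkfaultsize_\groupindex/5}+1$ against the MDS distance bound) to that lemma. Your additional checks---that Phases~1--2 of $\OciorRBA$ on $\Sc_\groupindex$ instantiate the cited protocol with the right resilience, that the $\ABBA$ hypothesis is not actually needed for the conclusion, and that the surrounding recursion does not perturb Phases~1--2---are all accurate and make the reduction more explicit than the paper does.
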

\begin{proof}
 The result is directly derived from  \cite[Lemma 11]{ChenOciorCOOL:24}. 
\end{proof}

\begin{lemma}    [Consistency and Totality Properties of $\OciorRBA$] \label{lm:RMVBAb1ConsistantOutputs}
Assume  that $\Sc_\groupindex$ has good resilience.    If $\ABBA[\ltuple \IDMVBA, \groupindex, \electionoutput, \networksize_{\groupindex},  \networkfaultsize_{\groupindex}\rtuple]$ outputs $1$, then all   honest nodes  within $\Sc_\groupindex$ eventually  output  the same  message $\Me^{\star}$  from $\OciorRBA[(\IDMVBA, \groupindex, \electionoutput)]$, for some $\Me^{\star}$, for $\electionoutput\in \{0,1\}$. 
\end{lemma}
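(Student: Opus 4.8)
The plan is to reduce this claim to the earlier lemmas of this section together with the correctness analysis of $\OciorCOOL$/$\OciorRBC$ from \cite{ChenOciorCOOL:24}. First I would fix, once and for all, a common message. By Lemma~\ref{lm:RMVBAbv1} and Lemma~\ref{lm:RMVBAb1SizeSi1set}, at least $\networksize_{\groupindex}-2\networkfaultsize_{\groupindex}\ge\networkfaultsize_{\groupindex}+1$ honest nodes of $\Sc_{\groupindex}$ set $\Ry_i^{[2]}=1$ in Phase~$2$ of $\OciorRBA[(\IDMVBA,\groupindex,\electionoutput)]$; call this set of ``helper'' nodes $\Hset^{\star}$. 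By Lemma~\ref{lm:RMVBAb1samemessage} all of them held the same Phase-1 input $\Me^{\star}$, so each $k\in\Hset^{\star}$ broadcast the coded symbols $y_j^{(k)}=\ECCEnc_j(\networksize_{\groupindex},\ktilde,\Me^{\star})$ in Phase~$1$ and later sent $\ltuple\SItwo,\IDMVBAtilde,1\rtuple$. This $\Hset^{\star}$ is the backbone of the argument, and I will also use that good resilience gives $|\Sc_{\groupindex}\cap\Fc|\le\networkfaultsize_{\groupindex}$.

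Next I would argue that every honest node reaches Phase~$3$ and never outputs the default value. Lemma~\ref{lm:RMVBAbv0} gives that no honest node sets $\Vr_i=0$, hence none sets $\VrOutput=0$, so the default branch (Line~\ref{line:RBAoutputdefault} of Algorithm~\ref{algm:OciorRBA}) is never taken; and Lemma~\ref{lm:RMVBAb1ph3one} gives that every honest node eventually sets $\Phthreeindicator=1$ and runs the block at Line~\ref{line:RBAph3Begin}. Consequently every honest node must eventually output via Line~\ref{line:RBAph3SIPhtwoOutput} or Line~\ref{line:RBAph3Output2}, in both cases returning its local value $\Me^{(i)}$. The remaining task is therefore to show $\Me^{(i)}=\Me^{\star}$ at output time for every honest node --- which simultaneously yields Consistency and (since each honest node does output) Totality.

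I would then split on the value of $\SIPhtwo$. A node reaching Line~\ref{line:RBAph3SIPhtwo} with $\SIPhtwo=1$ must have executed Line~\ref{line:RBAph2ASI1}, hence lies in $\Hset^{\star}$ with input $\Me^{\star}$, and $\Me^{(i)}$ was initialised to $\Me^{\star}$ at Line~\ref{line:RBAph0B}; I would then note that the OEC block cannot corrupt it, since the honest contributions to $\OECCorrectSymbolSet$ are all correct symbols of $\Me^{\star}$ (the $\SItwo$-path fills it from the Phase-1 symbols of nodes in $\Hset^{\star}$, and the $\CORRECTSYMBOL$-path from nodes handled in the next case), while dishonest nodes contribute at most $\networkfaultsize_{\groupindex}$ symbols, so by the Singleton-bound minimum distance $d=\networksize_{\groupindex}-\ktilde+1$ the re-encoding of any $\MVBAOutputMsg\neq\Me^{\star}$ matches at most $(\ktilde-1)+\networkfaultsize_{\groupindex}<\ktilde+\networkfaultsize_{\groupindex}$ entries and fails the check at Line~\ref{line:OECend}. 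For a node with $\SIPhtwo\neq1$ it enters the $\Else$ branch: here I would invoke the link-consistency structure of $\OciorCOOL$ \cite{ChenOciorCOOL:24} to show that the helpers in $\Hset^{\star}$ eventually appear in this node's $\Ss_1^{[2]}$, each having sent it the same symbol $\ECCEnc_i(\networksize_{\groupindex},\ktilde,\Me^{\star})$, while no other symbol can reach $\networkfaultsize_{\groupindex}+1$ copies from $\Ss_1^{[2]}$ (at most $\networkfaultsize_{\groupindex}$ dishonest nodes there, and every honest member of $\Ss_1^{[2]}$ shares $\Me^{\star}$ by Lemma~\ref{lm:RMVBAb1samemessage}); hence the majority rule at Line~\ref{line:RBAph3MajorityRule} sets $y_i^{(i)}$ to the correct symbol and the node sends a correct $\CORRECTSYMBOL$. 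Closing the loop, every honest node eventually collects at least $\networksize_{\groupindex}-2\networkfaultsize_{\groupindex}\ge\ktilde+\networkfaultsize_{\groupindex}$ correct $\Me^{\star}$-symbols into $\OECCorrectSymbolSet$ (via the $\SYMBOL$+$\SItwo$-path and the $\CORRECTSYMBOL$-path) against at most $\networkfaultsize_{\groupindex}$ corrupt ones, so the online-error-correction test accepts only $\Me^{\star}$, giving $\Me^{(i)}\gets\Me^{\star}$, $\OECSIFinal\gets1$, and output $\Me^{\star}$ at Line~\ref{line:RBAph3Output2}.

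The main obstacle is the $\SIPhtwo\neq1$ case: rigorously establishing that every honest node --- including one whose own Phase-1 input differed from $\Me^{\star}$ --- eventually has $\networkfaultsize_{\groupindex}+1$ helpers with a common symbol in its local $\Ss_1^{[2]}$, and that the online-error-correction loop with the parameter $\ktilde=\lfloor\networkfaultsize_{\groupindex}/5\rfloor+1$ is simultaneously sound (never accepts a wrong message) and live (eventually accepts $\Me^{\star}$). This is exactly where the COOL-style counting over the matching-link graph and the quantitative choice of $\ktilde$ enter, and I would import those facts from \cite{ChenOciorCOOL:24} (in particular the reasoning behind \cite[Lemma~11]{ChenOciorCOOL:24}) rather than re-deriving them; the only genuinely new point is that everything must hold in the asynchronous, per-subnetwork setting over $\Sc_{\groupindex}$, which is handled by using $\networksize_{\groupindex},\networkfaultsize_{\groupindex}$ in place of $n,t$ and by the good-resilience hypothesis.
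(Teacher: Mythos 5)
Your proposal follows essentially the same route as the paper's proof: establish the helper set of at least $\networkfaultsize_{\groupindex}+1$ honest nodes with $\Ry_i^{[2]}=1$ and common input $\Me^{\star}$ (via Lemmas~\ref{lm:RMVBAbv1}, \ref{lm:RMVBAb1SizeSi1set}, \ref{lm:RMVBAb1samemessage}), use Lemmas~\ref{lm:RMVBAbv0} and \ref{lm:RMVBAb1ph3one} to rule out the default output and force every honest node into Phase~3, then split on $\SIPhtwo$ and argue that the majority rule plus OEC delivers $\Me^{\star}$ to the remaining nodes; this matches the paper, which likewise defers the COOL-style counting to \cite{ChenOciorCOOL:24}. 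One arithmetic slip: your closing claim that each node collects ``at least $\networksize_{\groupindex}-2\networkfaultsize_{\groupindex}\ge\ktilde+\networkfaultsize_{\groupindex}$'' correct symbols is false for $\networkfaultsize_{\groupindex}\ge 5$, since $\networksize_{\groupindex}-2\networkfaultsize_{\groupindex}$ can be as small as $\networkfaultsize_{\groupindex}+1$ while $\ktilde+\networkfaultsize_{\groupindex}=\networkfaultsize_{\groupindex}+\lfloor\networkfaultsize_{\groupindex}/5\rfloor+1$. The argument is saved because \emph{all} honest nodes --- not just the $\networksize_{\groupindex}-2\networkfaultsize_{\groupindex}$ helpers --- eventually contribute a correct symbol to $\OECCorrectSymbolSet$ (either via the $\SYMBOL$+$\SItwo$ path of Line~\ref{line:RBAPh3OECSecCond} or via a $\CORRECTSYMBOL$ message from Line~\ref{line:RBAph3SendCorrectSymbols}), giving at least $\networksize_{\groupindex}-\networkfaultsize_{\groupindex}\ge 2\networkfaultsize_{\groupindex}+1\ge\ktilde+\networkfaultsize_{\groupindex}$ correct entries, which is what makes the OEC check live.
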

\begin{proof}
 The proof is similar to  \cite[Theorem 5]{ChenOciorCOOL:24}.   
 If $\ABBA[\ltuple \IDMVBA, \groupindex, \electionoutput, \networksize_{\groupindex},  \networkfaultsize_{\groupindex}\rtuple]$ outputs $1$, then we have the  following facts: 
\begin{itemize}
\item    Fact 1:  Eventually every honest  node within $\Sc_{\groupindex}$  will  set $\Phthreeindicator= 1$ and go to Phase~3 of $\OciorRBA[(\IDMVBA, \groupindex, \electionoutput)]$  (Lemma~\ref{lm:RMVBAb1ph3one}).     
\item    Fact 2:  All of the honest nodes who set $\Ry_i^{[2]} = 1$  in Phase~$2$ of $\OciorRBA[(\IDMVBA, \groupindex, \electionoutput)]$ should have  the same input message $\Me^{\star}$ at the beginning of Phase~$1$ of $\OciorRBA[(\IDMVBA, \groupindex, \electionoutput)]$ for some $\Me^{\star}$ (Lemma~\ref{lm:RMVBAb1samemessage}).     
\item    Fact 3:  At least $\networkfaultsize_{\groupindex}+1$  honest nodes within $\Sc_{\groupindex}$  have  set  $\Ry_i^{[2]} = 1$   from $\OciorRBA[(\IDMVBA, \groupindex, \electionoutput)]$ (Lemma~\ref{lm:RMVBAb1SizeSi1set}).     
\end{itemize} 
From Fact 2, if an honest node sets  $\Ry_i^{[2]} = 1$, then this node outputs the value $\Me^{\star}$ in $\OciorRBA[(\IDMVBA, \groupindex, \electionoutput)]$ (see Line~\ref{line:RBAph3SIPhtwoOutput} of Algorithm~\ref{algm:OciorRBA}). 
From Facts 2 and 3,  if an honest Node~$i$ sets  $\Ry_i^{[2]} = 0$, then  it will eventually receives at least $\networkfaultsize_{\groupindex}+1$  matching $\ltuple\SYMBOL, \IDMVBAtilde, (\ECCEnc_i (\networksize_{\groupindex}, \ktilde_{\groupindex}, \Me^{\star}),  *) \rtuple$  messages from the honest nodes within $\Ss_1^{[2]}$,  where $\ECCEnc_i ()$ denotes the $i$th encoded  symbol  and  $\ktilde_{\groupindex}$ is an encoding parameter. 
In this case, Node~$i$ will set $y_i^{(i)} \gets \ECCEnc_i (\networksize_{\groupindex}, \ktilde_{\groupindex}, \Me^{\star})$   in Line~\ref{line:RBAph3MajorityRule}  of Algorithm~\ref{algm:OciorRBA}, and send  $\ltuple \CORRECTSYMBOL, \IDMVBAtilde, y_i^{(i)}  \rtuple$  to  all nodes   in Line~\ref{line:RBAph3SendCorrectSymbols}.  
Therefore,  every symbol $y_j^{(j)}$ sent from honest nodes  and collected in $\OECCorrectSymbolSet$   should be the  symbol  encoded from the same message  $\Me^{\star}$.  Thus,   every honest node who sets   $\Ry_i^{[2]} = 0$ will eventually decode the message $\Me^{\star}$ with OEC decoding and output $\Me^{\star}$ in Line~\ref{line:RBAph3Output2}  of Algorithm~\ref{algm:OciorRBA}.   
\end{proof}

\begin{algorithm}
\caption{$\OciorMVBARelaxedResilienceIT$  protocol, with identifier $\IDMVBA$, for $n\geq 5t+1$. Code is shown for $\Node_{\thisnodeindex}$.}    \label{algm:OciorMVBARelaxedResilienceIT} 
\begin{algorithmic}[1]
\vspace{5pt}    
  
\footnotesize
 
\Statex   \emph{//   **  $\OciorMVBARelaxedResilienceIT$:   without any cryptographic assumption (other than common coin), with a relaxed resilience  $n\geq 5t+1$**}  
 \Statex   \emph{//   ** $\ACD[\IDMVBA]$:  a  protocol for   $n$ parallel asynchronous  complete  information dispersal ($\ACD$)    instances; once an $\ACD$ instance is complete, there exists a retrieval scheme to correctly retrieve its delivered message. **} 
 \Statex   \emph{//   **  $\Election[\ltuple \IDMVBA,  \electionround \rtuple]$: an election protocol,  requiring at least $t+1$ inputs from distinct nodes to generate an output $\electionoutput$, for  $\electionround \in [1:n]$**}   

 \Statex   \emph{//   **   $\AMBARelaxedResilienceIT[\ltuple \IDMVBA,  \electionoutput \rtuple]$  calls  the  asynchronous Byzantine agreement ($\ABA$) protocol  by Li-Chen \cite{LCabaISIT:21}, for $n\geq 5t+1$, using only $O(1)$ common coins,  $O(n |\MVBAInputMsg|+  n^2\log \alphabetsize)$ total bits, and  $O(1)$   rounds, without any cryptographic assumption (other than common coin) **}

 \Statex
 
\State {\bf upon} receiving MVBA input  message  $\MVBAInputMsg_{\thisnodeindex}$  and $\Predicate(\MVBAInputMsg_{\thisnodeindex}) \eqlog \true$ {\bf do}:  
\Indent  
	\State  $\ShareRecord \gets \ACDRRIT[ \IDMVBA  ](\MVBAInputMsg_{\thisnodeindex})$           \quad\quad   \quad  \quad\quad   \quad   \  \emph{//    a  protocol for   $n$ parallel $\ACD$    instances }
	\For {$\electionround \inset [1:n]$}                        \label{line:MVBARRround}	
	
		\State  $\electionoutput \gets \Election[\ltuple \IDMVBA,  \electionround \rtuple]$        \label{line:MVBARRelection}     \   	\quad \quad \quad   \quad \quad \quad  \quad   \emph{//      an election protocol } 
		\State  $\bar{\wv}   \gets \DataRetrievalRelaxedResilienceIT[\ltuple \IDMVBA,  \electionoutput \rtuple](\ShareRecord[\electionoutput])$              \quad     \quad      \quad\quad  \emph{//      shuffle the code symbols originally sent from Node~$\electionoutput$ and decode } 	
		\State  $\hat{\MVBAInputMsg}  \gets \AMBARelaxedResilienceIT[\ltuple \IDMVBA,  \electionoutput \rtuple](\bar{\wv})$          \label{line:MVBARRaba}  	  \quad\quad  \quad\quad   \quad\quad \emph{//       call the asynchronous BA   protocol  by Li-Chen \cite{LCabaISIT:21} }

				\If {$\Predicate(\hat{\MVBAInputMsg}) \eqlog \true$}            \label{line:MVBARRpred}  
					\State  $\Output$  $\hat{\MVBAInputMsg}$ and $\terminate$ 					  				
				\EndIf

	\EndFor

\EndIndent

\end{algorithmic}
\end{algorithm}

\begin{algorithm}
\caption{$\ACDRRIT$ subprotocol with identifier $\IDMVBA$ for  $t<\frac{n}{5}$.  Code is shown for $\Node_{\thisnodeindex}$ }    \label{algm:ACDRRIT} 
\begin{algorithmic}[1]
\vspace{5pt}    
\footnotesize

\Statex   \emph{//   **  $\ACD[  \IDMVBA  ]$ is a protocol for $n$ parallel $\ACD$ instances   $\ACD[\ltuple \IDMVBA,  1 \rtuple],  \ACD[\ltuple \IDMVBA,  2 \rtuple], \cdots,  \ACD[\ltuple \IDMVBA,  n \rtuple]$  **} 
\Statex   \emph{//   **    $\ACD[\ltuple \IDMVBA, j \rtuple]$ is an $\ACD$ instance for delivering the message proposed from Node~$j$ ** } 		
\Statex   \emph{//   ** Once Node~$j$  completes $\ACD[\ltuple \IDMVBA, j \rtuple]$, there exists a retrieval scheme to correctly retrieve the message. **} 
\Statex   \emph{//   ** When an honest node returns and stops this protocol, then at least $n-t$  $\ACD$ instances  have been completed. **} 

\Statex

 \State  Initially set  $\ShareRecord[j]\gets    \defaultvalue $,   $\forall j\in [1:n]$
 
\Statex
 
\Statex   \emph{//   **  $\ACD$-share  **} 	 	
 
\State {\bf upon} receiving input  message  $\MVBAInputMsg_{\thisnodeindex}$ {\bf do}:  
\Indent  
	\State $[\EncodedSymbol_{1}, \EncodedSymbol_{2}, \cdots, \EncodedSymbol_{n} ]   \gets \ECCEnc(n, t+1, \MVBAInputMsg_{\thisnodeindex})$    
\State   $\send$  $(\SHARE,  \IDMVBA,    \EncodedSymbol_{j})$ to $\Node_j$,  $\forall j \in  [1:\networksizen]$  \  \quad\quad \quad \quad\quad\quad \quad \quad \emph{// exchange coded  symbols }   		
\EndIndent

\Statex

\Statex   \emph{//   **  $\ACD$-vote  **} 	 

\State {\bf upon} receiving   $(\SHARE,  \IDMVBA,  \EncodedSymbol)$ from  $\Node_j$ for the first time {\bf do}:   
\Indent  
		\State  $\ShareRecord[j] \gets    \EncodedSymbol $  
		\State $\send$ $(\VOTE, \IDMVBA)$ to   $\Node_j$   
\EndIndent

 \Statex

\Statex   \emph{//   **  vote for election **}

\State {\bf upon} receiving   $n-t$  $(\VOTE, \IDMVBA)$  messages from distinct nodes {\bf do}:  
\Indent  
	\State $\send$ $(\ELECTION, \IDMVBA)$ to  all nodes     \quad \quad   \emph{//    $\ACD[\ltuple \IDMVBA,  \thisnodeindex \rtuple]$ is complete at this point } 	
 
\EndIndent

\Statex

\Statex   \emph{//   ** confirm for election **}

\State {\bf upon} receiving   $n-t$  $(\ELECTION, \IDMVBA)$  messages from distinct nodes and  $(\CONFIRM, \IDMVBA)$  not yet sent {\bf do}:  
\Indent  
	\State $\send$ $(\CONFIRM, \IDMVBA)$ to  all nodes       	
 
\EndIndent

\State {\bf upon} receiving   $t+1$  $(\CONFIRM, \IDMVBA)$  messages from distinct nodes and  $(\CONFIRM, \IDMVBA)$  not yet sent {\bf do}:  
\Indent  
	\State $\send$ $(\CONFIRM, \IDMVBA)$ to  all nodes       	
 
\EndIndent

\Statex

\Statex   \emph{//   ** return and stop **} 
\State {\bf upon} receiving   $2t+1$  $(\CONFIRM, \IDMVBA)$  messages from distinct nodes {\bf do}:  
\Indent  
	\If {$(\CONFIRM, \IDMVBA)$   not yet sent }     
		\State $\send$ $(\CONFIRM, \IDMVBA)$ to  all nodes       	
	\EndIf
	\State  $\Return$  $\ShareRecord$	  
\EndIndent

\end{algorithmic}
\end{algorithm}

\begin{algorithm}
\caption{$\DataRetrievalRelaxedResilienceIT$  subprotocol for  $t<\frac{n}{5}$, with identifier $\IDMVBAtilde=\ltuple \IDMVBA,  \electionoutput \rtuple$. Code is shown for $\Node_{\thisnodeindex}$.}    \label{algm:DataRetrievalRelaxedResilienceIT} 
\begin{algorithmic}[1]
\vspace{5pt}    
 
\footnotesize

\State Initially set $\CodedSymbols [\electionoutput] \gets \{\}$  		

\State {\bf upon} receiving  input  $\ShareRecord[\electionoutput]$,  for $\ShareRecord[\electionoutput]:=  \EncodedSymbol^{\star}$   {\bf do}:  		 
\Indent

					\State $\send$ $(\ECHOSHARE, \IDMVBA, \electionoutput,    \EncodedSymbol^{\star} )$  to all nodes   
 
				\State {\bf wait} for  $|\CodedSymbols[\electionoutput] |=n-t$   
				\Indent
					\State  $\hat{\MVBAInputMsg}  \gets\ECCDec(n, t+1, \CodedSymbols[\electionoutput])$       
					\State  $\Return$  $\hat{\MVBAInputMsg}$					   
				\EndIndent

\EndIndent

\State {\bf upon} receiving   $(\ECHOSHARE, \IDMVBA, \electionoutput,    \EncodedSymbol)$ from  $\Node_j$ for the first time  {\bf do}:    
\Indent  
	\State  $\CodedSymbols[\electionoutput] \gets   \CodedSymbols[\electionoutput] \cup \{j: \EncodedSymbol\}$
\EndIndent

\end{algorithmic}
\end{algorithm}

\section{$\OciorMVBARelaxedResilienceIT$}\label{sec:OciorMVBARelaxedResilienceIT}

This proposed $\OciorMVBARelaxedResilienceIT$ is an error-free,  information-theoretically secure asynchronous $\MVBA$  protocol, with relaxed resilience  $n\geq 5t+1$.  
$\OciorMVBARelaxedResilienceIT$ does not rely on any   cryptographic assumptions, such as  signatures or hashing,  except for  the common coin assumption.

\begin{figure}  
\centering
\includegraphics[width=17.5cm]{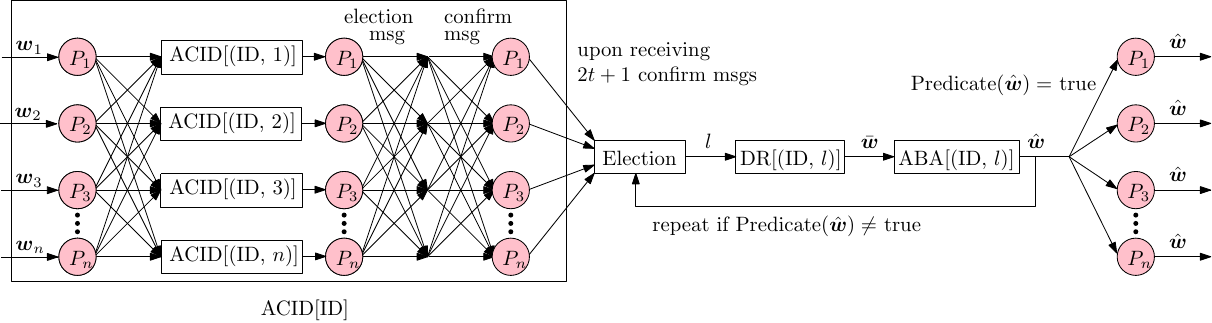}
\caption{A block diagram of the proposed $\OciorMVBARelaxedResilienceIT$ protocol with an identifier $\IDMVBA$. 
}
\label{fig:OciorMVBARelaxedResilienceIT}
\end{figure}

\subsection{Overview of the  proposed $\OciorMVBARelaxedResilienceIT$ protocol}

The proposed  $\OciorMVBARelaxedResilienceIT$ is described in Algorithm~\ref{algm:OciorMVBARelaxedResilienceIT}, along with   Algorithms~\ref{algm:ACDRRIT} and  \ref{algm:DataRetrievalRelaxedResilienceIT}. Fig.~\ref{fig:OciorMVBARelaxedResilienceIT} presents a block diagram of the proposed $\OciorMVBARelaxedResilienceIT$ protocol. 
The proposed  $\OciorMVBARelaxedResilienceIT$ consists of the algorithms $\ACDRRIT[ \IDMVBA  ]$, $\Election[\ltuple \IDMVBA,  \electionround \rtuple]$, $\DataRetrievalRelaxedResilienceIT[\ltuple \IDMVBA,  \electionoutput \rtuple]$,  and $\AMBARelaxedResilienceIT[\ltuple \IDMVBA,  \electionoutput \rtuple]$ for   $\electionround, \electionoutput\in [1:n]$.
\begin{itemize}
\item    $\ACDRRIT[ \IDMVBA  ]$:  This  is a  protocol for   $n$ parallel  $\ACD$    instances: $\ACD[\ltuple \IDMVBA,  1 \rtuple]$,  $\ACD[\ltuple \IDMVBA,  2 \rtuple], \cdots$,  $\ACD[\ltuple \IDMVBA,  n \rtuple]$.  Once an $\ACD$ instance is complete, there exists a retrieval scheme to correctly retrieve its delivered message.
\item  $\Election[\ltuple \IDMVBA,  \electionround \rtuple]$: This is an election protocol that  requires  at least $t+1$ inputs from distinct nodes to generate a random value $\electionoutput$, where  $\electionround \in [1:n]$.
\item  $\DataRetrievalRelaxedResilienceIT[\ltuple \IDMVBA,  \electionoutput \rtuple]$: An $\ACD[\ltuple \IDMVBA, \electionoutput \rtuple]$ protocol is complemented by a data retrieval protocol $\DataRetrieval[\ltuple \IDMVBA, \electionoutput \rtuple]$,  in which each node retrieves the   message proposed by $\Node_\electionoutput$ from $n$ distributed nodes.  
\begin{itemize}
\item If  Node~$\electionoutput$ is honest and has completed $\ACD[\ltuple \IDMVBA, \electionoutput\rtuple]$, then during $ \DataRetrievalRelaxedResilienceIT[\ltuple \IDMVBA,  \electionoutput \rtuple]$, each node will receive at least $2t+1$ shares	  generated from Node~$\electionoutput$, given  $n\geq 5t+1$. In this case all honest node eventually output the same message from $ \DataRetrievalRelaxedResilienceIT[\ltuple \IDMVBA,  \electionoutput \rtuple]$. 
\item Even if  Node~$\electionoutput$ is dishonest,     $\AMBARelaxedResilienceIT[\ltuple \IDMVBA,  \electionoutput \rtuple]$  ensures that all honest nodes output the same message. 
\end{itemize} 
\item $\AMBARelaxedResilienceIT[\ltuple \IDMVBA,  \electionoutput \rtuple]$ :  This is an  asynchronous Byzantine agreement  protocol that calls the protocol  by Li-Chen \cite{LCabaISIT:21} for $n\geq 5t+1$. It  uses only $O(1)$ common coins,  $O(n |\MVBAInputMsg|+  n^2\log \alphabetsize)$ total bits, and  $O(1)$   rounds, without relying on any cryptographic assumptions,   except for  the common coin assumption. 
\end{itemize}

\subsection{Analysis of $\OciorMVBARelaxedResilienceIT$}    \label{sec:AnalysisOciorMVBARelaxedResilienceIT}

\begin{theorem}  [Agreement]  \label{thm:MVBARRagreement}
In $\OciorMVBARelaxedResilienceIT$,  given $n\geq 5t+1$, if any two honest nodes output $\wv'$ and $\wv''$, respectively, then  $\wv'=\wv''$.        
\end{theorem}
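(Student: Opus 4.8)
The plan is to show that all honest nodes that produce an output do so in the same iteration of the loop in Lines~\ref{line:MVBARRround}--\ref{line:MVBARRpred} of Algorithm~\ref{algm:OciorMVBARelaxedResilienceIT}, and with the same value. The only ingredients I would use are the Consistency property of the common coin $\Election[\ltuple \IDMVBA, \electionround \rtuple]$, the Consistency property of the asynchronous Byzantine agreement subprotocol $\AMBARelaxedResilienceIT$ (which invokes the $\ABA$ of Li-Chen \cite{LCabaISIT:21}, whose required properties hold for $n \geq 5t+1$), and the fact that $\Predicate$ is a deterministic, publicly known function. In particular, $\ABA$-Validity is \emph{not} needed, so it will not matter that distinct honest nodes may feed distinct retrieved messages $\bar{\wv}$ into $\AMBARelaxedResilienceIT$ within a given iteration.

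First I would record two structural facts about Algorithm~\ref{algm:OciorMVBARelaxedResilienceIT}. (a) Each honest node runs the loop over $\electionround \in [1:n]$ strictly in order, so if an honest node outputs in iteration $r$ it has completed iterations $1,\dots,r-1$ without terminating and, for every $r' \leq r$, has invoked $\Election[\ltuple \IDMVBA, r' \rtuple]$ in Line~\ref{line:MVBARRelection} together with the corresponding $\AMBARelaxedResilienceIT$ instance in Line~\ref{line:MVBARRaba} and waited for its output. (b) Within a fixed iteration $r$ the election identifier $\ltuple \IDMVBA, r \rtuple$ is identical at every honest node, so by Consistency of the common coin all honest nodes that reach iteration $r$ obtain the same value $\electionoutput_r$ and therefore invoke the \emph{same} instance $\AMBARelaxedResilienceIT[\ltuple \IDMVBA, \electionoutput_r \rtuple]$ in that iteration.

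Next I would take honest nodes $\Node_a$ and $\Node_b$ outputting $\wv'$ in iteration $r_a$ and $\wv''$ in iteration $r_b$ respectively, and assume without loss of generality $r_a \leq r_b$. Since $\Node_a$ terminates in iteration $r_a$, the instance $\AMBARelaxedResilienceIT[\ltuple \IDMVBA, \electionoutput_{r_a} \rtuple]$ must have returned the value $\wv'$ at $\Node_a$ with $\Predicate(\wv') = \true$ at Line~\ref{line:MVBARRpred}. Because $r_a \leq r_b$, node $\Node_b$ also reaches iteration $r_a$ and, by fact (b), invokes that very $\ABA$ instance; by $\ABA$-Consistency the output it obtains there equals $\wv'$, and by determinism of $\Predicate$ it evaluates $\Predicate(\wv') = \true$, so $\Node_b$ outputs $\wv'$ and terminates already in iteration $r_a$. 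Hence $r_b = r_a$ and $\wv'' = \wv'$. The step I expect to require the most care is the identification of the two $\ABA$ instances invoked by $\Node_a$ and $\Node_b$: it rests on matching iteration indices (from the sequential loop) and matching election outputs (from Consistency of the common coin applied to the per-iteration identifier); once that is pinned down, the rest is an immediate appeal to $\ABA$-Consistency and the determinism of $\Predicate$.
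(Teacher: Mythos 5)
Your proof is correct and follows essentially the same route as the paper's: the paper likewise argues that any two outputting honest nodes must output in the same round (via Consistency of $\Election$ together with Consistency of $\AMBARelaxedResilienceIT$) and then that they output the same value in that round (via Consistency of $\AMBARelaxedResilienceIT$). Your write-up merely makes explicit the WLOG ordering of rounds and the identification of the two $\ABA$ instances, which the paper leaves implicit.
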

\begin{proof}
In $\OciorMVBARelaxedResilienceIT$,  if any two honest nodes output values at Rounds  $\electionround$ and $\electionround'$ (see Line~\ref{line:MVBARRround} of Algorithm~\ref{algm:OciorMVBARelaxedResilienceIT}), respectively, then $\electionround=\electionround'$, due to the consistency property of the protocols  $\Election[\ltuple \IDMVBA,  \electionround \rtuple]$  and $\AMBARelaxedResilienceIT[\ltuple \IDMVBA,  \electionoutput \rtuple]$. 
Moreover,  at  the same round $\electionround$,  if any two honest nodes output $\wv'$ and $\wv''$, respectively, then  $\wv'=\wv''$, due to the consistency property of the protocol   $\AMBARelaxedResilienceIT[\ltuple \IDMVBA,  \electionoutput \rtuple]$. 
\end{proof}

\begin{theorem}  [Termination]  \label{thm:MVBARRTermination}
In $\OciorMVBARelaxedResilienceIT$,  given $n\geq 5t+1$, every honest node eventually outputs a value and terminates.        
\end{theorem}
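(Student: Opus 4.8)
The plan is to prove termination in three stages. First I would show that the preliminary dispersal sub-protocol $\ACDRRIT[\IDMVBA]$ of Algorithm~\ref{algm:ACDRRIT} returns $\ShareRecord$ at every honest node, and that by its Integrity property this pins down a set $\Ic^{\star}\subseteq[1:n]\setminus\Fc$ of permanently-honest nodes with $|\Ic^{\star}|\geq n-2t\geq 3t+1$ whose dispersals are complete. Termination of $\ACDRRIT$ follows by chaining its quorum conditions: each honest node multicasts $\SHARE$, so each honest node receives $\geq n-t$ $\SHARE$ messages and replies with $\VOTE$; hence each honest node collects $n-t$ $\VOTE$ messages and sends $\ELECTION$; hence each honest node collects $n-t$ $\ELECTION$ messages and sends $\CONFIRM$; and since the $\geq n-t\geq 2t+1$ honest $\CONFIRM$ messages meet the $2t+1$ threshold (the intermediate $t+1$/$2t+1$ step being the standard Bracha amplification, which only helps), each honest node returns. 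The set $\Ic^{\star}$ is thereby fixed before any common coin is revealed.

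Second, I would argue by strong induction on the loop index $\electionround$ that the honest nodes move through the loop of Algorithm~\ref{algm:OciorMVBARelaxedResilienceIT} in lockstep: if no honest node has terminated before round $\electionround$, then all honest nodes enter round $\electionround$. Indeed, if one honest node reaches round $\electionround$, no honest node terminated earlier — otherwise, by the Consistency of $\AMBARelaxedResilienceIT$, every honest node that reached that earlier round would have obtained the same predicate-satisfying output and terminated, contradicting the first node's progress. Consequently all $\geq n-t\geq t+1$ honest nodes invoke $\Election[\ltuple\IDMVBA,\electionround\rtuple]$ and obtain a common $\electionoutput$ (Termination and Consistency of the coin); all of them then invoke $\DataRetrievalRelaxedResilienceIT[\ltuple\IDMVBA,\electionoutput\rtuple]$, each sends $\ECHOSHARE$, each collects $n-t$ such messages and decodes, so that sub-protocol halts everywhere; all of them then run $\AMBARelaxedResilienceIT[\ltuple\IDMVBA,\electionoutput\rtuple]$ which, being a Byzantine agreement valid for $n\geq 5t+1$ (it invokes the ABA of \cite{LCabaISIT:21}), halts everywhere with a common output $\hat{\wv}$; finally either $\Predicate(\hat{\wv})=\true$ and all honest nodes output and terminate in round $\electionround$, or all proceed to round $\electionround+1$. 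So the honest nodes stay synchronized up to the first round whose ABA output satisfies the predicate, at which round they all terminate.

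Third, I would locate such a round. Since $\Ic^{\star}\neq\emptyset$ and the election outputs are uniform over $[1:n]$ and unpredictable, there is a round $\electionround^{\star}\leq n$ whose elected $\electionoutput^{\star}$ lies in $\Ic^{\star}$ — here one uses that the $n$ election instances can be instantiated so that every index of $[1:n]$, in particular some index of the nonempty $\Ic^{\star}$, is elected within the $n$ rounds, with the first such round occurring in $O(1)$ rounds in expectation (matching the stated round complexity). If no honest node has terminated before round $\electionround^{\star}$, then at that round Node $\electionoutput^{\star}$ is honest and has completed its dispersal, so every honest node's data-retrieval instance reconstructs exactly the message $\MVBAInputMsg_{\electionoutput^{\star}}$ that Node $\electionoutput^{\star}$ proposed, and $\Predicate(\MVBAInputMsg_{\electionoutput^{\star}})=\true$ since Node $\electionoutput^{\star}$ is honest; hence all honest nodes feed the same value $\MVBAInputMsg_{\electionoutput^{\star}}$ to $\AMBARelaxedResilienceIT$, which by its Validity property outputs $\MVBAInputMsg_{\electionoutput^{\star}}$ at every honest node, passing the $\Predicate$ test. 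Combined with the second stage, every honest node terminates by round $\electionround^{\star}\leq n$.

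The hard part will be the correctness of $\DataRetrievalRelaxedResilienceIT$ at an honest, completed-dispersal leader $\electionoutput^{\star}$: each honest node collects only $n-t$ $\ECHOSHARE$ symbols yet must decode $\MVBAInputMsg_{\electionoutput^{\star}}$. The accounting is that, because Node $\electionoutput^{\star}$'s dispersal completed, at least $n-3t$ of the collected symbols (namely $2t+1$ when $n=5t+1$) are correct symbols held by honest nodes, at most $t$ are erasures ($\defaultvalue$ values held by honest nodes that missed the $\SHARE$), and at most $t$ are adversarial errors; the $(n,t+1)$ Reed--Solomon decoder $\ECCDec(n,t+1,\cdot)$ then recovers the codeword precisely when $2t+t+(t+1)\leq n-t$, i.e.\ exactly under the relaxed resilience $n\geq 5t+1$. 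Secondary care is needed to keep the lockstep induction watertight — ensuring no invoked sub-protocol ever blocks forever — and to justify the coverage of $\Ic^{\star}$ by the $n$ election instances.
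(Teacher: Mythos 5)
Your proposal is correct and follows essentially the same route as the paper's proof: Termination plus Integrity of $\ACDRRIT$ yield the set $\Ic^{\star}$ of honest completed dispersers, the honest nodes proceed round-by-round in lockstep via the consistency of the election and of $\AMBARelaxedResilienceIT$, and termination occurs at the first round whose elected leader lies in $\Ic^{\star}$, where your error/erasure accounting ($2t$ errors-worth plus $t$ erasures plus $k=t+1$ within $n-t$ observations, exactly needing $n\geq 5t+1$) makes explicit the step the paper only asserts as ``each node will receive at least $2t+1$ shares.'' The one soft spot you flag --- that a leader in $\Ic^{\star}$ is guaranteed to be elected within the $n$ iterations of the loop rather than merely in $O(1)$ expected rounds --- is equally unaddressed in the paper's own proof, so it is a shared imprecision rather than a defect of your argument.
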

\begin{proof}
In this setting, every honest node eventually returns $\ShareRecord$ and terminates from the protocol $\ACDRRIT[ \IDMVBA ]$, due to the Termination property of this protocol.
Furthermore, by the  Integrity property of  $\ACDRRIT[ \IDMVBA ]$, if an honest node  returns $\ShareRecord$ and terminates from the protocol $\ACDRRIT[ \IDMVBA ]$, then there exists a set $\Ic^{\star}$ such that the following conditions hold: ) $\Ic^{\star}\subseteq [1:n]\setminus \Fc$, where $\Fc$ denotes the set of indexes of all dishonest nodes; 2)  $|\Ic^{\star}| \geq n-2t$; and  3) for any $i\in \Ic^{\star}$,  $\Node_i$ has completed the dispersal $\ACD[\ltuple \IDMVBA, i \rtuple]$.      

Subsequently, every honest node eventually runs  $\electionoutput \gets \Election[\ltuple \IDMVBA,  \electionround \rtuple]$  in Line~\ref{line:MVBARRelection} of Algorithm~\ref{algm:OciorMVBARelaxedResilienceIT}, at the same round $\electionround$.
If  Node~$\electionoutput$ is honest and  $\electionoutput\in  \Ic^{\star}$, then during $ \DataRetrievalRelaxedResilienceIT[\ltuple \IDMVBA,  \electionoutput \rtuple]$, each node will receive at least $2t+1$ shares	  generated from Node~$\electionoutput$, given  $n\geq 5t+1$. In this case all honest node eventually output the same message  from both $ \DataRetrievalRelaxedResilienceIT[\ltuple \IDMVBA,  \electionoutput \rtuple]$ and $\AMBARelaxedResilienceIT[\ltuple \IDMVBA,  \electionoutput \rtuple]$, and then terminate. 
If  Node~$\electionoutput$ is dishonest and the message $\hat{\MVBAInputMsg}$ output by  $\AMBARelaxedResilienceIT[\ltuple \IDMVBA,  \electionoutput \rtuple]$ in Line~\ref{line:MVBARRaba} does not satisfy   $\Predicate(\hat{\MVBAInputMsg}) = \true$, then all honest nodes proceed to the next round. 
All honest node eventually terminates if, at some round~$\electionround$, $\Election[\ltuple \IDMVBA,  \electionround \rtuple]$  outputs a value $\electionoutput$ such that  Node~$\electionoutput$ is honest and $\electionoutput\in  \Ic^{\star}$. 	
\end{proof}

\begin{theorem}  [External Validity]  \label{thm:MVBARRExternalvalidity}
In $\OciorMVBARelaxedResilienceIT$,  if an honest node outputs a value $\MVBAInputMsg$, then $\Predicate(\MVBAInputMsg)=\true$.    
\end{theorem}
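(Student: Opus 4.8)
The plan is to argue directly from the pseudocode of Algorithm~\ref{algm:OciorMVBARelaxedResilienceIT}, mirroring the proof of Theorem~\ref{thm:RMVBAaExternalvalidity}. First I would identify every point at which an honest node running $\OciorMVBARelaxedResilienceIT$ produces an output: the only such point is the $\Output$ statement inside the loop over $\electionround$, and it is guarded by the conditional at Line~\ref{line:MVBARRpred}, which is entered only when $\Predicate(\hat{\MVBAInputMsg}) \eqlog \true$ holds for the candidate value $\hat{\MVBAInputMsg}$ returned by $\AMBARelaxedResilienceIT[\ltuple \IDMVBA, \electionoutput \rtuple]$ in that round. Since no processing is applied to $\hat{\MVBAInputMsg}$ between the predicate test and the $\Output$ statement, the value that an honest node outputs is exactly the value on which the test was performed.

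Second, I would note that the $\Predicate$ function is, by Definition~\ref{def:OciorMVBA}, known to all nodes and is evaluated locally, so an honest node is always able to carry out this check; hence whenever an honest node outputs some $\MVBAInputMsg$, it must already have verified $\Predicate(\MVBAInputMsg) = \true$. This gives the claim immediately. Unlike the Agreement and Termination results, External Validity uses neither the relaxed resilience bound $n \geq 5t+1$ nor any property of the sub-protocols $\ACDRRIT$, $\Election$, $\DataRetrievalRelaxedResilienceIT$, or $\AMBARelaxedResilienceIT$; it is a purely syntactic consequence of where the output is emitted.

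There is essentially no obstacle here, so I would expect the written proof to be a single sentence. The only point worth stating explicitly is that the predicate check at Line~\ref{line:MVBARRpred} is applied to the same object that is subsequently output, so no gap can arise; everything else follows by inspection of Algorithm~\ref{algm:OciorMVBARelaxedResilienceIT}.
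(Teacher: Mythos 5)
Your proposal is correct and matches the paper's own proof, which is exactly the one-line observation that any output at an honest node is guarded by the predicate check at Line~\ref{line:MVBARRpred} of Algorithm~\ref{algm:OciorMVBARelaxedResilienceIT}. The additional remarks about local evaluability of $\Predicate$ and the irrelevance of the resilience bound are accurate but not needed.
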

\begin{proof}
 In $\OciorMVBARelaxedResilienceIT$,  if an honest node outputs a value $\MVBAInputMsg$,  it has verified that $\Predicate(\MVBAInputMsg)=\true$ at Line~\ref{line:MVBARRpred} of Algorithm~\ref{algm:OciorMVBARelaxedResilienceIT}.
\end{proof}

\begin{algorithm} [H]
\caption{$\OciorMVBAHash$  protocol, with identifier $\IDMVBA$, for $n\geq 3t+1$. Code is shown for $\Node_{\thisnodeindex}$.}    \label{algm:OciorMVBAh} 
\begin{algorithmic}[1]
\vspace{5pt}    
\footnotesize

\Statex   \emph{//   **  Merkle tree is implemented here for vector commitment  based on hashing    **} 	
\Statex   \emph{//   **  $\VCCom( )$     outputs a  commitment, i.e., Merkle root, with $O(\kappa)$ bits **} 	
\Statex   \emph{//   **  $\VCOpen()$ returns a proof that the targeted value  is the committed   element of  the vector   **} 	
\Statex   \emph{//   **  $\VCVerify(j, \vectorcommitment,  \EncodedSymbol,   \proofpositionvc)$ returns true only if  $\proofpositionvc$ is a valid  proof that $\vectorcommitment$  is the commitment of  a vector whose $j$th element is   $ \EncodedSymbol$   **}

\Statex

\Statex

\State {\bf upon} receiving MVBA input  message  $\MVBAInputMsg_{\thisnodeindex}$  and $\Predicate(\MVBAInputMsg_{\thisnodeindex}) \eqlog \true$ {\bf do}:  
\Indent  
	\State  $[\LockRecord, \ReadyRecord,  \FinishRecord, \ShareRecord] \gets \ACDh[ \IDMVBA  ](\MVBAInputMsg_{\thisnodeindex})$            \quad   \quad   \  \emph{//    a  protocol for   $n$ parallel $\ACD$    instances  }
	\For {$\electionround \inset [1:n]$}   \label{line:MVBAHashABAround} 
	
		\State  $\electionoutput \gets \Election[\ltuple \IDMVBA,  \electionround \rtuple]$    \label{line:MVBAHashElection}   \quad \quad \quad \quad \quad \quad 	\quad \quad \quad   \quad \quad \quad  \quad   \emph{//      an election protocol } 
		\State  $\ABBAoutput\gets \ABBBA[\ltuple \IDMVBA,   \electionoutput\rtuple](\ReadyRecord[\electionoutput], \FinishRecord[\electionoutput])$      \quad\quad \quad\quad  \emph{//      asynchronous  biased binary Byzantine agreement   ($\ABBBA$)   } 
		\State  $\ABAoutput\gets \ABBA[\ltuple \IDMVBA,  \electionoutput\rtuple](\ABBAoutput)$            \  \quad\quad\quad \quad \quad  \quad\quad \quad \quad \quad  \quad\quad \emph{//     an asynchronous  binary Byzantine agreement  ($\ABBA$)   }

		\If {$\ABAoutput \eqlog 1$}     \label{line:MVBAHashABAoutput1}

                            \State $\hat{\MVBAInputMsg}_\electionoutput\gets \DataRetrievalh[\ltuple \IDMVBA,  \electionoutput \rtuple] (\LockRecord[\electionoutput], \ShareRecord[\electionoutput])$	 \     \quad \quad \quad  \quad  \emph{//   Data Retrieval (DR)    } 
				\If {$\Predicate(\hat{\MVBAInputMsg}_\electionoutput) \eqlog \true$}      \label{line:MVBAHashPredicateCond}
					\State  $\Output$  $\hat{\MVBAInputMsg}_\electionoutput$ and $\terminate$ 					  				
				\EndIf

		\EndIf

	\EndFor

\EndIndent

\end{algorithmic}
\end{algorithm}

\begin{algorithm}  [H]
\caption{$\ACDh$ subprotocol with identifier $\IDMVBA$, based on hashing.  Code is shown for $\Node_{\thisnodeindex}$}    \label{algm:ACDh} 
\begin{algorithmic}[1]
\vspace{5pt}       
\footnotesize

\Statex   \emph{//   **  $\ACD[  \IDMVBA  ]$ is a protocol for $n$ parallel $\ACD$ instances   $\ACD[\ltuple \IDMVBA,  1 \rtuple],  \ACD[\ltuple \IDMVBA,  2 \rtuple], \cdots,  \ACD[\ltuple \IDMVBA,  n \rtuple]$ **} 
\Statex   \emph{//   **    $\ACD[\ltuple \IDMVBA, j \rtuple]$ is an $\ACD$ instance for delivering the message proposed from Node~$j$ ** } 		
\Statex   \emph{//   ** Once Node~$j$  completes $\ACD[\ltuple \IDMVBA, j \rtuple]$, there exists a retrieval scheme to correctly retrieve the message **} 
\Statex   \emph{//   ** When an honest node returns and stops this protocol, then at least $n-t$  $\ACD$ instances  have been completed **} 
 \Statex   \emph{//   **  $\ECEnc()$ and $\ECDec()$ are   encoding function and decoding function  of $(n, k)$ erasure code **}

\Statex

\State \emph{// initialization:}
\Indent  

\State $\LockRecord \gets \{\}; \ReadyRecord \gets \{\};  \FinishRecord\gets \{\}; \ShareRecord\gets\{\}; \HashRecord\gets\{\}$  

	\For {$j \inset [1:n]$}	 
		\State $\ShareRecord[j]\gets   \ltuple\defaultvalue,  \defaultvalue,  \defaultvalue \rtuple  ;  \LockRecord[j]\gets 0;\ReadyRecord[j]\gets 0;  \FinishRecord[j]\gets 0 $  	
	\EndFor

\EndIndent

\Statex

\Statex   \emph{//   **  $\ACD$-share  **}

\State {\bf upon} receiving input  message  $\MVBAInputMsg_{\thisnodeindex}$ {\bf do}:  
\Indent  

	\State $[\EncodedSymbol_{1}, \EncodedSymbol_{2}, \cdots, \EncodedSymbol_{n} ]   \gets \ECEnc(n, t+1, \MVBAInputMsg_{\thisnodeindex})$       
	\State $\vectorcommitment \gets \VCCom([\EncodedSymbol_{1}, \EncodedSymbol_{2}, \cdots, \EncodedSymbol_{n} ] )$ 
	\For {$j \inset [1:n]$}
		\State $\proofpositionvc_{j} \gets \VCOpen(\vectorcommitment, \EncodedSymbol_{j}, j)$  
		\State $\send$ $(\SHARE,  \IDMVBA, \vectorcommitment,  \EncodedSymbol_{j}, \proofpositionvc_{j})$ to  $\Node_j$
	\EndFor

\EndIndent

\Statex

\Statex   \emph{//   **  $\ACD$-vote  **} 	 

\State {\bf upon} receiving   $(\SHARE,  \IDMVBA,\vectorcommitment, \EncodedSymbol, \proofpositionvc)$ from  $\Node_j$ for the first time {\bf do}:   
\Indent  
	\If {$\VCVerify(\thisnodeindex, \vectorcommitment,  \EncodedSymbol,   \proofpositionvc) \eqlog \true$}  
		\State  $\ShareRecord[j] \gets \ltuple \vectorcommitment,  \EncodedSymbol,   \proofpositionvc \rtuple; \HashRecord[\vectorcommitment] \gets j$   
		\State $\send$ $(\VOTE, \IDMVBA,   \vectorcommitment)$ to  all nodes   
	\EndIf    
\EndIndent

\Statex

\Statex   \emph{//   **  $\ACD$-lock **} 

\State {\bf upon} receiving  $n-t$  $(\VOTE, \IDMVBA, \vectorcommitment)$  messages from distinct nodes,  for   the same  $\vectorcommitment$ {\bf do}:   \label{line:ACDhreadyCond} 
\Indent  
		\State  $\wait$ until  $\vectorcommitment \in  \HashRecord$ 
		\State $\jstar\gets\HashRecord[\vectorcommitment]; \LockRecord[\jstar]\gets 1$     \label{line:ACDhready} 
		\State $\send$ $(\LOCK, \IDMVBA, \vectorcommitment)$ to  all nodes    
\EndIndent

\Statex 

\Statex   \emph{//   **  $\ACD$-ready **}  

\State {\bf upon} receiving  $n-t$  $(\LOCK, \IDMVBA,  \vectorcommitment)$  messages from distinct nodes, for   the same  $\vectorcommitment$ {\bf do}:  
\Indent  
	\State  $\wait$ until  $\vectorcommitment \in  \HashRecord$ 
	\State $\jstar\gets\HashRecord[\vectorcommitment]; \ReadyRecord[\jstar] \gets 1$       
	\State $\send$ $(\READY, \IDMVBA, \vectorcommitment)$  to  all nodes         
\EndIndent

\Statex

\Statex   \emph{//   **  $\ACD$-finish **}  

\State {\bf upon} receiving  $n-t$  $(\READY, \IDMVBA,  \vectorcommitment)$  messages from distinct nodes,  for   the same  $\vectorcommitment$ {\bf do}:  
\Indent  
	\State  $\wait$ until  $\vectorcommitment \in  \HashRecord$ 
	\State $\jstar\gets\HashRecord[\vectorcommitment]; \FinishRecord[\jstar] \gets 1$    

	\State $\send$ $(\FINISH, \IDMVBA)$  to $\Node_{\jstar}$      
\EndIndent

\Statex

\Statex   \emph{//   **  vote for election **}

\State {\bf upon} receiving   $n-t$  $(\FINISH, \IDMVBA)$  messages from distinct nodes {\bf do}:  
\Indent  
	\State $\send$ $(\ELECTION, \IDMVBA)$ to  all nodes     \quad \quad   \emph{//      $\ACD[\ltuple \IDMVBA,  \thisnodeindex \rtuple]$ is complete at this point   } 	
 
\EndIndent

\Statex

\Statex   \emph{//   ** confirm for election **}

\State {\bf upon} receiving   $n-t$  $(\ELECTION, \IDMVBA)$  messages from distinct nodes and  $(\CONFIRM, \IDMVBA)$  not yet sent {\bf do}:  
\Indent  
	\State $\send$ $(\CONFIRM, \IDMVBA)$ to  all nodes       	
 
\EndIndent

\State {\bf upon} receiving   $t+1$  $(\CONFIRM, \IDMVBA)$  messages from distinct nodes and  $(\CONFIRM, \IDMVBA)$  not yet sent {\bf do}:  
\Indent  
	\State $\send$ $(\CONFIRM, \IDMVBA)$ to  all nodes       	
 
\EndIndent

\Statex

\Statex   \emph{//   ** return and stop **} 
\State {\bf upon} receiving   $2t+1$  $(\CONFIRM, \IDMVBA)$  messages from distinct nodes {\bf do}:  
\Indent  
	\If {$(\CONFIRM, \IDMVBA)$   not yet sent }     
		\State $\send$ $(\CONFIRM, \IDMVBA)$ to  all nodes       	
	\EndIf
	\State  $\Return$  $[\LockRecord, \ReadyRecord,  \FinishRecord, \ShareRecord]$	  
\EndIndent

\end{algorithmic}
\end{algorithm}

\begin{algorithm}  [H]
\caption{$\DataRetrievalh$  subprotocol, with identifier $\IDMVBAtilde=\ltuple \IDMVBA,  \electionoutput \rtuple$, based on hashing. Code is shown for $\Node_{\thisnodeindex}$.}    \label{algm:DataRetrievalh} 
\begin{algorithmic}[1]
\vspace{5pt}    
\footnotesize

\State Initially set  $\CodedSymbols[\electionoutput] \gets \{\}$ 		

\State {\bf upon} receiving  input  $(\lockindicator, \share)$   {\bf do}:  		
\Indent

				\If {$(\lockindicator \eqlog 1)\AND(\share \neq   \ltuple\defaultvalue,  \defaultvalue,  \defaultvalue \rtuple )$}   
					\State $\ltuple \vectorcommitment^{\star},  \EncodedSymbol^{\star},   \proofpositionvc^{\star} \rtuple \gets \share$  					
					\State $\send$ $(\ECHOSHARE, \IDMVBA, \electionoutput, \vectorcommitment^{\star},  \EncodedSymbol^{\star}, \proofpositionvc^{\star})$  to all nodes
				\EndIf

				\State {\bf wait} for   $|\CodedSymbols[\electionoutput][\vectorcommitment] |=t+1$  for some  $\vectorcommitment$
				\Indent
					\State  $\hat{\MVBAInputMsg}  \gets\ECDec(n, t+1, \CodedSymbols[\electionoutput][\vectorcommitment])$  
					\If {$\VCCom( \ECEnc(n, t+1, \hat{\MVBAInputMsg})) = \vectorcommitment$}					
						\State  $\Return$  $\hat{\MVBAInputMsg}$					  
					\Else
						\State  $\Return$  $\bot$					  			
					\EndIf
				\EndIndent

\EndIndent

\State {\bf upon} receiving   $(\ECHOSHARE, \IDMVBA, \electionoutput, \vectorcommitment,  \EncodedSymbol, \proofpositionvc)$ from  $\Node_j$ for the first time, for some $\vectorcommitment, \EncodedSymbol, \proofpositionvc$ {\bf do}:  
\Indent  
	\If {$\VCVerify(j, \vectorcommitment,  \EncodedSymbol,   \proofpositionvc) \eqlog \true$}  
		\If {$\vectorcommitment\notin \CodedSymbols[\electionoutput] $}  
			\State  $\CodedSymbols[\electionoutput][\vectorcommitment]\gets \{j: \EncodedSymbol\}$ 
		\Else
			\State  $\CodedSymbols[\electionoutput][ \vectorcommitment] \gets \CodedSymbols[\electionoutput][ \vectorcommitment] \cup \{j: \EncodedSymbol\}$ 
		\EndIf   
	\EndIf    
\EndIndent

\end{algorithmic}
\end{algorithm}

\begin{figure}  [H]
\centering
\includegraphics[width=16cm]{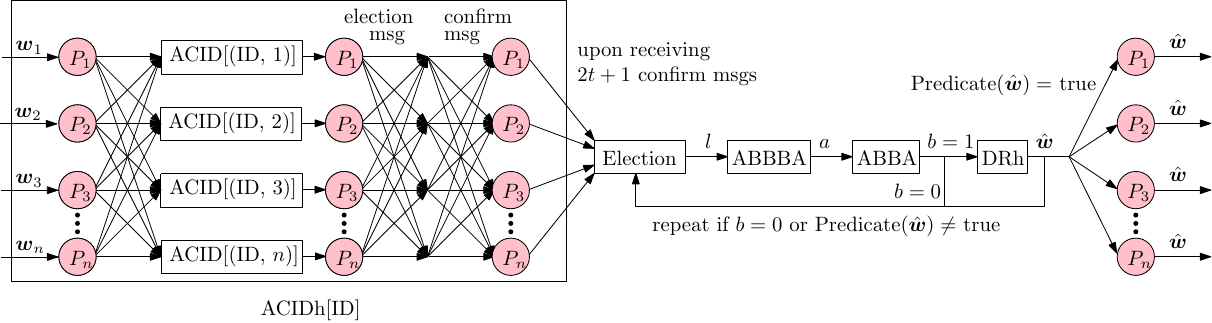}
\caption{A block diagram of the proposed $\OciorMVBAHash$ protocol with an identifier $\IDMVBA$.
}
\label{fig:OciorMVBAHash}
\end{figure}

\section{$\OciorMVBAHash$}\label{sec:OciorMVBAHash}

This proposed $\OciorMVBAHash$ is a hash-based asynchronous $\MVBA$ protocol.
$\OciorMVBAHash$ achieves consensus with a communication complexity of $O(n |\MVBAInputMsg| + n^3)$ bits, an expected round complexity of $O(1)$ rounds, and an expected $O(1)$ number of common coins, given $n \geq 3t + 1$.

\subsection{Overview of the  proposed $\OciorMVBAHash$ protocol}

The proposed  $\OciorMVBAHash$ is described in Algorithm~\ref{algm:OciorMVBAh}, along with   Algorithms~\ref{algm:ABBBA}, \ref{algm:ACDh}, and \ref{algm:DataRetrievalh}. Fig.~\ref{fig:OciorMVBAHash} presents a block diagram of the proposed $\OciorMVBAHash$ protocol.  
In this protocol, we use a vector commitment implemented with a Merkle tree based on hashing. \\

\noindent  {\bf Vector commitment.}   A vector commitment consists of the following algorithms: 
\begin{itemize}
\item    $\VCCom(\yv) \to \vectorcommitment$:  Given an input vector $\yv= [\EncodedSymbol_{1}, \EncodedSymbol_{2}, \cdots, \EncodedSymbol_{n} ]$ of size $n$,  this algorithm  outputs a  commitment $\vectorcommitment$, i.e., Merkle root, with $O(\kappa)$ bits. 
\item   $\VCOpen(\vectorcommitment, \EncodedSymbol_{j}, j) \to \proofpositionvc_{j}$: Given inputs $(\vectorcommitment, \EncodedSymbol_{j}, j)$,  this algorithm returns a value $\proofpositionvc_{j}$ to prove that the targeted value $\EncodedSymbol_{j}$ is the   $j$th committed element of  the vector. 
\item $\VCVerify(j, \vectorcommitment,  \EncodedSymbol_{j},   \proofpositionvc_{j}) \to \true/\false$:   This algorithm  returns true only if  $\proofpositionvc_{j}$ is a valid  proof that $\vectorcommitment$  is the commitment of  a vector whose $j$th element is   $\EncodedSymbol_{j}$. \\
\end{itemize}

The proposed  $\OciorMVBAHash$ consists of the algorithms $\ACDh[ \IDMVBA  ]$, $\Election[\ltuple \IDMVBA,  \electionround \rtuple]$, $\ABBBA[\ltuple \IDMVBA,   \electionoutput \rtuple]$, $\ABBA[\ltuple \IDMVBA,   \electionoutput\rtuple]$, and  $\DataRetrievalh[\ltuple \IDMVBA,  \electionoutput \rtuple]$,  for   $\electionround, \electionoutput\in [1:n]$.
\begin{itemize}
\item    $\ACDRRIT[ \IDMVBA  ]$:  This  is a  protocol for   $n$ parallel  $\ACD$    instances: $\ACD[\ltuple \IDMVBA,  1 \rtuple]$,  $\ACD[\ltuple \IDMVBA,  2 \rtuple], \cdots$,  $\ACD[\ltuple \IDMVBA,  n \rtuple]$.  Once an $\ACD$ instance is complete, there exists a retrieval scheme to correctly retrieve its delivered message.
\item  $\Election[\ltuple \IDMVBA,  \electionround \rtuple]$: This is an election protocol that  requires  at least $t+1$ inputs from distinct nodes to generate a random value $\electionoutput$, where  $\electionround \in [1:n]$.
\item  $\ABBBA[\ltuple \IDMVBA,   \electionoutput \rtuple]$:   This is an asynchronous  biased  binary BA protocol. It has two inputs $(\abbainputA, \abbainputB)$, for some $\abbainputA, \abbainputB \in \{0,1\}$. It guarantees the following properties: 
 \begin{itemize}
\item   {Conditional termination:} Under an input condition---i.e.,  if one honest node inputs its second number as $\abbainputB =1$ then at least $t+1$ honest nodes  input  their first numbers as $\abbainputA =1$---then every honest node eventually outputs a value and terminates.     
\item   { Biased validity:} If $t+1$ honest nodes input the second number as $\abbainputB=1$, then any honest node that terminates outputs $1$.       
\item   { Biased integrity:} If an honest node outputs $1$, then at least one honest node inputs $\abbainputA=1$ or $\abbainputB=1$.             
\end{itemize}
\item  $\ABBA[\ltuple \IDMVBA,   \electionoutput \rtuple]$:   This is an asynchronous   binary $\BA$   protocol.  
\item  $\DataRetrievalh[\ltuple \IDMVBA,  \electionoutput \rtuple]$: This is a data retrieval protocol associated with an $\ACD[\ltuple \IDMVBA, \electionoutput \rtuple]$ protocol.  It is activated only if  $\ABAoutput=1$ (see Line~\ref{line:MVBAHashABAoutput1}  of Algorithm~\ref{algm:OciorMVBAh}), where    $\ABAoutput$ is the output of $\ABBA[\ltuple \IDMVBA,  \electionoutput\rtuple]$.  
 \begin{itemize}
\item   The instance of $\ABAoutput=1$ reveals that at least one honest node outputs $\ABBAoutput=1$ from  $\ABBBA[\ltuple \IDMVBA,   \electionoutput \rtuple]$, which further suggests that at least one honest node inputs    $\ReadyRecord[\electionoutput]=1$ or $\FinishRecord[\electionoutput] =1$ into $\ABBBA[\ltuple \IDMVBA,   \electionoutput \rtuple]$,  based on the biased integrity of $\ABBBA$.    
\item   When one honest node inputs    $\ReadyRecord[\electionoutput]=1$ or $\FinishRecord[\electionoutput] =1$, it is guaranteed that at least $n-2t$ honest nodes have stored correct shares sent from Node~$\electionoutput$ (see Lines~\ref{line:ACDhreadyCond} and \ref{line:ACDhready} of Algorithm~\ref{algm:ACDh}), which implies that every honest node eventually retrieves the same message from $\DataRetrievalh[\ltuple \IDMVBA,  \electionoutput \rtuple]$.            
\end{itemize}
\end{itemize}

\subsection{Analysis of $\OciorMVBAHash$}    \label{sec:AnalysisOciorMVBAHash}

\begin{theorem}  [Agreement]  \label{thm:OciorMVBAHashagreement}
In $\OciorMVBAHash$,  given $n\geq 3t+1$, if any two honest nodes output $\wv'$ and $\wv''$, respectively, then  $\wv'=\wv''$.        
\end{theorem}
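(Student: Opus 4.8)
The plan is to prove agreement in two stages. First, I would show that the loop of Line~\ref{line:MVBAHashABAround} of Algorithm~\ref{algm:OciorMVBAh} proceeds in lock-step across honest nodes, in the sense that at every round index $\electionround$ all honest nodes share the same elected leader $\electionoutput$, the same binary value $\ABAoutput$, and --- crucially --- the same retrieved value $\hat{\MVBAInputMsg}_\electionoutput$ from $\DataRetrievalh[\ltuple \IDMVBA, \electionoutput \rtuple]$. Since $\Predicate$ is a deterministic public function, this forces the earliest round at which any honest node passes both the test in Line~\ref{line:MVBAHashABAoutput1} and the test in Line~\ref{line:MVBAHashPredicateCond} to be the same for every honest node, and forces them all to output the very same value at that round; hence $\wv'=\wv''$. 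The common-leader claim is immediate from the Consistency property of $\Election[\ltuple \IDMVBA, \electionround \rtuple]$, and the common-$\ABAoutput$ claim from the Consistency property of the binary agreement $\ABBA[\ltuple \IDMVBA, \electionoutput \rtuple]$, so the whole argument reduces to the second stage.

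The second stage --- consistency of $\DataRetrievalh[\ltuple \IDMVBA, \electionoutput \rtuple]$ in any round where $\ABAoutput = 1$ --- will be the main obstacle. If $\ABAoutput = 1$, then by the Validity property of $\ABBA$ some honest node must have fed $\ABBAoutput = 1$ into $\ABBA$, i.e.\ some honest node output $1$ from $\ABBBA[\ltuple \IDMVBA, \electionoutput \rtuple]$; by the Biased Integrity property of $\ABBBA$ (Algorithm~\ref{algm:ABBBA}), some honest node then must have input $\ReadyRecord[\electionoutput] = 1$ or $\FinishRecord[\electionoutput] = 1$. Tracing the $\ACDh$ code (Algorithm~\ref{algm:ACDh}) backwards from either fact, one obtains that at least $n - 2t \geq t+1$ honest nodes set $\LockRecord[\electionoutput] = 1$, and --- by Lines~\ref{line:ACDhreadyCond}--\ref{line:ACDhready} --- they do so for one and the same commitment $\vectorcommitment^{\star}$, each holding a share that passes $\VCVerify$ at its own index. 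Uniqueness of $\vectorcommitment^{\star}$ is the single place where $n \geq 3t+1$ is used quantitatively: each honest node casts at most one $\VOTE$ for a commitment associated with a given sender, so two distinct commitments both attracting $n-t$ $\VOTE$ messages would require more distinct honest voters than the $n-t$ honest nodes that exist, a contradiction.

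Finally, I would close the second stage by examining $\DataRetrievalh$ directly. The $\geq t+1$ honest nodes above echo their shares under $\vectorcommitment^{\star}$, so every honest node eventually collects $t+1$ $\VCVerify$-valid shares for $\vectorcommitment^{\star}$; and since the adversary controls only $t$ nodes, no honest node can ever collect $t+1$ $\VCVerify$-valid shares for any other commitment. Hence every honest node decodes from shares belonging to $\vectorcommitment^{\star}$ and then performs the re-encoding check against $\vectorcommitment^{\star}$. By the binding property of the Merkle-tree vector commitment (under collision resistance of the hash), this check returns the unique message whose encoding commits to $\vectorcommitment^{\star}$ at every honest node if the committed vector is a genuine codeword, and returns $\bot$ at every honest node otherwise; in both cases $\DataRetrievalh[\ltuple \IDMVBA, \electionoutput \rtuple]$ delivers a value common to all honest nodes, which is exactly what the first stage needs to conclude $\wv'=\wv''$.
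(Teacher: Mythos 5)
Your proposal is correct and follows essentially the same route as the paper: reduce agreement to (i) round-consistency via the Consistency of $\Election$ and $\ABBA$ and (ii) consistency of $\DataRetrievalh$ whenever $\ABBA$ outputs $1$, the latter proved by tracing $\ABAoutput=1$ back through $\ABBA$ Validity and $\ABBBA$ Biased Integrity to an $n-2t$ quorum of honest nodes locked on one commitment (the paper's Lemma~\ref{lm:OciorMVBAHashDRconsistency}). Your write-up actually supplies two steps the paper leaves implicit --- the quorum-intersection argument for uniqueness of the locked commitment $\vectorcommitment^{\star}$ and the Merkle-binding/re-encoding case analysis showing all honest nodes return the same value or all return $\bot$ --- but these are elaborations of the same argument, not a different one.
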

\begin{proof}
In $\OciorMVBAHash$, if any two honest nodes output values at Rounds  $\electionround$ and $\electionround'$ (see Line~\ref{line:MVBAHashABAround}  of Algorithm~\ref{algm:OciorMVBAh}), respectively, then $\electionround=\electionround'$. This follows from  the consistency property of the protocols  $\Election[\ltuple \IDMVBA,  \electionround \rtuple]$ and $\ABBA[\ltuple \IDMVBA,   \electionoutput\rtuple]$,  as well as the   consistency property of the   $\DataRetrievalh[\ltuple \IDMVBA,  \electionoutput \rtuple]$ protocol when $\ABBA[\ltuple \IDMVBA,   \electionoutput\rtuple]$ outputs $1$ (see Lemma~\ref{lm:OciorMVBAHashDRconsistency}). 

Moreover,  at  the same round $\electionround$,  if any two honest nodes output $\wv'$ and $\wv''$, respectively, then  $\wv'=\wv''$, due to the consistency property of the protocol   $\DataRetrievalh[\ltuple \IDMVBA,  \electionoutput \rtuple]$ when $\ABBA[\ltuple \IDMVBA,   \electionoutput\rtuple]$ outputs $1$  (see Lemma~\ref{lm:OciorMVBAHashDRconsistency}).  It is worth noting that  $\DataRetrievalh[\ltuple \IDMVBA,  \electionoutput \rtuple]$  is activated only if  $\ABBA[\ltuple \IDMVBA,   \electionoutput\rtuple]$ outputs $1$  (see Line~\ref{line:MVBAHashABAoutput1}  of Algorithm~\ref{algm:OciorMVBAh}). 
\end{proof}

\begin{theorem}  [Termination]  \label{thm:OciorMVBAHashTermination}
In $\OciorMVBAHash$,  given $n\geq 3t+1$, every honest node eventually outputs a value and terminates.        
\end{theorem}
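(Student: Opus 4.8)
The plan is to reduce Termination to three ingredients: (i) the Termination and Integrity guarantees of the parallel dispersal $\ACDh[\IDMVBA]$; (ii) per-round liveness, i.e.\ that in every iteration of the for-loop (Line~\ref{line:MVBAHashABAround}) every honest node either outputs and halts or falls through to the next round; and (iii) a one-shot probabilistic argument showing that a ``good'' leader is elected within an expected constant number of rounds, in which round every honest node outputs and terminates.

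First I would invoke Termination of $\ACDh[\IDMVBA]$: every honest node eventually returns $[\LockRecord,\ReadyRecord,\FinishRecord,\ShareRecord]$ and hence enters the for-loop. By Integrity there is a set $\Ic^{\star}\subseteq[1:n]\setminus\Fc$ with $|\Ic^{\star}|\geq n-2t$, each $\Node_i$ with $i\in\Ic^{\star}$ having completed its dispersal; since $n\geq 3t+1$, $|\Ic^{\star}|\geq n-2t\geq t+1$. Next, for per-round liveness I would show that $\ABBBA[\ltuple\IDMVBA,\electionoutput\rtuple]$ always terminates by checking its conditional-termination precondition through the lock$\to$ready$\to$finish quorum chain of $\ACDh$: if some honest node set $\FinishRecord[\electionoutput]=1$ it collected $n-t$ $(\READY,\IDMVBA,\vectorcommitment)$ messages for a common commitment $\vectorcommitment$ with $\HashRecord[\vectorcommitment]=\electionoutput$, and at least $n-2t\geq t+1$ of those senders are honest and had set $\ReadyRecord[\electionoutput]=1$, so at least $t+1$ honest nodes input $\abbainputA=1$ into $\ABBBA$. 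Then $\ABBA[\ltuple\IDMVBA,\electionoutput\rtuple]$ terminates by standard liveness of asynchronous binary BA under a common coin. If $\ABBA$ outputs $1$ (Line~\ref{line:MVBAHashABAoutput1}), then by validity of binary BA some honest node input $1$, so some honest node output $\ABBAoutput=1$ from $\ABBBA$, so by biased integrity of $\ABBBA$ some honest node had $\ReadyRecord[\electionoutput]=1$ or $\FinishRecord[\electionoutput]=1$; tracing the $\ACDh$ quorums again forces at least $n-2t\geq t+1$ honest nodes to hold $\VCVerify$-valid shares for a common $\vectorcommitment$, all of which echo $(\ECHOSHARE,\IDMVBA,\electionoutput,\vectorcommitment,\cdot,\cdot)$ in $\DataRetrievalh$, so every honest node eventually gathers $t+1$ matching symbols, decodes, and returns. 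Together with the consistency of $\ABBA$ and of $\DataRetrievalh$ when $\ABBA=1$ (Lemma~\ref{lm:OciorMVBAHashDRconsistency}), all honest nodes make the same predicate decision at Line~\ref{line:MVBAHashPredicateCond}, so they all either halt together or advance together.

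For overall termination I would then argue: fix a round $\electionround$ reached by the honest nodes; by the Uniform and Unpredictability properties of the common coin, $\electionoutput$ is uniform on $[1:n]$ and cannot be biased before honest nodes activate $\Election[\ltuple\IDMVBA,\electionround\rtuple]$, so $\Pr(\electionoutput\in\Ic^{\star})\geq (n-2t)/n>1/3$, independently over rounds. When $\electionoutput\in\Ic^{\star}$, $\Node_{\electionoutput}$ completed its dispersal, so it received $n-t$ $(\FINISH,\IDMVBA)$ messages, whence at least $n-2t\geq t+1$ honest nodes have $\FinishRecord[\electionoutput]=1$; by biased validity of $\ABBBA$ every honest node that terminates $\ABBBA$ outputs $1$, so all honest nodes input $1$ into $\ABBA$ and, by validity of binary BA, $\ABBA$ outputs $1$; then $\DataRetrievalh$ returns the message $\MVBAInputMsg_{\electionoutput}$ that honest $\Node_{\electionoutput}$ dispersed (validity/consistency of $\ACD$ together with the $\VCCom$ check), which satisfies $\Predicate(\MVBAInputMsg_{\electionoutput})=\true$ because an honest node only proposes predicate-valid messages, so the test at Line~\ref{line:MVBAHashPredicateCond} passes and every honest node outputs and terminates. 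Hence the number of rounds before termination is stochastically dominated by a geometric variable with success probability exceeding $1/3$, giving termination within $O(1)$ rounds in expectation and almost surely.

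The step I expect to be the real work is the per-round liveness bookkeeping: the quorum-intersection arguments that certify the $\ABBBA$ precondition and that force $t+1$ honest echoers of shares for a \emph{common} vector commitment inside $\DataRetrievalh$, which hinge on the commitment-consistency behaviour of the $\ACDh$ lock/ready/finish steps (each honest node can only register $\HashRecord[\vectorcommitment]=j$ for the node $j$ that actually produced a $\VCVerify$-valid opening at its index, by binding of the Merkle commitment). A secondary subtlety, given the adaptive adversary, is that $\Ic^{\star}$ is guaranteed at the moment $\ACDh$ returns; one must check that a later-elected node in $\Ic^{\star}$ still yields a successful round even if it is corrupted afterwards, since its shares are already disseminated and the message it dispersed is already fixed and predicate-valid.
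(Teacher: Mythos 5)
Your proposal is correct and follows essentially the same route as the paper's proof: invoke the Termination and Integrity properties of $\ACDh[\IDMVBA]$ to obtain the set $\Ic^{\star}$ of completed honest dispersals, show that a round electing some $\electionoutput\in\Ic^{\star}$ forces $\ABBA$ to output $1$ via the biased validity of $\ABBBA$ and then yields at least $t+1$ correct shares in $\DataRetrievalh$ so that every honest node decodes, passes the predicate check, and terminates, while any other round lets all honest nodes advance together. Your write-up is in fact more complete than the paper's, which leaves implicit both the per-round liveness of $\ABBBA$/$\ABBA$ (your quorum-chain verification of the conditional-termination precondition) and the probabilistic argument that a leader in $\Ic^{\star}$ is elected within an expected constant number of rounds.
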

\begin{proof}
In this setting, every honest node eventually returns values and terminates from the protocol $\ACDh[ \IDMVBA ]$, due to the Termination property of this protocol.
Furthermore, by the  Integrity property of  $\ACDh[ \IDMVBA ]$, if an honest node  returns values and terminates from the protocol $\ACDh[ \IDMVBA ]$, then there exists a set $\Ic^{\star}$ such that the following conditions hold: ) $\Ic^{\star}\subseteq [1:n]\setminus \Fc$, where $\Fc$ denotes the set of indexes of all dishonest nodes; 2)  $|\Ic^{\star}| \geq n-2t$; and  3) for any $i\in \Ic^{\star}$,  $\Node_i$ has completed the dispersal $\ACD[\ltuple \IDMVBA, i \rtuple]$.      

Subsequently, every honest node eventually runs  $\electionoutput \gets \Election[\ltuple \IDMVBA,  \electionround \rtuple]$  in Line~\ref{line:MVBAHashElection} of Algorithm~\ref{algm:OciorMVBAh}, at the same round $\electionround$.
If  Node~$\electionoutput$ is honest and  $\electionoutput\in  \Ic^{\star}$, then $\ABBA[\ltuple \IDMVBA,   \electionoutput\rtuple]$ eventually outputs $1$  (due to the biased validity property of  $\ABBBA[\ltuple \IDMVBA,   \electionoutput \rtuple]$) and then during $\DataRetrievalh[\ltuple \IDMVBA,  \electionoutput \rtuple]$ each node will receive at least $t+1$ correct shares	  generated from Node~$\electionoutput$, given  $n\geq 3t+1$. In this case all honest node eventually output the same message  from   $ \DataRetrievalh[\ltuple \IDMVBA,  \electionoutput \rtuple]$, and then terminate. 
If  Node~$\electionoutput$ is dishonest or $\ABBA[\ltuple \IDMVBA,   \electionoutput\rtuple]$ outputs $0$, then all honest nodes proceed to the next round. 
All honest node eventually terminates if, at some round~$\electionround$, $\Election[\ltuple \IDMVBA,  \electionround \rtuple]$  outputs a value $\electionoutput$ such that  Node~$\electionoutput$ is honest and $\electionoutput\in  \Ic^{\star}$. 	
\end{proof}

\begin{theorem}  [External Validity]  \label{thm:OciorMVBAHashExternalvalidity}
In $\OciorMVBAHash$,  if an honest node outputs a value $\MVBAInputMsg$, then $\Predicate(\MVBAInputMsg)=\true$.    
\end{theorem}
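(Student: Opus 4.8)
The plan is to follow verbatim the strategy already used for the External Validity statements of $\OciorRMVBA$ (Theorem~\ref{thm:RMVBAaExternalvalidity}) and $\OciorMVBARelaxedResilienceIT$ (Theorem~\ref{thm:MVBARRExternalvalidity}): identify the unique code path along which an honest node emits an output, and observe that $\Predicate$ is re-evaluated locally on exactly that output immediately before termination.

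Concretely, I would first inspect Algorithm~\ref{algm:OciorMVBAh} and note that the only statement that can cause an honest $\Node_{\thisnodeindex}$ to output a value and terminate lies inside the loop over $\electionround \in [1:n]$, within the branch entered when $\ABAoutput \eqlog 1$ (Line~\ref{line:MVBAHashABAoutput1}), and is itself guarded by the test $\Predicate(\hat{\MVBAInputMsg}_{\electionoutput}) \eqlog \true$ at Line~\ref{line:MVBAHashPredicateCond}. The value emitted there is precisely $\hat{\MVBAInputMsg}_{\electionoutput}$, the value returned by $\DataRetrievalh[\ltuple \IDMVBA, \electionoutput \rtuple]$. Hence, at the moment an honest node outputs $\MVBAInputMsg := \hat{\MVBAInputMsg}_{\electionoutput}$, it has locally evaluated $\Predicate$ on exactly this string and found it $\true$; since $\Predicate$ is a deterministic, publicly known function $\{0,1\}^{*}\to\{\true,\false\}$, the conclusion $\Predicate(\MVBAInputMsg)=\true$ is immediate.

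I expect essentially no obstacle here: no properties of $\ACDh$, $\ABBBA$, $\ABBA$, or $\DataRetrievalh$ are required, because External Validity is enforced purely by the local predicate filter at the output step (unlike Agreement and Termination, handled in Theorems~\ref{thm:OciorMVBAHashagreement} and \ref{thm:OciorMVBAHashTermination}, which do invoke those sub-protocol guarantees). The only point worth a moment's care is to confirm that Algorithm~\ref{algm:OciorMVBAh} contains no alternative return path that could bypass the check — a quick scan of the loop body shows it has only this single $\Output$ statement and that the loop otherwise falls through $\EndFor$ without emitting anything, so the argument is complete.
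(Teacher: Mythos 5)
Your proposal is correct and matches the paper's own one-line argument: the only $\Output$ statement in Algorithm~\ref{algm:OciorMVBAh} is guarded by the check $\Predicate(\hat{\MVBAInputMsg}_\electionoutput) \eqlog \true$ at Line~\ref{line:MVBAHashPredicateCond}, so external validity holds immediately. No further comment is needed.
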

\begin{proof}
 In $\OciorMVBAHash$,  if an honest node outputs a value $\MVBAInputMsg$,  it has verified that $\Predicate(\MVBAInputMsg)=\true$ at Line~\ref{line:MVBAHashPredicateCond}  of Algorithm~\ref{algm:OciorMVBAh}.
\end{proof}

 \begin{lemma}     \label{lm:OciorMVBAHashDRconsistency}
In $\OciorMVBAHash$,  if $\ABBA[\ltuple \IDMVBA,   \electionoutput\rtuple]$ outputs $1$, then  every honest node eventually retrieves the same message from $\DataRetrievalh[\ltuple \IDMVBA,  \electionoutput \rtuple]$, for $\electionoutput\in [1:n]$.   
\end{lemma}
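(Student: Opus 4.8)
The plan is to work backwards through the quorum structure of $\ACDh[\IDMVBA]$, starting from the hypothesis that $\ABBA[\ltuple \IDMVBA, \electionoutput\rtuple]$ outputs $1$, down to a single vector commitment that at least $n-2t$ honest nodes have locked on, and then to show that in $\DataRetrievalh[\ltuple \IDMVBA, \electionoutput \rtuple]$ every honest node decodes from that same commitment and therefore returns the same value. First I would localize the commitment. Since $\DataRetrievalh[\ltuple \IDMVBA, \electionoutput \rtuple]$ is activated only when $\ABBA[\ltuple \IDMVBA, \electionoutput\rtuple]$ outputs $1$, and by the validity of binary $\BA$ such an output forces some honest node to have fed input $1$, that honest node must have obtained $\ABBAoutput=1$ from $\ABBBA[\ltuple \IDMVBA, \electionoutput\rtuple]$; by the biased integrity of $\ABBBA$, some honest node input $\ReadyRecord[\electionoutput]=1$ or $\FinishRecord[\electionoutput]=1$ into $\ABBBA$. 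Each of these indicators is set in $\ACDh$ only after collecting $n-t$ matching $(\LOCK, \IDMVBA, \vectorcommitment)$ messages (directly in the $\ReadyRecord$ case, and via an intermediate layer of $n-t$ $(\READY, \IDMVBA, \vectorcommitment)$ messages in the $\FinishRecord$ case) carrying one common $\vectorcommitment$ with $\HashRecord[\vectorcommitment]=\electionoutput$; hence at least $n-2t$ honest nodes sent $(\LOCK, \IDMVBA, \vectorcommitment)$, and each such node had set $\LockRecord[\electionoutput]=1$ and, because $\HashRecord[\vectorcommitment]=\electionoutput$ is populated only by a $\VCVerify$-passing $(\SHARE,\dots)$ received over the authenticated channel from $\Node_\electionoutput$, it stored a valid share $\ShareRecord[\electionoutput]=\ltuple \vectorcommitment, \EncodedSymbol, \proofpositionvc\rtuple$ consistent with $\vectorcommitment$.

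The key sub-claim is that $\vectorcommitment$ is the \emph{unique} commitment that any honest node can associate with the sender slot $\electionoutput$. An honest node votes for a commitment only after verifying a share under it, never processes more than one $(\SHARE,\dots)$ per sender, and two distinct commitments each gathering $n-t$ votes would require two disjoint sets of at least $n-2t$ honest voters, which is impossible when $n\geq 3t+1$ because $2(n-2t)>n-t$. Consequently no honest node can set $\LockRecord[\electionoutput]=1$ on any commitment other than $\vectorcommitment$, so the $\geq n-2t\geq t+1$ honest nodes identified above all hold the same $\vectorcommitment$. When they are later activated in $\DataRetrievalh[\ltuple \IDMVBA, \electionoutput \rtuple]$ with $(\LockRecord[\electionoutput],\ShareRecord[\electionoutput])=(1,\ltuple \vectorcommitment, \EncodedSymbol, \proofpositionvc\rtuple)$, they each broadcast $(\ECHOSHARE, \IDMVBA, \electionoutput, \vectorcommitment, \EncodedSymbol, \proofpositionvc)$, so every honest node eventually receives at least $t+1$ such messages, all passing $\VCVerify$ under $\vectorcommitment$, reaches $|\CodedSymbols[\electionoutput][\vectorcommitment]|=t+1$, and decodes $\hat{\MVBAInputMsg}\gets \ECDec(n, t+1, \CodedSymbols[\electionoutput][\vectorcommitment])$; no bucket for a different commitment can ever reach size $t+1$ at an honest node, since that would require at least one honest echoer for that commitment, contradicting uniqueness.

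Finally I would invoke the binding property of the Merkle-tree vector commitment (collision resistance of the underlying hash): every symbol that enters $\CodedSymbols[\electionoutput][\vectorcommitment]$ equals the genuine position-$j$ entry of the one vector committed by $\vectorcommitment$, whether the $(\ECHOSHARE,\dots)$ originated from an honest or a corrupted node. Thus all honest nodes decode from $t+1$ positions of the same vector. If that vector is a valid $(n,t+1)$ erasure codeword for a message $\wv^{\star}$, then $\ECDec$ returns $\wv^{\star}$ at every honest node, the check $\VCCom(\ECEnc(n, t+1, \wv^{\star}))=\vectorcommitment$ passes, and all return $\wv^{\star}$; otherwise the re-encoding check fails at every honest node and all return $\bot$. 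Either way every honest node returns the same value, as claimed. The main obstacle is the uniqueness-of-commitment argument in the second paragraph, which is exactly where the resilience $n\geq 3t+1$ is used (through $2(n-2t)>n-t$ together with channel authentication and the ``first valid share per sender'' rule); the binding-based step that makes $\ECDec$ deterministic on mixed honest/adversarial buckets is the other point requiring care, while the propagation of the $\LockRecord/\ReadyRecord/\FinishRecord$ indicators and the counting of echoes are routine.
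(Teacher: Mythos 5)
Your proposal is correct and follows essentially the same route as the paper's proof: from $\ABBA$ outputting $1$, via the biased integrity of $\ABBBA$, to at least one honest node having set $\ReadyRecord[\electionoutput]$ or $\FinishRecord[\electionoutput]$, hence to at least $n-2t\geq t+1$ honest nodes holding locked, $\VCVerify$-checked shares under a single commitment for slot $\electionoutput$, from which every honest node decodes consistently in $\DataRetrievalh$. The only difference is one of detail: you explicitly supply the quorum-intersection argument for uniqueness of the commitment and the binding/re-encoding case analysis (all output $\wv^{\star}$ or all output $\bot$), both of which the paper's one-paragraph proof asserts implicitly with ``which implies that every honest node eventually retrieves the same message.''
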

\begin{proof}
In $\OciorMVBAHash$,  $\DataRetrievalh[\ltuple \IDMVBA,  \electionoutput \rtuple]$  is activated only if  $\ABAoutput=1$ (see Line~\ref{line:MVBAHashABAoutput1}  of Algorithm~\ref{algm:OciorMVBAh}), where    $\ABAoutput$ is the output of $\ABBA[\ltuple \IDMVBA,  \electionoutput\rtuple]$.   
 The instance of $\ABAoutput=1$ reveals that at least one honest node outputs $\ABBAoutput=1$ from  $\ABBBA[\ltuple \IDMVBA,   \electionoutput \rtuple]$, which further suggests that at least one honest node inputs    $\ReadyRecord[\electionoutput]=1$ or $\FinishRecord[\electionoutput] =1$ into $\ABBBA[\ltuple \IDMVBA,   \electionoutput \rtuple]$,  based on the biased integrity of $\ABBBA$.    
  When one honest node inputs    $\ReadyRecord[\electionoutput]=1$ or $\FinishRecord[\electionoutput] =1$, it is guaranteed that at least $n-2t$ honest nodes have stored correct shares sent from Node~$\electionoutput$ (see Lines~\ref{line:ACDhreadyCond} and \ref{line:ACDhready} of Algorithm~\ref{algm:ACDh}), which implies that every honest node eventually retrieves the same message from $\DataRetrievalh[\ltuple \IDMVBA,  \electionoutput \rtuple]$.  
  \end{proof}



\end{document}